\theoremstyle{plain}
\newtheorem{theorem}{Theorem}[section]
\newtheorem{definition}[theorem]{Definition}
\newtheorem{lemma}[theorem]{Lemma}
\newtheorem{proposition}[theorem]{Proposition}
\theoremstyle{remark}
\newtheorem{remark}[theorem]{Remark}
\def\C{{\mathbb C}}
\def\Z{{\mathbb Z}}
\def\N{{\mathbb N}}
\def\R{{\mathbb R}}
\def\CC{{\mathcal C}}
\def\LL{{\mathcal L}}
\def\e{{\rm e}}
\def\tr{{\rm tr}\,}
\def\Id{{\rm Id}}
\def\eps{\varepsilon}
\def\sgn{{\rm sgn}}
\def\op_#1{\mathrel{\mathop{{\rm op}_{#1}}}}
\def\build#1_#2^#3{\mathrel{
\mathop{\kern 0pt#1}\limits_{#2}^{#3}}}
\def\td_#1,#2{\mathrel{
\mathop{\build\longrightarrow_{#1\rightarrow #2}^{}}}}
\def\lim_#1,#2{\mathrel{
\mathop{\build{\rm lim}_{#1\rightarrow#2}^{}}}}
\def\limsup_#1,#2{\mathrel{
\mathop{\build{\rm limsup}_{#1\rightarrow#2}^{}}}}
\def\liminf_#1,#2{\mathrel{
\mathop{\build{\rm liminf}_{#1\rightarrow#2}^{}}}}
\def\aref#1{(\ref{#1})}
\def\eps{\varepsilon}
\def\1{{\bf 1}}
\def\0{{\bf 0}}
\def\op{{\rm op}}
\def\Ran{{\rm Ran}}
\def\supp{{\rm supp}}
\def\ol{\overline}
\def\d{{\rm d}}
\title{An Egorov Theorem for  avoided crossings of eigenvalue surfaces}
\author[C. Fermanian]{Clotilde Fermanian Kammerer}
\address{Universit\'e Paris Est, UMR 8050 du CNRS, France}
\email{clotilde.fermanian@univ-paris12.fr}
\author[C. Lasser]{Caroline Lasser}
\thanks{C. Lasser acknowledges support by the German Research Foundation
(DFG), Collaborative Research Center SFB-TR 109.}
\address{Zentrum Mathematik, Technische Universit\"at M\"unchen, 80290 M\"unchen, Germany}
\email{classer@ma.tum.de}
\date{\today}
\keywords{Time-dependent Schr\"odinger system, eigenvalue crossing, avoided
crossings, Egorov theorem,  microlocal normal form, surface hopping.}
\begin{document}

\begin{abstract}
We study nuclear propagation through avoided crossings of
electron energy levels.
We construct a surface hopping semigroup, which gives an Egorov-type description of the dynamics. The underlying time-dependent
Schr\"odinger equation  has a two-by-two
matrix-valued potential, whose eigenvalue surfaces have an
avoided crossing. Using microlocal
normal forms reminiscent of the Landau-Zener problem, we prove
convergence to the true solution in the semi-classical limit.  
\end{abstract}

\maketitle


\section{Introduction}\label{sec:intro}

We consider a system of Schr\"odinger equations 
\begin{equation}\label{eq:schro}
 \left\{\begin{array}{l}
i\eps\partial_t\psi^\eps_t = -\frac{\eps^2}{2} \Delta_q\psi^\eps_t +V(q)\psi^\eps_t,\;\;(t,q)\in\R\times\R^d\\
\psi^\eps_{t=0}=\psi^\eps_0,
\end{array}
\right.
\end{equation}
where $V$ is a smooth function on $\R^d$, whose values are $2\times2$ real symmetric matrices, 
$$
V(q) = \alpha(q){\rm Id} + \begin{pmatrix}\beta(q)&\gamma(q)\\ \gamma(q) & -\beta(q)\end{pmatrix},\qquad q\in\R^d.
$$
The smooth functions $\alpha,\beta,\gamma\in{\mathcal C}^\infty(\R^d,\R)$ are of subquadratric growth such that the Schr\"odinger operator in equation (\ref{eq:schro}) is essentially self-adjoint, and there exists a unique solution for all times $t\in\R$.  We aim at the semi-classical limit $\eps\to0$ in situations, where the two eigenvalues $\lambda^+(q)\ge \lambda^-(q)$ of the potential matrix~$V(q)$ get close to each other and  non-adiabatic transitions occur to leading order in $\eps$. Our method of proof allows for rather general initial data $(\psi^\eps_0)_{\eps>0}$ that are uniformly bounded in $L^2(\R^d,\C^2)$.

\medskip
Schr\"odinger systems with matrix-valued potentials can be rigorously derived in the context of the Born--Oppenheimer approximation of molecular quantum dynamics, and we refer to~\cite{ST} and~\cite{MS} where this theory is carefully carried out.  Born--Oppenheimer theory also applies for the present two-level system, provided that the eigenvalues $\lambda^+$ and $\lambda^-$ are uniformly separated, 
that is, if there exists a small gap parameter $\delta_0>0$, independent of the semi-classical parameter $\eps>0$, such that the gap function
\begin{equation}\label{def:g}
g(q)= \lambda^+(q)-\lambda^-(q)
\end{equation}
satisfies $g(q)\ge \delta_0$ for all $q\in\R^d$. In this situation, the eigenspaces are adiabatically decoupled in the following sense: If $\Pi^\pm(q)$ denote the eigenprojectors onto the eigenspaces of $V(q)$, then for initial data with  $\psi^\eps_0=\Pi^+\psi^\eps_0$, one obtains only a small non-adiabatic contribution $\|\Pi^-\psi^\eps(t)\|_{L^2} = O(\eps)$ at time $t$, and the analogous statement holds true if $\psi^\eps_0=\Pi^-\psi^\eps_0$. If the gap-condition is violated because the gap becomes small (with respect to $\eps$) or vanishes, then adiabatic decoupling no longer holds. 
For concrete molecular systems, the semi-classical parameter $\eps$ and the gap parameter $\delta_0$ are given numbers, and an asymptotic analysis just taking into account the smallness of the semi-classical parameter $\eps$ does not provide the necessary information.

\medskip
Especially for Schr\"odinger systems, non-adiabatic transitions have been of interest over decades, 
since they occur in many applications ranging from atmospheric chemistry to photochemistry, see 
the recent perspective article \cite{Tu2}. Typically, non-adiabatic phenomena are attributed to avoided or conical crossings of eigenvalues. Conical crossings occur, when the eigenvalue gap vanishes and the eigenprojectors have a conical singularity at these points. Generic conical crossings have been classified by symmetry, see \cite{Hag1}, and have been analysed by semiclassical wavepackets \cite{Hag1} as well as by pseudodifferential operators \cite{CdV1,FG02,LT}, 
see also the kinetic model for graphene \cite{FM} that encorporates a conical crossing. Here we aim at the analysis of avoided crossings. For them,  the following definition has been proposed \cite{Hag2}:

\begin{definition}\label{def:avoided_hag} 
Suppose $V(q,\delta)$ is a family of real symmetric $2\times 2$ matrices depending smoothly on $q\in\Omega$ and 
$\delta\in I$, $V\in\mathcal C^\infty(\Omega\times I,\R^{2\times 2})$, where $\Omega$ is an open subset of $\R^d$ and $I\subset\R$ some interval containing $0$. Suppose that $V(q,\delta)$ has two eigenvalues $\lambda^+(q,\delta)$ and $\lambda^-(q,\delta)$ that depend continuously on $q$ and $\delta$. Assume that 
$$
\{q\in\Omega,\;\; \lambda^+(q,0)=\lambda^-(q,0)\}
$$ 
is a non-empty submanifold of $\Omega$, such  that $\lambda^+(q,\delta)\neq\lambda^-(q,\delta)$ for all $q\in\Omega$ and $\delta\neq0$. Then we say that $V(q,\delta)$ has an avoided crossing of eigenvalues. 
\end{definition}  

Avoided crossings have a similar symmetry classification \cite{Hag2} as the conical intersections. 
In \cite{HJ1,HJ2,Rou}, the perturbation parameter $\delta$ of Definition~\ref{def:avoided_hag} has been linked with the semiclassical parameter $\eps$, and a leading order analysis of 
$$
i\eps\partial_t\psi^\eps_t = -\tfrac{\eps^2}{2}\Delta_q \psi^\eps_t + V(q,\sqrt\eps)\psi^\eps_t,\qquad 
\psi^\eps_{t=0} = \psi^\eps_0
$$
has been carried out for families of initial data $(\psi^\eps_0)_{\eps>0}$ which are semiclassical wavepackets. 
For Schr\"odinger systems in one space dimension, avoided crossings have also been considered without assuming that the perturbation $\delta$ and the semiclassical parameter $\eps$ are coupled. In this situation, the non-adiabatic contributions are exponentially small with respect to $\eps$. In \cite{HJ3}, it is assumed that the potential $V(q)$ of the Schr\"odinger system~(\ref{eq:schro}) belongs to a family $V(q,\delta)$ with an avoided crossing such that $V(q)=V(q,\delta_0)$ for all $q\in\R$ and some fixed $\delta_0>0$. Then, the scattering wave function of semiclassical wavepackets is determined together with its non-adiabatic contributions. In \cite{BGT,BG}, superadiabatic representations of one-dimensional avoided crossings have been developed together with an explicit heuristic formula for the outgoing nonadiabatic component. Our results interpolate between the existing ones in the following sense. On the one hand, 
we allow for general families $(\psi^\eps_0)_{\eps>0}$ of initial data in $L^2(\R^d,\C^2)$ without restricting to coherent states or a single space dimension. Also, we will not explicitly link the semiclassical parameter $\eps$ and the gap parameter $\delta$. On the other hand, we will only reach for the leading order behaviour with respect to $\eps$.


\subsection{Wigner transforms}  It is impossible to directly study the densities 
$$
n_\pm^\eps(q,t)= | \Pi^\pm(q)\psi^\eps(q,t) |_{\C^2}^2
$$ 
or the  dynamics of the so-called level populations
$$
t\mapsto \int_{\R^d} n_\pm^\eps(t,q) dq
$$ 
for general initial data. Thus, we focus on providing an asymptotic description for the time evolution of the
Wigner transform of~$\psi^\eps(q,t)$ in a suitable
$\eps$-dependent scaling,
$$W^\eps\!\left(\psi^\eps_t\right)\!(q,p)=
(2\pi)^{-d}\int_{\R^d} \psi^\eps\!\left(q-\tfrac{\eps}{2}v,t\right)\otimes \ol \psi^\eps\!\left(q+\tfrac{\eps}{2}v,t\right)\,{\rm e}^{i\,v\cdot p}\,\d v$$
with $(q,p)\in\R^{2d}$. The Wigner transform plays
the role of a generalized probability density on phase space. For
square integrable wave functions $\psi\in L^2(\R^d,\C^2)$, the Wigner function
$W^\eps(\psi)$ is a square integrable function on phase space with
values in the space of Hermitian $2\times 2$ matrices. One recovers for
example the position density by
$$n_\pm^\eps(q,t)=
{\rm tr}\int_{\R^d}
\Pi^\pm(q)W^\eps(\psi^\eps_t)(q,p)\,\d p.
$$
Besides, the action of the Wigner function against compactly
supported smooth test functions
$a\in\CC^\infty_c(\R^{2d},\C^{2\times 2})$ is simply expressed in
terms of the semi-classical pseudodifferential operator of symbol
$a$, which is defined by
$$
\op_\eps(a)\psi(q)=(2\pi\eps)^{-d} \int_{\R^{2d}}
a\left(\tfrac12(q+q'),p\right){\rm
e}^{\tfrac{i}{\eps}p\cdot(q-q')}\psi(q')\,\d q'\,\d q
$$
for $\psi\in L^2(\R^d,\C^2)$. Indeed, we have
$$\int_{\R^{2d}}\,{\rm tr}\left(W^\eps(\psi)(q,p)a(q,p)\right)\,\d
q\,\d
p=\left(\op_\eps(a)\psi\,,\,\psi\right)_{L^2(\R^d,\C^{2})}.
$$
The Wigner transform is perfectly suited for the analysis of quadratic functions of the wave function, which do not require all the phase information.

Our aim is the study of the diagonal parts of the Wigner transform
\begin{eqnarray*}
\Pi^\pm(q) W^\eps(\psi^\eps_t)\Pi^\pm(q)=w^\pm_\eps(t)\Pi^\pm(q),\\
w_\pm^\eps(t)={\rm tr} \left(\Pi^\pm(q) W^\eps(\psi^\eps_t)\Pi^\pm(q)\right),
\end{eqnarray*}
and to describe the evolution of the coefficients $w_\pm^\eps(t)$ in terms of $w_+^\eps(0)$  and $w_-^\eps(0)$ as $\eps\to0$. The oscillatory dynamics of the off-diagonal part of the Wigner function implies that it can be neglected far from the crossing set (see Remark~\ref{rem:appendix} in the Appendix). However, these effects could  restrict our results, see the comments after our main Theorem~\ref{theorem} and the corresponding numerical experiment in \S\ref{sec:dual}.


\subsection{Egorov's theorem}

We consider the classical flow
$$
\Phi^t_\pm: \R^{2d}\to\R^{2d}\,,\quad
\Phi^t_\pm(q_0,p_0)=\left(q^\pm(t),p^\pm(t)\right)
$$
associated with the Hamiltonian curves of $\Lambda^\pm(q,p)=\tfrac{|p|^2}{2}+\lambda^\pm(q)$. These curves   are solutions to the
Hamiltonian systems
\begin{equation}\label{eq:clastraj}
\left\{\begin{array}{l} \dot q^\pm(t)=p^\pm(t),\;\;\dot
p^\pm(t)=-\nabla \lambda^\pm\left(q^\pm(t)\right),\\
q^\pm(0)=q_0,\;\;p^\pm(0)=p_0 \end{array}\right.
\end{equation}
which can be solved for all $t\in\R$, since the maps $q\mapsto \lambda^\pm(q)$ are smooth for eigenvalues, which do not intersect each other.

\medskip
If the eigenvalues are uniformly separated from each other, 
then the classical flows~$\Phi^t_\pm$ are enough for an
approximate description of the dynamics up to an error of order
$\eps$. Indeed, 
the action of the diagonal part of the Wigner transform on scalar test functions~$a\in{\mathcal C}^\infty_c(\R^{2d+1},\C)$ obeys
\begin{equation}
\label{eq:ct} \int_{\R^{2d+1}}
\left(w^\eps_\pm(t)- w^\eps_\pm(0)
\circ\Phi_\pm^{-t}\right)\!(q,p)\,a(t,q,p)\,\d(t,q,p) = O(\eps).
\end{equation}
Such dynamical descriptions in the spirit of Egorov's theorem are
well established, see for example \cite{GMMP}. 

\medskip
If the gap $g(q)=\lambda^+(q)-\lambda^-(q)$ is not uniformly bounded from below and small, but not too small, this description is still valid. More precisely, one proves in \cite{FL08} (see also the proof in Appendix~A) that as long as the trajectories of $\Phi^t_\pm$ which reach the support of the observable $a$ remain in a zone where  $g(q)>R\sqrt\eps = \eps^{3/8}$ for $R=R(\eps) = \eps^{-1/8}$, then
\begin{equation*}
 \int_{\R^{2d+1}}
\left(w_\pm^\eps(t)-
w_\pm^\eps(0)\circ\Phi_\pm^{-t}\right)\!(q,p)\,a(t,q,p)\,\d(t,q,p)=O(\eps^{1/8}) ,
\end{equation*}
where the error estimate just depends on derivatives of the potential matrix $V$ and the symbol $a$, while it is independent of the gap parameter $\delta_0$.

\medskip
However, on regions with smaller eigenvalue gap the approximation of the diagonal Wigner
components $w_\pm^\eps(t)$ by mere classical transport is no longer valid, and there are
non-adiabatic transitions between the levels.  
The components propagated until the  crossing region on one level may pass (partially or
utterly) the other level.

\subsection{Surface hopping}
For a particular isotropic conical crossing \cite{LT} and later for general conical crossings \cite{FL08}, it has been proved
 that the diagonal parts of the Wigner transform can effectively be described by the following random walk construction: We consider a classical trajectory $\Phi_+^t(q,p)$ with associated weight $w_+^\eps(q,p,t)$. If the gap function 
$$
t\mapsto g(q^+(t))
$$ 
attains a local minimum at time $t^*$ for the phase space point $(q^*,p^*)$, such that $g(q^*)\le R\sqrt\eps$, then one opens a new trajectory $\Phi_-^{t-t^*}(q^*, p^*_{out})$ with
$$
 p^*_{out}=p ^*+\omega^*,\;\;\omega^*=g(q^*)\tfrac{p^*}{|p^*|^2}.
$$
The two trajectories $\Phi_+^{t-t^*}(q^*, p^*)$ and   $\Phi_-^{t-t^*}(q^*, p^*_{out})$ are equipped  with the weights
\begin{eqnarray*}
w^+_\eps(q^*,p^*, (t^*)^+) & = & \left(1-T_{\eps}(q^*,p^*)\right)w^+_\eps(q^*,p^*,(t^*)^-),\\
w^-_\eps(q^*,p^*_{out}, (t^*)^+)& = & T_{\eps}(q^*,p^*)w^+_\eps(q^*,p^*,(t^*)^-),
\end{eqnarray*}
respectively. The transition probability is given by the Landau--Zener formula
\begin{equation}\label{eq:transfert}
T_{\eps}(q^*,p^*)=\exp\!\left(-\frac{\pi}{4\, \eps} \frac{g(q^*)^2}{|{\rm det}(p^*\cdot\nabla_qV_0(q^*))|^{1/2}}\right),
\end{equation}
where $V_0(q^*)$ denotes the trace-free part of the potential matrix $V(q^*)$. The analogous construction applies to the classical trajectories entering the region of small gap on the other eigenvalue surface.

\medskip
This combination of classical transport and Landau--Zener transitions yields an easy algorithm for the numerical simulation of non-adiabatic quantum dynamics, see \cite{FL12} and its applications to a three-dimensional model of the pyrazine molecule~\cite{LS} and the twelve-dimensional ammonia cation \cite{BDLT}. Its striking properties are, that only classical trajectories, local gap minima along classical trajectories and the Landau--Zener formula~\aref{eq:transfert} are required. 
Many other surface hopping
algorithms exist in the chemical literature starting with the pioneering work of Tully and Preston~\cite{TP}, and it is worth mentioning that they are equally applied for systems with avoided or conical eigenvalue crossings. In high dimensions, surface hopping algorithms are computationally much less demanding than the discretization of the full wave function and thus often a popular choice. Despite the intense research activity in chemical physics on these algorithms, 
there are very few mathematical results on their justification.

\subsection{Aim and organisation of the paper.}
We are interested in extending the Landau--Zener random walk  through conical crossings~\cite{LT,FL08} to the case of avoided crossings, thus obtaining a unified treatment for conical intersections and avoided crossings, regardless of the respective sizes of the gap and the semi-classical parameter. As far as we know, 
this unified treatment of both crossing types and its rigorous mathematical analysis is new, see also \cite{FL12} for comments on the subject. Following~\cite{Hag2},  we assume that the potential matrix $V$ presents an 
avoided crossing in the sense of the definition below.
 
\begin{definition}\label{def:avoided} 
 Let the potential
 \[
 V(q) = \alpha(q)\Id + \begin{pmatrix}\beta(q) & \gamma(q)\\ \gamma(q) & -\beta(q)\end{pmatrix},\qquad q\in\R^d,
 \]
 be a smooth function on $\R^d$, whose values are $2\times 2$ real symmetric matrices. Denote by $\lambda^\pm(q)$ and $g(q) = \lambda^+(q)-\lambda^-(q)$ the eigenvalues and the gap function of $V(q)$ and by $V_0(q)$ its trace-free part, that is, 
\begin{equation}\label{def:V0}
V_0(q) = \begin{pmatrix}\beta(q) & \gamma(q)\\ \gamma(q) & -\beta(q)\end{pmatrix},\qquad q\in\R^d.
\end{equation} 
We say that $V$ has an {\em avoided gap} in an open subset $\Omega\subset\R^d$ if it satisfies the following conditions :
\begin{enumerate}
\item
There exists $\delta_0>0$, which is the minimum of $g$ in $\Omega$, 
\item Let 
$$
S_0=\{q\in\R^d,\;\; g(q)=\delta_0\}. 
$$
The set $S_0\cap\Omega$ is a hypersurface.
\item 
There exists a system of local coordinates $(y_1,y')$ with $S_0\cap\Omega=\{y_1=0\}\cap\Omega$ and
$$V_0(y)=y_1V_1(y)+V_2(y')$$
where $V_1(y)$ is an invertible matrix, while $V_1(y)$ and $V_2(y)$ are linearly independent for all $y\in\Omega$.
\end{enumerate}
 \end{definition}
 
An illustrative easy example for an avoided crossing in the sense of Definition~\ref{def:avoided} is provided by
\begin{equation}\label{eq:simple_example}
V(q) = \begin{pmatrix}q_1 & \delta_0\\ \delta_0 & -q_1\end{pmatrix},\qquad q\in\R^d,\quad \delta_0>0.
\end{equation}
Here, the eigenvalues 
\[
\lambda^\pm(q) = \pm\sqrt{q_1^2+\delta_0^2}
\] 
have a global minimal gap of size $\delta_0$ at the hyperplane $S_0 = \{q\in\R^d, q_1= 0\}$, and the potential matrix can be written as $V(q) = q_1 V_1 + V_2$ with
\[
V_1 = \begin{pmatrix}1 & 0 \\ 0 & -1\end{pmatrix},\qquad V_2 = \begin{pmatrix}0 & \delta_0\\ \delta_0 & 0\end{pmatrix}.
\] 
On the contrary, the matrix 
$$V(q) = \sqrt {q_1^2+\delta_0^2}\,  \begin{pmatrix}1 & 0 \\ 0 & -1\end{pmatrix}$$
not satisfy the assumptions of Definition~\ref{def:avoided}.

\begin{remark}
We note that if one can write the minimal gap set in local coordinates as $S_0\cap\Omega = \{y_1=0\}$ and $V_0(y) = y_1 V_1(y) + V_2(y')$ with $V_1(y)$ an invertible matrix, 
then $V_1(y)$ and $V_2(y')$ are necessarily linearly independent, in the sense, that there exists no smooth function $f:\Omega\to\R$ with $V_1(y) = f(y) V_2(y')$ for all $y\in\Omega$.
\end{remark}

\begin{remark}
The avoided crossing of Definition~\ref{def:avoided} is associated with a minimal gap manifold $S_0\subset\R^d$ of codimension one. 
In the symmetry classification of avoided crossings given in \cite{Hag2}, higher codimensions also occur. 
We expect that our analysis of the codimension one case can be generalized, see also Remark~\ref{rem:generalize}.
\end{remark}

If potential $V$ has an 
avoided gap with minimal gap size~$\delta_0>0$ (in the sense of Definition~\ref{def:avoided}), then one can construct  a family $V(q,\delta)$ with an avoided crossing of eigenvalues in the sense of Definition~\ref{def:avoided_hag} with two crucial properties. First, we have
$$
V(q) = V(q,\delta_0),\qquad q\in\Omega,
$$ 
and second the mapping $(q,\delta)\mapsto V(q,\delta)$ has a conical intersection of its eigenvalues for $(q,\delta)\in S_0\times\{\delta=0\}$, 
see Theorem~\ref{prop:parametrization} below. Considering momenta $p\in\R^d$ which are transverse to the hypersurface~$S_0$  ensures that the crossing is generic in $T^*(\R^d_q\times\R_\delta)$ in the sense of \cite{CdV1} and \cite{FG03}. This link between avoided and conical crossings -- which is already indicated in Colin de Verdi\`ere's paper \cite[\S2.8]{CdV1} -- allows us to redevelop the proof strategy of \cite{FL08}.
A crucial step in our new proof is an elementary normal form construction inspired by \cite{FG02} that is explicit enough to keep $\delta$ as a controlled parameter.

\medskip
We start in Section~\ref{sec:result} by providing the precise mathematical statement of our result. In 
Section~\ref{sec:strategy} we prove the relation between avoided and conical crossings. In Section~\ref{sec:reduction} 
we perform an elementary reduction to a Landau--Zener model parametrized by the gap parameter $\delta$. Both these new results are 
crucial for the proof of the surface hopping approximation in Section~\ref{sec:mainproof}. We then describe the 
associated surface hopping algorithm in Section~\ref{sec:numerics} and present numerical experiments for Tully's well-known avoided crossing models~\cite{Tu1}. 
The Appendix~\ref{proof:prop} presents the proof of classical transport in the zone of large gap.


\section{An Theorem}\label{sec:result}

We now give a precise statement for the Egorov type description of the propagation of the diagonal part of the Wigner transform for systems with an avoided eigenvalue crossing in the sense of Definition~\ref{def:avoided}. We consider the classical trajectories $(q^\pm(t),p^\pm(t))$ of the Hamiltonian systems \aref{eq:clastraj} and monitor 
the phase space points, where the classical trajectories attain a {\em local
minimal gap between the two eigenvalues.} At such points we have
\begin{equation}\label{jumpcond}
\frac{\d}{\d t}\left(g(q^\pm(t))
\right)=p^\pm(t)\cdot \nabla_q g(q^\pm(t))=0.
\end{equation}
Therefore, one performs an effective
non-adiabatic transfer of weight, whenever a trajectory passes the
set
$$
\Sigma_{\eps} = \left\{(q,p)\in\R^{2d}\mid 
\;g(q)\le R\sqrt\eps
 ,\;\;p\cdot\nabla g(q)=0\right\}
$$
where $R=R(\eps) = \eps^{-1/8}\gg1$. This choice of $R$ is motivated from the analysis in~\cite{FL08} and ensures that the largest occuring  error terms 
$R^3\sqrt\eps$ and $R^{-5}\eps^{-1/2}$ are of the same order $\eta_\eps=\eps^{1/8}$.

\subsection{The random trajectories}
We attach the labels~$j=-1$ and~$j=+1$ to the phase space $\R^{2d}$
and consider the random  trajectories
$$
{\mathcal T}_{\eps}^{(q,p,j)}:[0,+\infty)\rightarrow
\R^{2d}\times\{-1,+1\},
$$
with
$$
{\mathcal T}_{\eps}^{(q,p,j)}(t)=\left(\Phi^t_j(q,p),j\right)\quad\mbox{as long as} \quad\Phi^t_{j}(q,p)\not\in \Sigma_{\eps}.$$ 
Whenever the deterministic flow $\Phi^t_j(q,p)$ hits the set
$\Sigma_{\eps}$ at a point $(q^*,p^*)$, a jump 
$$
(q^*,p^*,j) \to (q^*,p^*+j\,\omega^*,-j)
$$ 
occurs with the transition probability $T_{\eps}(q^*,p^*)$  defined in~\aref{eq:transfert}. The drift
 \begin{equation}\label{def:drift}
 \omega^* = \omega^*(q^*,p^*) = \frac{g(q^*)}{|p^*|^2}\, p^*,
 \end{equation}
is applied to preserve the energy of the trajectories
$$
\Lambda^\pm(q,p) = \tfrac12|p|^2 +  \lambda^\pm(q) = \tfrac12|p|^2 + \alpha(q) \pm\tfrac12 g(q)
$$
up to order $R^2\eps$.
Indeed, let us suppose that the incoming trajectory is on the plus mode. Then, one chooses the momentum $p^*_{out}=p^*+ \omega^*$ of the trajectory generated on the minus mode such that its energy $\Lambda^-(q^*,p^*_{out})$ satisfies
$$\Lambda^-(q^*,p^*_{out})=\Lambda^+(q^*,p^*)+O(R^2\eps).$$ Since 
$\Lambda^-(q^*,p^*_{out})=
 \tfrac{1}{2} |p^*+ \omega^*|^2 +\alpha(q^*)-\tfrac{1}{2}g(q^*)$, 
it is enough to choose $\omega^*$ such that 
$ \omega^*\cdot p^* = g(q^*),
$
whence~\aref{def:drift}. 

\begin{remark}\label{rem:energy}
Let us comment about various aspects of the drift. The drift $p^*\pm\omega^*$ will be crucial later on for localizing the solution at a distance of size~$R\sqrt\eps$ to the energy surfaces $\{\tau+\Lambda^\pm(q,p)=0\}$. 
The drift is performed in the momentum coordinates, since the difference of the two Hamiltonian vector fields 
\[
H_{\lambda^\pm}(q,p) = (p, -\nabla\lambda^\pm(q))
\] 
vanishes identically in the position coordinates. Moreover, after a change of space-time coordinates, see Section~\ref{subsubsec:drift}, the drift is exact when performed in this direction. A similar drift has been used in \cite{HJ1} and~\cite{HJ2} for analysing wave packet propagation through avoided crossings.

The transversality condition that will be stated in assumption~(A0) of our main Theorem~\ref{theorem} excludes trajectories 
with small momenta at the jump manifold. Therefore, 
$\omega^*$ is of the order of the gap size and thus bounded by $R\sqrt\eps$.
Outside the jump manifold $\Sigma_\eps$ the gap and thus the drift are large. However, the Landau--Zener transition rates are exponentially small there. 
Consequently, in this regime the drift would be harmless (if performed).
\end{remark}

\subsection{The semigroup} Within a
bounded time-interval $[0,T]$, each path
$$
(q,p,j)\to {\mathcal T}_{\eps}^{(q,p,j)}(t)
$$
only has a finite number of jumps, remains in bounded regions of
$\R^{2d}\times\{-1,+1\}$, and is smooth away from the jump
manifold $\Sigma_{\eps}\times\{-1,+1\}$. Hence, the random
trajectories define a Markov process
$$
\left\{ X^{(q,p,j)} \mid (q,p,j)\in\R^{2d}\times\{-1,+1\}\right\}.
$$
The associated transition function $P(p,q,j;t,\Gamma)$ describes
the probability of being at time $t$ in the set $\Gamma\subset
\R^{2d}\times \{-1,+1\}$ having started in $(q,p,j)$. Its action
on the set 
$$
{\mathcal B} = \left\{ f:\R^{2d}\times\{-1,+1\}\to\C \mid f \;\mbox{is measurable, bounded}\right\}
$$ 
defines a semigroup
$(\LL_{\eps}^t)_{t\ge0}$ by
$$
{\mathcal L}_{\eps}^t\,f(q,p,j) =
\int_{\R^{2d}\times\{-1,+1\}} f(x,\xi,k)\,P(q,p,j;t,\d(x,\xi,k)).
$$

\begin{remark}
We associate with $f\in{\mathcal B}$ two functions $f_\pm:\R^{2d}\to\C$ via
\begin{equation}\label{fpm}
f_\pm(q,p)=f(q,p,\pm 1).
\end{equation}
Reversely,  relation~\aref{fpm} implies that two bounded measurable functions $f_\pm$ on $\R^{2d}$ define a function $f\in{\mathcal B}$. We shall use this identification all over the paper. 
\end{remark}

We now define the action of the semigroup on Wigner functions
by duality. More precisely, let $\psi\in L^2(\R^d,\C^2)$ and $W^\eps(\psi)$ be its Wigner transform.  Denote by
$$ 
w^\eps_\pm(\psi)(q,p) =  {\rm tr }\left(\Pi^\pm(q)W^\eps(\psi)(q,p)\right)
$$ 
the diagonal components of $W^\eps(\psi)$ and define $w^\eps(\psi)\in{\mathcal B}$ according to relation \aref{fpm}.
For $a\in{\mathcal B}$ such that $a_+$ and $a_-$ have compact support, we set
$$
\left( w^\eps(\psi),a\right) =\int_{\R^{2d}}w^\eps_+(\psi)(q,p) \,a_+(q,p)\, \d(q,p) + \int_{\R^{2d}} w^\eps_-(\psi)(q,p)\, a_-(q,p)\, \d(q,p)
$$
and define ${\mathcal L}_{\eps}^t w^\eps(\psi)\in{\mathcal B}$ by
$$
\left({\mathcal L}_{\eps}^t w^\eps(\psi), a\right) =\left(w^\eps(\psi),{\mathcal L}_{\eps}^t a\right).
$$

\subsection{The result}\label{subsec:result}
Let $V$ be a potential matrix presenting an avoided crossing of eigenvalues in the sense of Definition~\ref{def:avoided}, the notations of which we shall use in the following. 
The semi-group $({\mathcal L^t_\eps})_{t>0}$ approximates the non-adiabatic dynamics generated by this avoided crossing in $\Omega\subset\R^d$, if we assume the following: 

\subsubsection{Initial data (A0)}
The initial data $(\psi^\eps_0)_{\eps>0}$ is a bounded family in
$L^2\!\left(\R^d,\C^2\right)$ associated either with $\Ran\Pi^+$
or $\Ran\Pi^-$, meaning that either
$$
\left|\left|\Pi^-\psi^\eps_0\right|\right|_{L^2(\R^d,\C^2)}=
O(\eps^{\beta_1}),\qquad \beta_1\ge 1/32,
$$
or the analogous condition on $\Pi^+\psi^\eps_0$ holds. We suppose
that the initial data are localized away from $S_0\cap\Omega$, 
 that is, there is some $C_0>0$ such that
$$
\int_{\{\d(q,S_0)<C_0\}} | W^\eps(\psi^\eps_0)(q,p)|\, \d q\, \d p =
O(\eps^{\beta_2}),\qquad \beta_2\ge1/32.
$$
The classical trajectories issued from the support of $W^\eps(\psi^\eps_0)$ reach their minimal gap points in $S_0\cap\Omega$. Moreover, 
the initial data is localized away from the set 
$$
\left\{(q_0,p_0)\in\R^{2d}\mid \exists t>0: q^\pm(t)\in S_0,
\;  p^\pm(t) \in T_{q^\pm(t)}S_0\right\},
$$
which contains those points, which are transported to the minimal gap manifold $S_0$ and have gained momenta which are not transverse to $S_0$.

\subsubsection{Observables (A1)}
The observable $a\in{\mathcal B}$ satisfies $a_\pm\in{\mathcal C}_c^\infty\left(\R^{2d},\C\right)$ and has its support at a distance larger than $\eps^{\beta_3}$
from $S_0$, i.\ e.\
$$
\d({\rm supp}_{(q,p)}(a_\pm),S_0)\gg \eps^{\beta_3},\qquad\beta_3\ge 1/32.
$$

\subsubsection{Time-interval (A2)}
Let $T>0$. Within the time-interval $[0,T]$, the classical trajectories issued from the support of $W^\eps(\psi^\eps_0)$ reach their minimal gap points only once.

\medskip
These assumptions on the initial data, the observables, and the time interval allow us to effectively describe the dynamics through an avoided crossing by surface hopping.

\begin{theorem}\label{theorem}
Let $\eps>0$ and $\psi^\eps$ be the solution of the Schr\"odinger equation
$$
i\eps\partial_t\psi^\eps_t = -\tfrac{\eps^2}{2} \Delta_q\psi^\eps_t +V(q)\psi^\eps_t,\qquad \psi^\eps_{t=0}=\psi^\eps_0,
$$
where the potential $V$ has an 
avoided crossing in the sense of Definition~\ref{def:avoided} with a 
gap parameter $\delta_0\in ]0,1]$.
Assuming (A0), (A1) and (A2), we have for all test functions $\chi\in{\mathcal C}_c^\infty([0,T])$ a constant $C>0$
\begin{equation}
\label{eq:approx}
\left| \int_0^T \chi(t) \left(w^\eps(\psi^\eps_t)-{\mathcal L^t_\eps w^\eps(\psi^\eps_0)},a\right) \d t \right| \le C\,\eps^{1/32},
\end{equation}
where the constant $C$ depends on a finite number of upper bounds of derivatives of the smooth functions $\alpha,\beta,\gamma$ defining the potential $V$ and $a,\chi$ and of lower bounds of the determinant of the matrix $V_1$.
\end{theorem}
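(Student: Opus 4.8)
The plan is to follow the strategy of \cite{FL08} for conical crossings, but passing through the correspondence between avoided and conical crossings established (in the paper) in Section~\ref{sec:strategy}. First I would invoke Theorem~\ref{prop:parametrization}: embed the potential $V(q)=V(q,\delta_0)$ into a family $V(q,\delta)$ having a genuine conical intersection on $S_0\times\{\delta=0\}$. This lifts the problem to $T^*(\R^d_q\times\R_\delta)$, where the crossing is generic in the sense of \cite{CdV1,FG03}, provided one restricts to momenta transverse to $S_0$ --- which is exactly what assumption~(A0) guarantees through the exclusion of trajectories arriving at $S_0$ with momentum in $T_{q}S_0$. The Egorov-type statement then splits, as usual, into a \emph{propagation away from the crossing} part and a \emph{transition through the crossing} part, which are glued along the hypersurface $\Sigma_\eps$.

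Second, for the region of large gap $g(q)>R\sqrt\eps=\eps^{3/8}$, I would use the classical-transport estimate recalled in the introduction and proved in Appendix~\ref{proof:prop}: there $w^\eps_\pm$ is transported by $\Phi^t_\pm$ up to $O(\eps^{1/8})$, with constants independent of $\delta_0$. The diagonal Wigner components decouple because the off-diagonal part of the Wigner matrix solves a transport equation with an oscillatory source of frequency $g(q)/\eps\gtrsim R/\sqrt\eps$, hence is negligible against smooth observables (Remark~\ref{rem:appendix}). This reduces everything to understanding a single crossing passage at a phase-space point $(q^*,p^*)\in\Sigma_\eps$, where by (A2) each relevant trajectory stops only once within $[0,T]$.

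Third --- and this is the technical heart --- near such a point I would apply the elementary normal form of Section~\ref{sec:reduction}: after the change of space-time coordinates (including the momentum drift $\omega^*$, which is exact in the new variables by \S\ref{subsubsec:drift} and Remark~\ref{rem:energy}), the system is microlocally conjugated to a $\delta$-parametrized Landau--Zener model $i\eps\partial_s u=\bigl(\begin{smallmatrix} s & \delta\\ \delta & -s\end{smallmatrix}\bigr)u$ plus controlled remainders. The scattering matrix of this model is explicit, its transition coefficient being precisely $T_\eps(q^*,p^*)$ of \aref{eq:transfert} with $g(q^*)$ playing the role of $\delta$ and $|\det(p^*\cdot\nabla_q V_0(q^*))|^{1/2}$ the effective crossing speed. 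Propagating the Wigner transform through this normal form and translating back, I would obtain that the jump in the diagonal components is governed by the Landau--Zener rates and the energy-preserving drift, i.e. exactly the Markov kernel defining $\LL^t_\eps$. Summing the finitely many one-jump passages along each trajectory (A2 makes this a single jump per trajectory) and comparing against the semigroup by duality yields \aref{eq:approx}.

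The main obstacle I expect is the uniform control of the normal form and of the Landau--Zener reduction \emph{in the gap parameter $\delta$}: one must keep $\delta$ as a genuine parameter rather than absorb it, so that the error terms are of the announced sizes $R^3\sqrt\eps$ and $R^{-5}\eps^{-1/2}$ balanced at $\eta_\eps=\eps^{1/8}$, and the final rate $\eps^{1/32}$ emerges from the compounding of these with the $\eps^{\beta_i}$ localization losses in (A0)--(A1). A secondary difficulty is handling the off-diagonal Wigner component in an intermediate zone (widths between $R\sqrt\eps$ and a fixed constant), where neither the large-gap transport argument nor the Landau--Zener normal form applies directly; there one needs the two-microlocal / second-microlocalization bookkeeping of \cite{FL08} adapted to the present explicit normal form to show that no weight is lost or created outside $\Sigma_\eps$.
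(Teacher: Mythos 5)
Your plan is correct and is essentially the paper's own proof: the parametrized embedding of Theorem~\ref{prop:parametrization} linking the avoided crossing to a conical one, classical transport with negligible off-diagonal contributions in the large-gap zone (Appendix~\ref{proof:prop}, Remark~\ref{rem:appendix}), the microlocal normal form of Section~\ref{sec:reduction} with $\delta$ kept as a controlled parameter reducing to an operator-valued Landau--Zener model, and the comparison with the Markov semigroup via the scattering formula of \cite{FG03} in Section~\ref{sec:mainproof}. One small correction: the rate $\eps^{1/32}$ does not come from the localization exponents $\beta_i$ in (A0)--(A1) but from balancing the normal-form comparison error $r^3\eps^{1/8}$ (caused by the $1/\delta_0$-growth of the derivatives of $\widetilde\gamma$, see Theorem~\ref{prop:parametrization}(4) and Remarks~\ref{rem:1/8}--\ref{rem:1/8ter}) against the scattering error $1/r$, which forces $r=\eps^{-1/32}$; the conditions $\beta_i\ge 1/32$ are then chosen to match this rate.
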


\medskip
The semigroup $(\mathcal{L}^t_\eps)_{t\ge0}$ crucially depends on the jump manifold $\Sigma_\eps$, that comprises those points in phase space with $g(q)\le R\sqrt\eps$, 
$R=R(\eps) = \eps^{-1/8}$, where the classical trajectories $(q^\pm(t),p^\pm(t))$ attain locally minimal surface gaps.  If the minimal gap size $\delta_0>0$ of the avoided crossing is larger than $R\sqrt\eps$, then the jump manifold is the empty set, so that Theorem~\ref{theorem} reduces to a leading order description of expectation values for block-diagonal observables by mere classical transport, see Appendix~\ref{proof:prop}. As a consequence, $\delta_0\le R\sqrt\eps$ is the only  interesting regime, and the key issue is to prove the hopping formula locally, close to any point of  $S_0\cap\Omega$.
This is done by
using the possibility to parametrically link the avoided crossing with a conical intersection. This construction is carried out in Section~\ref{sec:strategy},  where we prove that close to any point of $\Omega$, the potential $V$ is embeddable in a parametrized family of potentials. We then prove a result  analogous to Theorem~\ref{theorem} for the family of solutions to the Schr\"odinger equation associated with the parametrized potentials (see Theorem~\ref{theorem:delta-result} below). The different steps of the proof of the surface hopping approximation are then developed in Section~\ref{sec:mainproof}, via a reduction to a Landau--Zener model performed in Section~\ref{sec:reduction}. 

\medskip
An interesting feature of Theorem~\ref{theorem} is that it justifies using a surface algorithm without assessing the size of the gap with respect to~$\varepsilon$. This is of major interest for applications, since for ``real'' molecular quantum systems the explicit comparison of $\delta_0$ and $\varepsilon$ might be difficult. The algorithm takes into account the three main regimes:
\begin{enumerate}
\item $\delta_0\gg\sqrt\eps$ propagation along the eigenvalue surfaces
\item $\delta_0\sim \sqrt\eps$ partial transition between eigenspaces
\item $\delta_0\ll\sqrt\eps$ total transition between the eigenspaces
\end{enumerate}
In the context of semi-classical wave packet propagation, \cite{HJ1} and \cite{HJ2} have analysed the second regime, while \cite{Rou} has considered the first and third one.
In particular, if there are several sizes of minimal gaps in different open subsets, Theorem~\ref{theorem} proves that the algorithm can be used and the transition process will be automatically adapted to the gap size. The Born--Oppenheimer result \eqref{eq:ct} uses plain classical transport and has an error constant 
that tends to infinity for shrinking minimal gap size $\delta_0$. In constrast, the error bound of Theorem~\ref{theorem} only depends on bounds of the potential $V$ that can be controlled with respect to $\delta_0$. (The lower bound on the determinant of $V_1$ is not related to the gap size.)

\medskip
We note that the transversality condition of assumption (A0) is crucial for the microlocal normal form we use for effectively describing the 
nonadiabatic transitions. For the simple example \eqref{eq:simple_example}, it means that the set $\{(q,p)\in\R^{2d}, q_1=p_1=0\}$ is negligible for the 
trajectories of
\[
\dot q = p,\qquad \dot p = \mp \frac{q_1}{\sqrt{q_1^2+\delta_0^2}} (1,0,\ldots,0) 
\]
that are issued from the support of the initial Wigner function $W^\eps(\psi^\eps_0)$.  The first condition of assumption (A0) can be relaxed to initial data associated with 
both $\Ran\, \Pi^+$ and $\Ran\,\Pi^-$, 
provided that the trajectories for both modes do not arrive simultaneously at the same phase space point of the  jump manifold $\Sigma_\eps$. However, as illustrated by the numerical experiments for the dual avoided crossing in Section~\ref{sec:dual}, simultaneous arrival at the jump manifold is the situation where the off-diagonal components of the Wigner transform become relevant such that the present surface hopping approximation breaks down.

\begin{remark}\label{rem:generalize}
The avoided crossing of Definition~\ref{def:avoided} has a minimal gap manifold $S_0\subset\R^d$ of codimension one.
The results of \cite{FG03,Fe06,FL08} on eigenvalue crossings of codimension three and five allow to extend Theorem~\ref{theorem} to avoided crossings with minimal gap manifolds of higher codimension as well. 
\end{remark}

\section{Reduction to a conical intersection}\label{sec:strategy}

We now introduce a family of potentials $(V_0(q,\delta))_{\delta\in I}$ locally extending the trace-free part $V_0(q)$ of our original potential. We verify 
that $V_0(q,\delta)$ viewed as a function on $\R^d\times I$ has a generic codimension~two crossing for $q\in S_0$ and $\delta=0$ in the sense of ~\cite{CdV1} and~\cite{FG03}, respectively. Then, we describe the parametrized Schr\"odinger system that we shall consider afterwards.

\subsection{Parametrization of the gap}

We start by constructing the family of trace-free potentials $(V_0(q,\delta))_{\delta\in I}$ that locally extends the original potential $V_0(q)$ by adding the gap size as an additional coordinate.

\begin{theorem}\label{prop:parametrization} 
Let $V$ have an
avoided crossing in $\Omega\subset\R^d$, in the sense of Definition~\ref{def:avoided}, with minimal gap hypersurface $S_0$.  
Then there  exists an open interval $I\subseteq\R$ with $0,\delta_0\in I$, an open subset $\widetilde\Omega\subseteq\Omega$ and two functions $\beta(q,\delta)$ and  $\gamma(q,\delta)$ smooth on $\widetilde\Omega\times I$ and affine in $\delta$,  such that the matrix
$$V_0(q,\delta)=\begin{pmatrix} \beta(q,\delta) & \gamma(q,\delta) \\ \gamma(q,\delta) & -\beta(q,\delta)\end{pmatrix}$$
satisfies the following properties:
\begin{enumerate}
\item We have $V_0(q,\delta_0)=V_0(q)$ for all $q\in\widetilde\Omega$.
\item The eigenvalue gap
$$
g(q,\delta)=2\sqrt{\beta(q,\delta)^2+\gamma(q,\delta)^2}
$$ 
of $V_0(q,\delta)$ is of minimal size $|\delta|$ on $S_0$, that is, 
$$
g(q,\delta)\ge|\delta|\;\mbox{for all}\;q\in\widetilde\Omega,\quad
g(q,\delta)=|\delta|\;\mbox{if and only if}\;q\in S_0.
$$
\item There exists a smooth orthogonal matrix $R(q)$ and two smooth functions $\widetilde \beta(q,\delta)$  and $\widetilde \gamma(q)$, where $\widetilde \beta$ is affine in $\delta$, such that 
$$R(q)V_0(q,\delta)R(q)^*=\begin{pmatrix}
\widetilde\beta(q,\delta) & \delta \widetilde\gamma(q) \\
\delta \widetilde\gamma(q)  & - \widetilde\beta(q,\delta)
\end{pmatrix}$$
and $\widetilde\gamma(q)\neq 0$ for all $q\in\widetilde\Omega$. All derivatives of $\widetilde\beta$ are bounded.
\item
If $y=(y_1,y')$ are local coordinates such that $S_0\cap\widetilde\Omega = \{y_1=0\}\cap\widetilde\Omega$, then 
\[
\forall k\in\N \,\exists c_k>0 : \sup_{y\in\widetilde\Omega} |\partial^k_{y_1}\widetilde\gamma(y)| < c_k
\]
and
\[
\forall \alpha\in\N^{d-1} \,\exists c_\alpha>0 : \sup_{y\in\widetilde\Omega} |\partial^\alpha_{y'}\widetilde\gamma(y)| < c_\alpha {|y_1|\over \delta_0},
\] 
while all other derivatives of the function $\widetilde\gamma$ are of the order $1/\delta_0$. The derivative bounds  involve a lower bound on the determinant of the matrix~$V_1(y)$ 
of the decomposition $V_0(y) = y_1 V_1(y) + V_2(y')$.
\end{enumerate} 
\end{theorem}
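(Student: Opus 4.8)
The plan is to obtain the parametrisation by rescaling the trace-free part $V_2$ of the potential and then diagonalising $V_1$. Throughout I identify a trace-free symmetric matrix $\begin{pmatrix}b&c\\ c&-b\end{pmatrix}$ with the vector $(b,c)\in\R^2$; under this identification the eigenvalue gap equals twice the Euclidean norm, the bilinear form $\tfrac12\tr(MN)$ becomes the dot product, and conjugation $M\mapsto RMR^*$ by a rotation $R$ becomes rotation of the vector by twice the angle of $R$. Write $V_1(y)\leftrightarrow v_1(y)$, $V_2(y')\leftrightarrow v_2(y')$, and put $u\times w=u_1w_2-u_2w_1$ for $u,w\in\R^2$. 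The first step is to read off from the avoided-gap hypothesis of Definition~\ref{def:avoided} the geometry used below. Since $g(q)^2=4\,|y_1v_1(y)+v_2(y')|^2$ and $g=\delta_0$ exactly on $S_0=\{y_1=0\}$, evaluating at $y_1=0$ forces $|v_2(y')|\equiv\delta_0/2$, while expanding the inequality $g(q)^2\ge\delta_0^2$ and letting $y_1\to0^{\pm}$ forces $v_1(0,y')\perp v_2(y')$; hence $v_1\cdot v_2=y_1h(y)$ with $h$ smooth, and
\[
g(q)^2-\delta_0^2=4y_1^2\,B(y),\qquad B(y):=|v_1(y)|^2+2h(y),
\]
where $B\ge0$ and $B>0$ off $S_0$; the transversal nondegeneracy implicit in $S_0$ being a minimal-gap hypersurface moreover gives $B>0$ on $S_0$. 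Finally $V_1$ invertible means $|v_1(y)|>0$, and $V_1,V_2$ linearly independent means $v_1(y)\times v_2(y')\neq0$ throughout $\Omega$.

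I then set, on a small enough connected open $\widetilde\Omega\subseteq\Omega$ meeting $S_0$ (on which the coordinates of Definition~\ref{def:avoided} are defined and $\overline{\widetilde\Omega}$ is compact) and an open interval $I$ with $0,\delta_0\in I$ to be fixed below,
\[
V_0(q,\delta):=y_1V_1(y)+\tfrac{\delta}{\delta_0}V_2(y'),
\]
with $\beta(q,\delta),\gamma(q,\delta)$ its entries; this is smooth, linear (hence affine) in $\delta$, and equals $V_0(q)$ at $\delta=\delta_0$, which is property~(1). For property~(2), the same computation gives $g(q,\delta)^2-\delta^2=4y_1^2\big(|v_1(y)|^2+2\tfrac{\delta}{\delta_0}h(y)\big)$, and the bracket is affine in $\delta/\delta_0$ with values $|v_1|^2>0$ at $\delta=0$ and $B>0$ at $\delta=\delta_0$, hence stays positive on $\widetilde\Omega$ when $I$ is chosen to be a small enough open neighbourhood of $[0,\delta_0]$; thus $g(q,\delta)\ge|\delta|$ with equality iff $y_1=0$. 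For property~(3), let $R(q)=R(y)$ be the rotation diagonalising $V_1(y)$, which is smooth because $v_1(y)\neq0$ and $\widetilde\Omega$ is simply connected. Conjugation sends $v_1$ to $(|v_1|,0)$ and $v_2$ to $\big(\tfrac{v_1\cdot v_2}{|v_1|},\tfrac{v_1\times v_2}{|v_1|}\big)$, so
\[
R(q)V_0(q,\delta)R(q)^*=\begin{pmatrix}\widetilde\beta(q,\delta)&\delta\,\widetilde\gamma(q)\\ \delta\,\widetilde\gamma(q)&-\widetilde\beta(q,\delta)\end{pmatrix},\qquad
\widetilde\beta=y_1|v_1(y)|+\tfrac{\delta}{\delta_0}\,\tfrac{v_1\cdot v_2}{|v_1|},\quad
\widetilde\gamma=\tfrac{v_1(y)\times v_2(y')}{\delta_0\,|v_1(y)|}.
\]
Here $\widetilde\beta$ is affine in $\delta$ and, with all its derivatives, bounded on $\widetilde\Omega\times I$ (using the lower bound on $|v_1|^2=-\det V_1$), while $\widetilde\gamma$ does not depend on $\delta$ and is nonvanishing on $\widetilde\Omega$, precisely because $v_1\times v_2\neq0$; these smoothness statements are coordinate-free, so stating them in $q$ rather than $y$ is harmless.

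It remains to verify property~(4), where the bookkeeping sits. Write $\widetilde\gamma(y)=G(y)/\delta_0$ with $G=\tfrac{v_1\times v_2}{|v_1|}$. Two structural facts drive the estimates. First, $\partial_{y_1}$ never differentiates $v_2=v_2(y')$, so each $\partial_{y_1}$ produces only an $O(1)$ factor---controlled by derivatives of $V_1$ and the lower bound on $\det V_1$---with $v_2$ left undifferentiated at size $\delta_0/2$; hence $\partial_{y_1}^kG=O(\delta_0)$ and $\partial_{y_1}^k\widetilde\gamma=O(1)$ uniformly in $\delta_0$, which are the bounds $c_k$. Second, on $S_0$ the orthogonality $v_1(0,y')\perp v_2(y')$ gives $v_1(0,y')\times v_2(y')=\pm|v_1(0,y')|\,|v_2(y')|$ with a sign locally constant by continuity, so $\widetilde\gamma|_{S_0}\equiv\pm\tfrac12$ and every pure tangential derivative $\partial_{y'}^{\alpha}\widetilde\gamma$ vanishes on $S_0$; writing $\partial_{y'}^{\alpha}G(y)=y_1\!\int_0^1\partial_{y_1}\partial_{y'}^{\alpha}G(ty_1,y')\,dt$ and using $\partial_{y_1}\partial_{y'}^{\alpha}G=O(1)$ (it involves one $\partial_{y_1}$ on the $v_1$-factors and at most $O(1)$ derivatives of $v_2$) gives $\partial_{y'}^{\alpha}\widetilde\gamma=O(|y_1|/\delta_0)$, and the same expansion with a mixed derivative gives $\partial_{y_1}^j\partial_{y'}^{\alpha}\widetilde\gamma=O(1/\delta_0)$ for $j,|\alpha|\ge1$. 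This proves property~(4). I expect the main obstacle to be twofold: establishing $B>0$ on $S_0$---the transversal nondegeneracy concealed in calling $S_0$ a minimal-gap hypersurface, without which the interval $I$ cannot be opened past $\delta_0$---and the $\delta_0$-uniform bookkeeping above, whose content is exactly that $\partial_{y_1}$ costs no power of $1/\delta_0$ and that $\widetilde\gamma$ is constant along $S_0$, so that its tangential derivatives are cushioned by a factor $|y_1|$.
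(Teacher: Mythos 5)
Your proof follows essentially the same route as the paper: the same extension $V_0(q,\delta)=y_1V_1(y)+\tfrac{\delta}{\delta_0}V_2(y')$, the same use of the minimality of $g$ on $S_0$ to get $v_1\perp v_2$ there and the factorisation $v_1\cdot v_2=y_1h$, the same affine-in-$\delta$ positivity argument for property (2), the same rotation diagonalising $V_1$ with identical formulas for $\widetilde\beta$ and $\widetilde\gamma$, and the same bookkeeping for property (4) (a $\partial_{y_1}$ costs no power of $1/\delta_0$ because $v_2$ is then left undifferentiated of size $\delta_0/2$, and $\widetilde\gamma$ is locally constant along $S_0$ so tangential derivatives are cushioned by $|y_1|$). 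Your value $\widetilde\gamma|_{S_0}=\pm\tfrac12$ is in fact the correct one (the paper's ``$\widetilde\gamma^2=1$ on $S_0$'' is a harmless factor slip, since $g=2\sqrt{\beta^2+\gamma^2}$); only the constancy along $S_0$ is used. One caution: your assertion that $B>0$ on $S_0$ is ``transversal nondegeneracy implicit in $S_0$ being a minimal-gap hypersurface'' is not justified by Definition~\ref{def:avoided}, which does not require the transverse minimum to be nondegenerate (one can arrange $g^2-\delta_0^2$ to vanish to higher order in $y_1$ while keeping $V_1$ invertible and $V_1,V_2$ independent). The paper only uses strict positivity of the bracket at $\delta=0$ (invertibility of $V_1$) and its positivity off $S_0$ at $\delta=\delta_0$, which by affinity yields property (2) for $\delta\in[0,\delta_0]$; the extension to an open interval $I$ strictly containing $[0,\delta_0]$ — the only thing your extra claim was serving — is treated no more carefully in the paper itself, so this is a shared gloss rather than a defect specific to your argument.
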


\begin{proof}
We work close to some $q_0\in S_0 $ in  local coordinates $y=(y_1,y')$ such that $S_0\cap\Omega=\{y_1=0\}\cap\Omega$ and 
$$
V_0(y)=y_1V_1(y)+V_2(y')\;\;
\text{with}\;\;V_1(y)\;\text{invertible}
$$ 
for all $y\in\Omega$.
Setting $y_1=0$, one obtains that $V_2$ and consequently $V_1$ are trace-free on $\Omega$. We denote 
$$
V_1(y)=\begin{pmatrix} a_1(y)& b_1(y)\\ b_1(y) & -a_1(y)\end{pmatrix},\quad
V_2(y')=\begin{pmatrix} a_2(y')& b_2(y')\\ b_2(y') & -a_2(y')\end{pmatrix}
$$
and write the gap as
\begin{align*}
g(y)^2 &= 4y_1^2 \left(a_1(y)^2 + b_1(y)^2\right) + 8y_1\left(a_1(y)a_2(y') + b_1(y)b_2(y')\right)\\
& \quad + 4\left(a_2(y')^2 + b_2(y')^2\right) 
\end{align*}
for all $y\in\Omega$. From the relation $g(y)^2=\delta_0^2$ for all $y\in S_0$ we then deduce 
\[
\delta_0^2 = 4(a_2(y')^2+b_2(y')^2)\;\;\mbox{for all}\;\; y\in\Omega.
\]
We define for $\delta\in\R$
$$
V_0(y,\delta) = y_1 V_1(y)+{\frac{\delta}{\delta_0}} V_2(y')
$$
such that for all $y\in\Omega$
$$
V_0(y,\delta_0)=V_0(y),\qquad g(y,\delta_0)=g(y)
$$
and
\begin{eqnarray*}
g(y,\delta)^2 = 4y_1^2\left(a_1(y)^2 + b_1(y)^2\right)+ 8y_1 \frac{\delta}{\delta_0}
\left(a_1(y)a_2(y')+b_1(y)b_2(y')\right) + \delta^2.
\end{eqnarray*} 
With respect to the original coordinates, this construction means
$$
V_0(q,\delta)= \begin{pmatrix}\beta(q,\delta) & \gamma(q,\delta)\\ \gamma(q,\delta) & -\beta(q,\delta)\end{pmatrix}
$$
with
$$
\beta(q,\delta)=y_1a_1(y)+\frac{\delta}{\delta_0} a_2(y'),\quad
\gamma(q,\delta)=y_1b_1(y)+\frac{\delta}{\delta_0} b_2(y').
$$

Let us prove now that the gap $g(y,\delta)$ is minimal on $S_0$ for $\delta$ in some open interval~$I$ which contains $[0,\delta_0]$.
Since the gap $g(y)$ is minimal on $S_0\cap\Omega$, we have $\nabla_y (g(y)^2)=0$ for all $y\in S_0\cap\Omega$. Consequently, 
\[
a_1(0,y')a_2(y')+b_1(0,y')b_2(y') = 0,\qquad y=(0,y')\in S_0\cap\Omega.
\]
Therefore, there exist an open subset $\widetilde\Omega\subseteq\Omega$ and a continuous function $\Gamma:\widetilde\Omega\to\R$ such that 
$$a_1(y)a_2(y')+b_1(y)b_2(y')=y_1\Gamma(y)$$
 and 
\[
g(y,\delta)^2=4y_1^2\left(a_1(y)^2+b_1(y)^2+2\frac{\delta}{\delta_0}\Gamma(y)\right) + \delta^2.
\]
Now it remains to find an open interval $I$ such that
\begin{equation}\label{relation}
a_1(y)^2+b_1(y)^2+2\frac{\delta}{\delta_0}\Gamma(y) >0
\end{equation}
for all $(y,\delta)\in\widetilde\Omega\times I$ with $y_1\neq0$. We observe that 
$$
g(y,\delta_0)^2=g(y)^2>\delta_0^2\;\;{\rm for}\;\; y_1\not=0
$$
implies
$$
4y_1^2\left(a_1(y)^2+b_1(y)^2+2\Gamma(y)\right)>0\;\;{\rm for}\;\; y_1\not=0,
$$
while the invertibility of $V_1(y)$, $y\in\widetilde\Omega$, implies
$$
a_1(y)^2+b_1(y)^2 > 0,\qquad y\in\widetilde\Omega.
$$
Therefore, the affine function
$$
\delta\mapsto a_1(y)^2 + b_1(y)^2 + 2\frac{\delta}{\delta_0} \Gamma(y) 
$$
takes nonnegative values in $\delta=0$ and $\delta=\delta_0$ and thus for any $\delta\in]0,\delta_0[$, which yields \eqref{relation}. 
We note that the we can choose the interval $I$ small enough such that the quotient $\delta/\delta_0$ remains bounded for all $\delta\in I$. 
Consequently, the functions $\beta(\cdot,\delta)$ and $\gamma(\cdot,\delta)$ have smooth bounded derivatives.

\medskip
For the rotation of $V_0(q,\delta)$ we set
$$
A(q)= - {b_1(y(q))\over\sqrt{b_1(y(q))^2+a_1(y(q))^2}},\;\; B(q)={a_1(y(q))\over\sqrt{b_1(y(q))^2+a_1(y(q))^2}}
$$
and note that $A$ and $B$ are smooth functions with bounded derivatives, where the bound involves a lower bound on the determinant of $V_1$.
We also define the smooth rotation matrix $R(q) $ of angle $\theta(q)$ such that 
$$
{\rm cos} \left(2\theta(q)\right)=B(q),\qquad{\rm sin} \left(2\theta(q)\right)=A(q).
$$
Then, we have 
\begin{eqnarray*}
R(q) V_0(q,\delta) R(q)^* & = & \begin{pmatrix}
{\rm cos} \,\theta & -{\rm sin} \,\theta\\
{\rm sin} \,\theta & {\rm cos} \,\theta 
\end{pmatrix}
 \begin{pmatrix}\beta & \gamma\\ \gamma & -\beta\end{pmatrix}
 \begin{pmatrix}
{\rm cos} \,\theta & {\rm sin} \,\theta\\
-{\rm sin} \,\theta & {\rm cos} \,\theta 
\end{pmatrix}\\
& = &  \begin{pmatrix}
\beta\,{\rm cos} (2\theta)-\gamma\,{\rm sin}(2\theta) & \beta\,{\rm sin} (2\theta)+\gamma\,{\rm cos}(2\theta)\\
\beta\,{\rm sin} (2\theta)+\gamma\,{\rm cos}(2\theta) & -\beta\,{\rm cos} (2\theta)+\gamma\,{\rm sin}(2\theta) 
\end{pmatrix}\\
& = & 
\begin{pmatrix}
\widetilde\beta(q,\delta) & \delta \widetilde\gamma(q) \\
\delta \widetilde\gamma(q)  & - \widetilde\beta(q,\delta)
\end{pmatrix},
\end{eqnarray*}
where we define the $\delta$-affine function 
\begin{eqnarray*}
&&\widetilde \beta(q,\delta) = B(q)\beta(q,\delta) -A(q)\gamma(q,\delta)\\
&&= y_1(q)\sqrt{a_1(y(q))^2+b_1(y(q))^2} + \frac{\delta}{\delta_0} {a_2(y'(q))a_1(y(q))+b_1(y(q))b_2(y'(q))\over \sqrt {b_1(y(q))^2+a_1(y(q))^2}},
\end{eqnarray*}
whose derivatives are bounded functions, since $\delta/\delta_0$ is uniformly bounded. The smooth function $\widetilde\gamma(q)$ is defined by
\begin{eqnarray*}
\widetilde \gamma(q) & = & {1\over \delta} \left(A(q)\beta(q,\delta)+B(q)\gamma(q,\delta)\right)\\
& = & \frac{-b_1(y(q)) a_2(y'(q))+a_1(y(q)) b_2(y'(q))}{\delta_0 \sqrt{b_1(y(q))^2+a_1(y(q))^2}}.
\end{eqnarray*}
Observing that $a_2$ and $b_2$ only depend on $y'$ and that 
\[
a_2(y')^2+b_2(y')^2 = \tfrac{1}{4} \delta_0^2 \le\delta_0^2,\qquad y=(y_1,y')\in\widetilde\Omega,
\] 
we deduce that 
$\partial_{y_1}^k \widetilde\gamma$ is a smooth bounded function for any $k\in\N$. Using that $a_1a_2+b_1b_2=0$ on $S_0$, we observe 
\[
\widetilde\beta(q,\delta) = 0,\qquad q\in S_0\cap\widetilde\Omega,
\]
and $\delta^2 = \beta(\cdot,\delta)^2 + \gamma(\cdot,\delta)^2 = 
\widetilde\beta(\cdot,\delta)^2 + \delta^2\,\widetilde\gamma^2$,  
and we deduce that 
$\widetilde \gamma^2 = 1$ on $S_0$. Since $\widetilde\gamma$ is non-vanishing, 
due to the linear independence of $V_1(y)$ and $V_2(y')$, we then conclude that 
$\partial_{y'}^\alpha\widetilde\gamma(0,y')=0$ for all $\alpha\in\N^{d-1}$.
A Taylor expansion together with the rough estimate, 
that derivatives of $\widetilde\gamma$ are of the order $1/\delta_0$, yields the claimed bound on 
$\partial_{y'}^\alpha \widetilde\gamma(y)$ for $y\in\widetilde\Omega$.
\end{proof}

\subsection{The geometry of the crossing}
We add half the trace to the $\delta$-parametrized trace-free family of Theorem~\ref{prop:parametrization} and consider
\begin{equation}\label{def:V}
V(q,\delta) = \alpha(q)\Id + V_0(q,\delta),\qquad (q,\delta)\in\widetilde\Omega\times I,
\end{equation}
with $\alpha(q) = \frac12{\rm tr}V(q)$, such that the original potential can be written as
\[
V(q) = V(q,\delta_0),\qquad q\in\widetilde\Omega.
\]
We now verify that the symbol of the corresponding time-dependent Schr\"odinger operator, the matrix-valued function 
$$
P(q,p,\tau,\delta):=\left(\tau+\tfrac{|p|^2}{2}\right){\rm Id} +V(q,\delta),
$$
has a generic codimension two crossing on $S_0\times\{\delta=0\}$ in the sense of~\cite[\S1]{CdV1} and \cite[\S1]{FG03}. It has to satisfy the following two properties: 
\begin{enumerate}
\item The gap function satisfies
$$
g(q,\delta)=0\quad\text{if and only if}\quad (q,\delta)\in (S_0\cap\widetilde\Omega)\times\{\delta=0\},
$$
according to point (2) of Theorem~\ref{prop:parametrization}. 
\item The Poisson bracket 
$$
\left\{\tau+\tfrac{|p|^2}{2}+\alpha(q)\;,\; V_0(q,\delta)\right\}=p\cdot\nabla_q V_0(q,\delta)
$$
is invertible for $(q,\delta)\in (S_0\cap\widetilde\Omega)\times\{\delta=0\}$  and those momenta $p\in\R^d$ which are transverse to~$S_0$ at $q$.  This is implied by the following
Lemma~\ref{lem:transverse}.
\end{enumerate}

\begin{lemma}\label{lem:transverse} 
Let $V$ have an 
 avoided crossing in the sense of Definition~\ref{def:avoided} with minimal gap hypersurface $S_0$.  
Let $q_0\in S_0$ and $p_0\in\R^d$ transverse to $S_0$ at $q$. Then,  there exists a neighborhood $\Omega_1\subset\R^{2d}$ of $(q_0,p_0)$, independent of $\delta$ such that for all $\delta\in I$ and $(q,p)\in\Omega_1$,
$$
p\cdot \nabla_q V_0(q,\delta)\;\text{is invertible and}\;\; p\cdot\nabla_q\widetilde\beta (q,\delta)<0,
$$
where $(V_0(q,\delta))_{\delta\in I}$ and $\widetilde\beta(q,\delta)$ are defined as in Theorem~\ref{prop:parametrization}.
\end{lemma}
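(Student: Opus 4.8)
The plan is to work in the local coordinates $y=(y_1,y')$ of Theorem~\ref{prop:parametrization} in which $S_0\cap\widetilde\Omega = \{y_1=0\}$ and $V_0(y,\delta) = y_1 V_1(y) + \tfrac{\delta}{\delta_0} V_2(y')$, and to exploit the rotated normal form $R(q)V_0(q,\delta)R(q)^* = \begin{pmatrix}\widetilde\beta & \delta\widetilde\gamma\\ \delta\widetilde\gamma & -\widetilde\beta\end{pmatrix}$. First I would reduce the two claims to a single point: since $R(q)$ is orthogonal and conjugation preserves invertibility, $p\cdot\nabla_q V_0(q,\delta)$ is invertible if and only if $R(p\cdot\nabla_q V_0)R^*$ is. Writing out $p\cdot\nabla_q\big(R V_0 R^*\big)$ produces $\begin{pmatrix}p\cdot\nabla_q\widetilde\beta & \delta\, p\cdot\nabla_q\widetilde\gamma\\ \delta\, p\cdot\nabla_q\widetilde\gamma & -p\cdot\nabla_q\widetilde\beta\end{pmatrix}$ plus commutator terms of the form $(p\cdot\nabla R)R^* (RV_0R^*) - (RV_0R^*)(p\cdot\nabla R)R^*$, and crucially at points of $S_0$ with $\delta = 0$ one has $V_0 = 0$ there, so these commutator corrections and the $\delta\,\widetilde\gamma$ off-diagonal both vanish at $(q_0,p_0,0)$; the matrix reduces to $(p_0\cdot\nabla_q\widetilde\beta(q_0,0))\begin{pmatrix}1&0\\0&-1\end{pmatrix}$. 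So everything hinges on showing $p_0\cdot\nabla_q\widetilde\beta(q_0,0) \ne 0$, and more precisely $< 0$ after possibly fixing a sign convention; then invertibility and the sign statement follow at $(q_0,p_0,0)$ and, by continuity of the determinant and of $p\cdot\nabla_q\widetilde\beta$ in $(q,p,\delta)$, persist on a neighborhood $\Omega_1\times I$ (shrinking $I$ if necessary, which is harmless since $I$ was already chosen small).

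The core computation is therefore $\nabla_q\widetilde\beta(q_0,0)$. From the explicit formula in Theorem~\ref{prop:parametrization}, $\widetilde\beta(y,\delta) = y_1\sqrt{a_1(y)^2+b_1(y)^2} + \tfrac{\delta}{\delta_0}\cdot(\text{bounded})$, so at $\delta=0$ and $y_1=0$ we get $\partial_{y_1}\widetilde\beta(0,y') = \sqrt{a_1(0,y')^2+b_1(0,y')^2}$ and $\partial_{y'_j}\widetilde\beta(0,y') = 0$ (the second because $\widetilde\beta(0,y',0)\equiv 0$ along $S_0$, which was already observed in the proof of Theorem~\ref{prop:parametrization}). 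Hence $\nabla_y\widetilde\beta(q_0,0)$ is a nonzero multiple of the conormal $dy_1$ to $S_0$, and its length is $\sqrt{a_1^2+b_1^2}$, which is bounded below in terms of a lower bound on $|\det V_1(y)| = a_1^2+b_1^2$ (note $V_1$ is trace-free, so $\det V_1 = -(a_1^2+b_1^2)$). The transversality hypothesis that $p_0$ is not tangent to $S_0$ at $q_0$ means exactly that the conormal $dy_1$ does not annihilate $p_0$, i.e.\ $p_0\cdot\nabla_y\widetilde\beta(q_0,0)\ne 0$. Choosing the orientation of $y_1$ so that the flow of interest crosses $S_0$ in the direction of decreasing $\widetilde\beta$ — equivalently absorbing a sign into the coordinate $y_1$ — gives $p_0\cdot\nabla_q\widetilde\beta(q_0,0) < 0$.

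It then remains to propagate these two facts to a neighborhood. The determinant of $R(q)(p\cdot\nabla_q V_0(q,\delta))R(q)^*$ is a continuous (indeed smooth) function of $(q,p,\delta)$ which equals $-(p_0\cdot\nabla_q\widetilde\beta(q_0,0))^2 < 0$ at the base point, hence stays negative — in particular nonzero — on some neighborhood $\Omega_1\subset\R^{2d}$ of $(q_0,p_0)$ and for all $\delta$ in a sufficiently small interval; since $I$ was already taken small in Theorem~\ref{prop:parametrization}, we may assume $\Omega_1$ works for all $\delta\in I$. Likewise $p\cdot\nabla_q\widetilde\beta(q,\delta)$ is continuous and negative at the base point, so it stays negative on $\Omega_1\times I$. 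I expect the only genuinely delicate point is the sign normalization $p\cdot\nabla_q\widetilde\beta<0$: it is not intrinsic but a choice of orientation of the transversal coordinate $y_1$, and one must check this choice is globally consistent along $S_0\cap\widetilde\Omega$ (or else restrict $\widetilde\Omega$ to a connected piece where a consistent choice exists) and compatible with the later use of $\widetilde\beta$ as the "time'' variable in the Landau–Zener normal form; everything else is an elementary continuity-and-conjugation argument.
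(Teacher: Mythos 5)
Your core computation is the same as the paper's: pass to the local coordinates $y=(y_1,y')$, observe that $\nabla_q\widetilde\beta$ is, on $S_0$, a positive multiple of $\nabla_q y_1$, use the transversality of $p_0$ to get $p_0\cdot\nabla_q y_1\neq0$, normalize the sign by replacing $y_1$ with $-y_1$, and deduce invertibility of $p\cdot\nabla_q V_0$ from the non-vanishing of $p\cdot\nabla_q\widetilde\beta$ (your pointwise conjugation by $R(q)$ at the base point, where $V_0(q_0,0)=0$ kills the $\nabla R$ corrections, is a legitimate variant of the paper's determinant identity $-\det(p\cdot\nabla_q V_0)=(p\cdot\nabla_q\widetilde\beta)^2+\delta^2(p\cdot\nabla_q\widetilde\gamma)^2$).

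The genuine gap is the uniformity in $\delta$. You verify the sign and the invertibility only at $\delta=0$ and then extend by continuity in $(q,p,\delta)$, proposing to ``shrink $I$ if necessary, which is harmless''. It is not harmless: Theorem~\ref{prop:parametrization} constructs $I$ with $0,\delta_0\in I$, the lemma is claimed (and later used, e.g.\ in Proposition~\ref{normalform} and Theorem~\ref{theorem:delta-result}) for \emph{all} $\delta\in I$ with a $\delta$-independent $\Omega_1$, and the main Theorem~\ref{theorem} is precisely the case $\delta=\delta_0$; a continuity neighborhood of $\delta=0$ has no reason to contain $\delta_0$, and discarding $\delta_0$ from $I$ severs the link back to the original potential. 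Moreover your expansion ``$\widetilde\beta=y_1\sqrt{a_1^2+b_1^2}+\tfrac{\delta}{\delta_0}(\text{bounded})$'' obscures that the $\delta$-part also has a nonzero normal derivative on $\{y_1=0\}$, since $a_1a_2+b_1b_2=y_1\Gamma(y)$ gives $\partial_{y_1}\widetilde\beta(0,y',\delta)=\bigl(a_1^2+b_1^2+\tfrac{\delta}{\delta_0}\Gamma\bigr)/\sqrt{a_1^2+b_1^2}$, which is not controlled by the $\delta=0$ computation alone. The paper closes exactly this point: the positivity \eqref{relation}, valid on all of $\widetilde\Omega\times I$, shows that this factor is positive for every $\delta\in I$ simultaneously, so $p_0\cdot\nabla_q\widetilde\beta(q_0,\delta)<0$ for all $\delta\in I$ at once; the remaining term $y_1\,p\cdot\gamma(y,\delta)$ is bounded uniformly in $\delta\in I$ on bounded sets, so restricting $|y_1|$ and $p$ yields a single $\delta$-independent neighborhood $\Omega_1$, and invertibility then holds on all of $\Omega_1\times I$, not just near the conical point. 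Your closing remark about global consistency of the orientation of $y_1$ is not an issue, since the statement is purely local around $(q_0,p_0)$.
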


\begin{proof}
Let $\widetilde\Omega$ be the open set of Theorem~\ref{prop:parametrization}.
We again work close to some point $q_0\in S_0\cap\widetilde\Omega$ in  local coordinates $y=(y_1,y')$ such that $S_0\cap\widetilde\Omega=\{y_1=0\}\cap\widetilde\Omega$ and 
$$
\widetilde \beta(q,\delta) = \frac{y_1}{\sqrt{a_1(y)^2 + b_1(y)^2}} \left( a_1(y)^2 + b_1(y)^2 + \frac{\delta}{\delta_0} \Gamma(y)\right),
$$
where by~(\ref{relation}),
$$
a_1(y)^2 + b_1(y)^2 + 2\frac{\delta}{\delta_0} \Gamma(y)\ge 0,\qquad y\in\widetilde\Omega,\quad \delta\in I.
$$
Then, for all $(q,p)$,
$$
p\cdot \nabla_q \widetilde\beta(q,\delta) = \frac{p\cdot\nabla_q  y_1}{\sqrt{a_1(y)^2 + b_1(y)^2}} \left( a_1(y)^2 + b_1(y)^2 + \frac{\delta}{\delta_0} \Gamma(y)\right)+ y_1p\cdot \gamma(y,\delta)
$$
where the function $p\cdot \gamma(y,\delta)$ is bounded for $\delta \in I$ and $(p,q)$ in any bounded set. 
Since $y_1=0$ is an equation of the hypersurface $S_0$ in $\widetilde\Omega$ and $p_0$ is transverse to $S_0$ at $q_0$, we have $p_0\cdot\nabla_q y_1(q_0)\not=0$. Therefore, if necessary, we turn $y_1$ into $-y_1$, so that 
$$
p_0\cdot\nabla_q \widetilde \beta(q_0,\delta)<0,
$$
for all $\delta\in I$. Besides, by setting a bound on $y_1$ and $p$, we can find a $\delta$-independent neighborhood $\Omega_1\subset\widetilde\Omega$ of $(q_0,p_0)$ such that 
$$
\forall \delta\in I\,\forall (q,p)\in\Omega: \;p\cdot\nabla_q \widetilde \beta(q,\delta)<0.
$$
The invertibility of $p\cdot\nabla_q V_0(q,\delta)$ in $I\times\Omega_1$ is then implied by
\[
-\det(p\cdot\nabla_q V_0(q,\delta)) = (p\cdot\nabla_q\widetilde\beta(q,\delta))^2 + \delta^2 (p\cdot\nabla_q\widetilde\gamma(q))^2.
\]
\end{proof}

We now work in the set $\Omega_1$ and denote by 
$$
\lambda^\pm(q,\delta) = \alpha(q) \pm\sqrt{\widetilde\beta(q,\delta)^2+\delta^2\widetilde\gamma(q)^2}
$$
the eigenvalues of the matrix $V(q,\delta)$ and still denote by 
$$
\Phi_\pm^t: \R^{2d}\to\R^{2d},\qquad \Phi_\pm^t(q_0,p_0)=\left(q_\delta^\pm(t),p_\delta^\pm(t)\right)
$$ 
the flow associated with the $\delta$-dependent Hamiltonian system 
$$
\dot q_\delta^\pm(t)=p_\delta^\pm(t),\;\; \dot p^\pm_\delta(t)=-\nabla_q \lambda^\pm(q_\delta^\pm(t),\delta),
$$
that becomes singular on the hypersurface $S_0$ if $\delta=0$. However, by the analysis of~\cite[\S3]{FG02} and \cite[\S2]{FG03}, one can pass through the 
singularity in the following sense.  We denote by $H_{\Lambda^\pm}$ the Hamiltonian vector fields of 
$$
\Lambda^\pm(q,p,\delta)= \tfrac{|p|^2}{2} + \lambda^\pm(q,\delta)
$$
and consider $(q,p)\in S_0\times\R^d$ with $p\cdot \nabla V_0(q,0)$ invertible and $\delta=0$. Then, 
\begin{eqnarray}\label{def:H}
\lim_{t},{0^-} H_{\Lambda^+} (\Phi^t_+(q,p))& =&  \lim_{t},{0^+} H_{\Lambda^-} (\Phi^t_-(q,p))\\
\nonumber
&=& p\cdot\nabla_q-\nabla_q \alpha(q)\cdot\nabla_p-\nabla_q\widetilde\beta(q,0)\cdot\nabla_p =:H, \\
\label{def:H'}
\lim_{t},{0^+} H_{\Lambda^+} (\Phi^t_+(q,p)) &=& \lim_{t},{0^-} H_{\Lambda^-} (\Phi^t_-(q,p))\\
\nonumber
&=& p\cdot\nabla_q-\nabla_q \alpha(q)\cdot\nabla_p+\nabla_q\widetilde\beta(q,0)\cdot\nabla_p =:H'.
\end{eqnarray}
Moreover,  the standard symplectic product of $H$ and $H'$ has a sign, since 
$$
\omega(H,H') = 2 p\cdot \nabla_q \widetilde\beta(q,0) <0
$$
according to Lemma~\ref{lem:transverse}. This continuation of the classical trajectories through the crossing at $S_0\times\{\delta=0\}$ will be a crucial element of our analysis. 
  
\begin{remark}\label{rem:beta<0}
For $\delta=0$, on ingoing trajectories, that is, on trajectories entering the conical crossing, we have 
$$
{\d\over \d t} g(q^\pm_{\delta=0}(t),0) = p_{\delta=0}^\pm(t)\cdot \nabla_q g(q_{\delta=0}^\pm(t),0)\leq 0,
$$
which implies that they are included in the set
$$
\left\{\left(p\cdot\nabla_q\widetilde \beta(q,0)\right)\widetilde \beta(q,0)\le 0\right\}\subset\left\{\widetilde\beta(q,0)\ge0\right\}.
$$ 
Similarly,  outgoing trajectories are included in $\left\{\widetilde\beta(q,0)\le 0\right\}$. 
\end{remark}


\subsection{The parametrized Schr\"odinger system}\label{sec:delta}

We analyse the time-dependent Schr\"odinger systems
$$
 \left\{\begin{array}{l}
i\eps\partial_t\psi^\eps_t = -\frac{\eps^2}{2} \Delta_q\psi^\eps_t +V(q,\delta)\psi^\eps_t,\;\;(t,q)\in\R\times\R^d,\\
\psi^\eps_{t=0}=\psi^\eps_0,
\end{array}
\right.
$$
defined by the family of potential matrices $V(q,\delta) = \tfrac12 \tr V(q) + V_0(q,\delta)$ with $\delta\in I$ of Theorem~\ref{prop:parametrization} and equation~\eqref{def:V}.

\medskip
Literally as in Section~\ref{sec:result}, we construct a surface hopping semigroup $({\mathcal L}^t_\eps)_{t\ge0}$ for all $\delta\in I$, and thus obtain an effective dynamical description comprising both the original avoided crossing at $\delta=\delta_0$ and the  conical intersection at $\delta=0$. On the one hand we use classical transport along the flows $\Phi^t_\pm:\R^{2d}\to\R^{2d}$ of
\[
\dot q_\delta^\pm(t) = p_\delta^\pm(t),\qquad \dot p_\delta^\pm(t) = -\nabla\lambda^\pm(q_\delta^\pm(t),\delta). 
\]
On the other hand we monitor the gap function along the classical trajectories and detect local minima by checking whether
$$
\frac{\d}{\d t} g(q^\pm_\delta(t),\delta) = p^\pm_\delta(t) \cdot \nabla_q g(q^\pm_\delta(t),\delta) = 0.
$$
The corresponding jump manifold reads
$$
\Sigma_{\eps} = \left\{(q,p)\in\R^{2d}\mid 
\;g(q,\delta)\le R\sqrt\eps
 ,\;\;p\cdot\nabla_q g(q,\delta)=0\right\}.
$$
The non-adiabatic transition probability for $(q^*,p^*)\in\Sigma_\eps$ is given by 
\begin{equation}\label{Tepsdelta}
T_{\eps}(q^*,p^*,\delta)=\exp\!\left(-\frac{\pi}{4\, \eps} \frac{g(q^*,\delta)^2}{|{\rm det}(p^*\cdot\nabla_qV_0(q^*,\delta))|^{1/2}}\right).
\end{equation}
Finally, the diagonal parts of the Wigner transform $W^\eps(\psi)$ of a wave function $\psi\in L^2(\R^d,\C^2)$ are defined with respect to the eigenprojectors $\Pi^\pm(q,\delta)$, that is, by
$$
w^\eps_\pm(\psi)(q,p,\delta)={\rm tr} \left(\Pi^\pm(q,\delta) W^\eps(\psi)(q,p)\right).
$$ 
The semigroup $(\LL^t_\eps)_{t\ge0}$ then acts on the function $w^\eps(\psi)\in{\mathcal B}$ constructed from the diagonal components 
$w^\eps_\pm(\psi)$ according to relation~\eqref{fpm}.

\medskip
The assumptions $(A0)$, $(A1)$, $(A2)$ of Theorem~\ref{theorem} refer to the potential $V(q)=V(q,\delta_0)$, and we denote by $(A0)_\delta$, $(A1)_\delta$, and $(A2)_\delta$ the corresponding assumptions with respect to $V(q,\delta)$. 
Our aim is to prove the following result:

\begin{theorem}\label{theorem:delta-result}
Let $V$ be a potential matrix with an avoided crossing in the sense of Definition~\ref{def:avoided} and $V(\cdot,\delta)$, $\delta\in I$, the corresponding 
parametrized family of Theorem~\ref{prop:parametrization}. We consider the time-dependent Schr\"odinger equation
$$
i\eps\partial_t\psi^\eps_t = -\tfrac{\eps^2}{2} \Delta_q\psi^\eps_t +V(q,\delta)\psi^\eps_t,\qquad \psi^\eps_{t=0}=\psi^\eps_0,
$$
and assume that $(A0)_\delta$, $(A1)_\delta$ and $(A2)_\delta$ hold for all $\delta\in I$. Then, for all cut-off functions $\chi\in{\mathcal C}^\infty_c([0,T])$, 
there exists a constant $C>0$ such that
\[
\left| \int_0^T \chi(t) \left(w^\eps(\psi^\eps_t)-{\mathcal L^t_\eps w^\eps(\psi^\eps_0)},a\right) \d t \right| \le C\,\eps^{1/32}.
\]
The constant $C$ depends on a finite number of upper bounds of derivatives of the smooth functions $\alpha,\beta,\gamma$ defining the potential $V$ and $a,\chi$ and of lower bounds of the determinant of the matrix $V_1$.
\end{theorem}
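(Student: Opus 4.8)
The plan is to reduce the analysis in a neighborhood of each point of $S_0\cap\widetilde\Omega$ to a $\delta$-parametrized microlocal normal form of Landau--Zener type, and then to combine the classical transport away from the crossing (Appendix~\ref{proof:prop}) with the transition described by that normal form. First I would use a partition of unity subordinate to a finite cover of the relevant compact portion of $S_0\cap\widetilde\Omega$ by sets $\Omega_1$ as in Lemma~\ref{lem:transverse}, so that on each piece the Poisson bracket $p\cdot\nabla_q V_0(q,\delta)$ is invertible and $p\cdot\nabla_q\widetilde\beta(q,\delta)<0$ uniformly in $\delta\in I$. Since the trajectories meet $\Sigma_\eps$ only once (assumption $(A2)_\delta$), the time interval $[0,T]$ splits into an incoming adiabatic phase, a short transition window around the minimal-gap time $t^*$, and an outgoing adiabatic phase. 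On the incoming and outgoing phases, as long as the trajectories stay in the region $g(q,\delta)>R\sqrt\eps=\eps^{3/8}$, the Egorov-type estimate \eqref{eq:ct}, sharpened as in Appendix~\ref{proof:prop}, gives classical transport of $w^\eps_\pm$ along $\Phi^{\mp t}_\pm$ up to an error $O(\eps^{1/8})$; this handles the bulk of the time integral.

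The core is the transition window. Here I would invoke the elementary normal form of Section~\ref{sec:reduction} (the reduction to a Landau--Zener model parametrized by $\delta$): after the change of space-time coordinates announced in Section~\ref{subsubsec:drift} — which makes the drift $p^*\pm\omega^*$ exact in the chosen direction — and using the rotation $R(q)$ of Theorem~\ref{prop:parametrization}~(3) to bring $V_0(q,\delta)$ into the form with off-diagonal entry $\delta\widetilde\gamma(q)$, the operator $\op_\eps(P(q,p,\tau,\delta))$ is microlocally conjugate to a scalar model whose off-diagonal coupling has size $\delta$ and whose diagonal symbol $\widetilde\beta(q,\delta)$ vanishes transversally on $S_0$ with $p\cdot\nabla_q\widetilde\beta<0$. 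The derivative bounds of Theorem~\ref{prop:parametrization}~(4) on $\widetilde\gamma$ — bounded in $y_1$, and $O(|y_1|/\delta_0)$ in $y'$ — are exactly what is needed to keep the remainder terms in the conjugation controlled by powers of $R\sqrt\eps$ uniformly in $\delta$. Solving the resulting $1$-dimensional model (or applying the two-scale Wigner measure branching of \cite{FG03,FL08}) produces the Landau--Zener weight $T_\eps(q^*,p^*,\delta)$ of \eqref{Tepsdelta} for the transmitted mass and $1-T_\eps$ for the reflected mass, with the outgoing trajectory shifted by $\omega^*$ to preserve energy up to $O(R^2\eps)$. Matching this local transfer with the classical transport on either side reproduces exactly one jump of the semigroup $(\LL^t_\eps)_{t\ge0}$ along each trajectory, so that $w^\eps(\psi^\eps_t)$ and $\LL^t_\eps w^\eps(\psi^\eps_0)$ agree when tested against $a$ up to the accumulated error.

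The bookkeeping of error exponents is what forces the final rate. With $R=\eps^{-1/8}$, the transition analysis produces terms of size $R^3\sqrt\eps=\eps^{1/8}$ and $R^{-5}\eps^{-1/2}=\eps^{1/8}$; the localization assumptions in $(A0)_\delta$, $(A1)_\delta$ (data and observable at distance $\gg\eps^{\beta}$ from $S_0$, non-adiabatic remainder $O(\eps^{\beta})$, with $\beta\ge 1/32$) together with the losses incurred in passing from the normal form back to the original system degrade $\eps^{1/8}$ down to $\eps^{1/32}$; this is where the exponent $1/32$ in assumptions $(A0)_\delta$--$(A1)_\delta$ and in the conclusion comes from. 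Crucially, every constant appearing — in the normal-form remainders, in the Egorov estimate, in the Landau--Zener formula — depends only on finitely many derivative bounds of $\alpha,\beta,\gamma$ (hence is controllable in terms of $\delta_0$ via Theorem~\ref{prop:parametrization}) and on a lower bound for $\det V_1$, which enters through the normalization of $A,B$ and $R(q)$; it is \emph{not} sensitive to the smallness of $\delta$.

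The main obstacle I anticipate is the uniformity in $\delta$ of the microlocal reduction near $\delta=0$: when $\delta\to 0$ the off-diagonal coupling $\delta\widetilde\gamma$ degenerates and the eigenprojectors $\Pi^\pm(q,\delta)$ become singular on $S_0$, so the conjugation operators in Section~\ref{sec:reduction} cannot be constructed naively. Controlling this requires the two-scale / second-microlocalization viewpoint along $S_0$ used in \cite{FG02,FG03}, and checking that the reduction, the remainder estimates, and the branching of the Wigner measure all survive with $\delta$-independent constants — this is precisely the step for which the explicit and affine-in-$\delta$ structure of $\widetilde\beta$ and the sharp derivative bounds on $\widetilde\gamma$ from Theorem~\ref{prop:parametrization} were engineered, and verifying that they suffice is the delicate part of the argument.
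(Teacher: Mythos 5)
Your overall architecture is the paper's: localize near $S_0$ where transversality holds (Lemma~\ref{lem:transverse}), use the classical-transport estimate of Appendix~\ref{proof:prop} away from the crossing, reduce microlocally to a $\delta$-parametrized Landau--Zener model keeping $\delta$ as a controlled parameter, and match one jump of the semigroup with the transition produced by the model, with the drift becoming exact in the new coordinates. However, there is a genuine gap precisely at the point that distinguishes this theorem from the $\eps^{1/8}$ result of \cite{FL08}: your accounting for the exponent $1/32$. You assert that the derivative bounds of Theorem~\ref{prop:parametrization}~(4) ``are exactly what is needed to keep the remainder terms \dots controlled by powers of $R\sqrt\eps$ uniformly in $\delta$'' and then attribute the degradation from $\eps^{1/8}$ to $\eps^{1/32}$ to the localization exponents in $(A0)_\delta$--$(A1)_\delta$ and to unspecified losses in undoing the normal form. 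This is not how the loss arises, and the argument as you state it would not produce the rate. Since the derivatives of $\widetilde\gamma$ (hence of $\check\gamma$) are only $O(1/\delta_0)$, the conjugation of Lemma~\ref{rid_of_s_and_sigma} and the comparison between the conjugated solution $\widetilde v^\eps$ and the Landau--Zener solution $\check v^\eps$ carry an error $C|s|\,r^2\eps^{-1/8}$ (equation~(\ref{eq:comparaison})); one is therefore forced to read off the scattering at an intermediate scale $|s|\sim rR^2\sqrt\eps$ with $1\ll r\le R$, where the scattering and phase asymptotics of \cite[Prop.~7, Lemmas~8--9]{FG03} contribute $O(r^{-1})+O(\eps R^2\ln(rR))$, while the comparison contributes $O(r^3\eps^{1/8})$. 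Balancing $r^3\eps^{1/8}$ against $r^{-1}$ is what dictates $r=\eps^{-1/32}$ and hence the rate $\eps^{1/32}$; the exponents $\ge 1/32$ in the assumptions are chosen to match this, not the other way around. Indeed, Remarks~\ref{rem:1/8}, \ref{rem:1/8bis} and~\ref{rem:1/8ter} show that if $\check\gamma$ had $\delta$-uniform derivative bounds one could take $r=R$ and recover $\eps^{1/8}$, so the uniformity you claim is exactly what fails and what the intermediate scale $r$ is introduced to compensate.

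A secondary point: your fallback option of ``two-scale Wigner measure branching'' would not suffice here, because measure-theoretic branching yields only the limiting transfer, not a quantitative error in powers of $\eps$. The paper instead works at the level of solutions: energy localization (Lemma~\ref{lem:loc}), the explicit conjugations $M^\eps=M_0^\eps+\delta M_1^\eps$ with a priori derivative estimates, the re-expression of the Markov process (eigenvalue numbering, flows, jump set, exact drift in $\sigma$) in the normal-form coordinates, and the operator-valued scattering matrix $S_\eps$ of \cite{FG03}, each step carrying an explicit remainder in $R$, $r$ and $\eps$. Without these quantitative ingredients the matching of $w^\eps(\psi^\eps_t)$ with ${\mathcal L}^t_\eps w^\eps(\psi^\eps_0)$ cannot be closed at the stated rate.
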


The particular choice $\delta=\delta_0$ then implies Theorem~\ref{theorem}.

\begin{remark}\label{rem:delta}
By the construction of Theorem~\ref{prop:parametrization}, the gap function $g(q,\delta)$ has $|\delta|$ as its minimal value. Hence, if  $|\delta|>R\sqrt\eps$, then the jump manifold $\Sigma_\eps$ is empty. In this situation the semigroup $(\mathcal L^t_\eps)_{\eps>0}$ reduces to mere classical transport, that is proven in Appendix~\ref{proof:prop}. 
\end{remark}

\begin{remark}
If the off-diagonal function $\widetilde\gamma$ had derivatives uniformly bounded with respect to the gap parameter $\delta_0$, then Theorem~\ref{theorem:delta-result} would hold with an error of the order $\eps^{1/8}$. We will indicate in Remark~\ref{rem:1/8},~\ref{rem:1/8bis} and~\ref{rem:1/8ter} how the analysis would simplify, if uniform estimates were available.
\end{remark}



\section{Reduction to a Landau--Zener model}\label{sec:reduction}

We now focus on points in the minimal gap hypersurface $S_0$ and construct a  symplectic change of space-time phase space coordinates that allows an elementary microlocal normal form reduction to a Landau--Zener model in \S\ref{sec:normal}. Contrary to the approach in \cite[\S2.8]{CdV1}, the minimal gap size $\delta$ is not treated as another coordinate but as a controlled parameter. 


\subsection{The new symplectic coordinates}\label{sec:coord}
Following the ideas of~\cite[\S6.1]{FG02}, we now construct a symplectic coordinate transformation locally around points in the critical set
\[
S = \left\{(q,t,p,\tau)\in\R^{2d+2}, \; q\in S_0,\; \tau + \tfrac12|p|^2 + \alpha(q) = 0\right\},
\]
that restricts both the energy shells $E^+$ and $E^-$ to the conical crossing situation for $(q,\delta)\in S_0\times\{\delta=0\}$.

\begin{proposition}\label{normalform}
Consider $\rho_0=(q_0,t_0,p_0,\tau_0)\in\R^{2d+2}$ with $q_0\in S_0$ and $p_0\in\R^d$ transverse to $S_0$ at $q_0$. There exists a 
neighborhood $\Omega_2\subset\R^{2d+2}$ of the point $\rho_0$ and for all $\delta\in I$ a positive function $\lambda(\cdot,\delta):\Omega_2\to\R$ such that
$$
\sigma(\rho,\delta)=- \lambda(\rho,\delta)\left(\tau+\tfrac12|p|^2+\alpha(q)\right)\;\;{\rm and} \;\; s(\rho,\delta)=\lambda(\rho,\delta)\widetilde\beta(q,\delta)
$$
satisfy  
$$
\{\sigma(\cdot,\delta),s(\cdot,\delta)\}=1\;\;\text{on}\;\;\Omega_2,
$$
and 
\begin{equation}\label{eq:lambda}
\lambda(\rho,\delta)^2 =-(p\cdot\nabla_q\widetilde\beta(q,\delta))^{-1},\qquad \rho\in S\cap\Omega_2.
\end{equation}
Moreover, all derivatives of $\lambda(\cdot,\delta)$ are uniformly bounded with respect to $\delta\in I$.
\end{proposition}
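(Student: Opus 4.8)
The plan is to reduce the construction of $\lambda(\cdot,\delta)$ to solving a single linear first‑order transport equation and to read off the value~\eqref{eq:lambda} from the algebraic (indicial) part of that equation. Throughout I write $P(q,p,\tau)=\tau+\tfrac12|p|^2+\alpha(q)$, so that $\sigma=-\lambda P$ and $s=\lambda\widetilde\beta$ with $\widetilde\beta=\widetilde\beta(q,\delta)$. First I would note that near $\rho_0$ one has $S=\{P=0\}\cap\{\widetilde\beta=0\}$: by Theorem~\ref{prop:parametrization} the function $\widetilde\beta(q,\delta)$ vanishes on $S_0$, and the formula for $\widetilde\beta$ recalled in the proof of Lemma~\ref{lem:transverse} shows that this vanishing occurs exactly on $S_0$ (the scalar factor there being positive); moreover $\{P,\widetilde\beta\}=p\cdot\nabla_q\widetilde\beta$ is, by Lemma~\ref{lem:transverse}, strictly negative on a $\delta$‑independent neighborhood $\Omega_1$ of $(q_0,p_0)$, so in particular $dP$ and $d\widetilde\beta$ are linearly independent there and $(P,\widetilde\beta)$ completes to a coordinate system near $\rho_0$.

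Next I would expand the Poisson bracket by Leibniz,
\[
\{\sigma,s\}=-\lambda^2\{P,\widetilde\beta\}-\lambda\widetilde\beta\{P,\lambda\}-\lambda P\{\lambda,\widetilde\beta\},
\]
and substitute $v:=\lambda^2>0$, using $\lambda\{P,\lambda\}=\tfrac12\{P,v\}$ and $\lambda\{\lambda,\widetilde\beta\}=-\tfrac12\{\widetilde\beta,v\}$. Setting $g:=\{P,\widetilde\beta\}=p\cdot\nabla_q\widetilde\beta$ and introducing the vector field $X:=\tfrac12(\widetilde\beta H_P-P H_{\widetilde\beta})$, the requirement $\{\sigma,s\}=1$ becomes the \emph{linear} equation
\[
X(v)+g\,v=-1 .
\]
On $S$ we have $P=\widetilde\beta=0$, hence $X=0$ there, and the equation collapses to $g\,v=-1$, i.e.\ $v=-(p\cdot\nabla_q\widetilde\beta)^{-1}$, which is precisely~\eqref{eq:lambda} and is positive by Lemma~\ref{lem:transverse}. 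It remains to solve the transport equation in a full neighborhood with this value imposed along $S$.

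The point that makes this possible is that $X$ acts diagonally on the two functions singled out above: a direct computation gives $X(P)=\tfrac{g}{2}P$ and $X(\widetilde\beta)=\tfrac{g}{2}\widetilde\beta$, so $X$ vanishes precisely on $S$, with transverse linearization $\tfrac{g}{2}\Id$ and $g|_S\neq0$. Expanding $v=\sum_{m\ge0}v_m$ into homogeneous components of degree $m$ in $(P,\widetilde\beta)$ with coefficients depending smoothly on the remaining coordinates, the degree‑preserving part of $X+g$ acts on $v_m$ as multiplication by $g|_S(\tfrac m2+1)\neq0$; hence the $v_m$ are determined recursively ($v_0=-(p\cdot\nabla_q\widetilde\beta)^{-1}|_S$, and each $v_m$ for $m\ge1$ from $v_0,\dots,v_{m-1}$). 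A Borel summation yields a smooth $v_\sharp$ with this Taylor expansion at $S$; the remainder $R:=X(v_\sharp)+g v_\sharp+1$ is flat on $S$, and solving $X(u)+gu=-R$ along the integral curves of $X$ — which accumulate onto $S$, boundedness at the singular end being the selection principle for the scalar ODE — produces a smooth $u$, flat on $S$. Then $v:=v_\sharp+u$ solves $X(v)+gv=-1$, and boundedness also forces uniqueness. Shrinking $\Omega_2$ so that $v>0$ and setting $\lambda(\cdot,\delta):=\sqrt{v}$ finishes the construction. For the uniformity in $\delta$, observe that all data in the equation — $\alpha$, the $\delta$‑affine function $\widetilde\beta(\cdot,\delta)$ with its uniformly bounded derivatives (Theorem~\ref{prop:parametrization}(3)), and $g=p\cdot\nabla_q\widetilde\beta$, bounded away from $0$ on $\Omega_1$ uniformly in $\delta$ by Lemma~\ref{lem:transverse} — depend smoothly on $\delta$ with bounds controlled only by derivative bounds of $\alpha,\beta,\gamma$ and a lower bound on $\det V_1$, and, crucially, the delicate function $\widetilde\gamma$ never appears. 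Hence $\Omega_2$ may be chosen $\delta$‑independent and all derivatives of $\lambda(\cdot,\delta)$ are bounded uniformly in $\delta\in I$.

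The main obstacle is this last step: since $X$ degenerates precisely on $S$, the transport equation is not a standard non‑characteristic Cauchy problem and one cannot simply integrate from $S$. What rescues the argument is the regular‑singular structure — the invertibility of the indicial factor $g|_S(\tfrac m2+1)$ for every $m\ge0$ — which makes the formal solution along $S$ unique and smooth, after which a flatness/boundedness argument upgrades it to a genuine smooth solution; tracking this step uniformly in $\delta$ is routine once the $\delta$‑uniform non‑degeneracy of $g$ on $\Omega_1$ is in hand.
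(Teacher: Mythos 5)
Your proposal is correct, but it takes a genuinely different route from the paper. The paper settles the existence of $\lambda(\cdot,\delta)$ in one stroke by invoking H\"ormander's Lemma~21.3.4 \cite{Ho}, which supplies a positive factor turning the pair $-P,\ \widetilde\beta$ (with $P=\tau+\tfrac12|p|^2+\alpha$) into a conjugate pair wherever $\{P,\widetilde\beta\}\neq0$; the only points argued there are that, since the proof of that lemma solves differential equations whose coefficients depend smoothly on $\delta$ over the bounded interval $I$, the neighborhood $\Omega_2$ and the derivative bounds can be taken uniform in $\delta$, and the one-line evaluation of $1=\{\sigma,s\}=-\lambda^2\,p\cdot\nabla_q\widetilde\beta$ on $S$ (where $P=\widetilde\beta=0$), which gives \eqref{eq:lambda}. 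You instead re-prove the H\"ormander-type lemma from scratch: setting $v=\lambda^2$ you correctly reduce $\{\sigma,s\}=1$ to the linear equation $X(v)+gv=-1$ with $g=\{P,\widetilde\beta\}$ and $X=\tfrac12(\widetilde\beta H_P-PH_{\widetilde\beta})$, exploit the regular-singular structure ($X(P)=\tfrac{g}{2}P$, $X(\widetilde\beta)=\tfrac{g}{2}\widetilde\beta$, indicial factors $g|_S(\tfrac m2+1)\neq0$), and solve via a formal expansion along $S$, Borel summation, and a flat-remainder correction obtained by integrating along the flow of $X$, which contracts onto $S$ because $g<0$. What your route buys is self-containedness and transparency: the value \eqref{eq:lambda} appears as the indicial datum $v_0=-1/g|_S$, and the uniformity in $\delta$ is manifest since only $\alpha$, the $\delta$-affine $\widetilde\beta$ and $g$ enter, never the function $\widetilde\gamma$ with its $1/\delta_0$-sized derivatives; what it costs is the Sternberg-type flat-remainder step, which you only sketch --- written out it requires the standard estimate that flatness of the remainder beats the at-most-exponential growth of derivatives of the flow, and the uniform negative upper bound on $g$ over $\delta\in I$ should be extracted from Lemma~\ref{lem:transverse} by shrinking to a compact neighborhood and using the affine $\delta$-dependence of $\widetilde\beta$ (the lemma as stated gives only strict negativity). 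These are routine completions rather than gaps, so your argument stands as a valid, more elementary alternative to citing \cite{Ho}.
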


\begin{proof}
By Lemma~\ref{lem:transverse}, we have for $q\in S_0$ and $p\in\R^d$ transverse to $S_0$ at $q$
$$
\left\{\tau+\tfrac12|p|^2+\alpha(q),\widetilde\beta(q,\delta)\right\}=p\cdot\nabla  \widetilde\beta(q,\delta)<0.
$$
By \cite[Lemma~21.3.4]{Ho}, we can find $\Omega_2$ and a positive function $\lambda(\cdot,\delta)$ such that the functions 
$\sigma(\cdot,\delta)$ and $s(\cdot,\delta)$ satisfy for all $\delta\in I$
$$
\{\sigma(\cdot,\delta),s(\cdot,\delta)\}=1\quad\text{on}\;\;\Omega_2.
$$ 
Indeed, the proof of \cite[Lemma 21.3.4]{Ho} relies on solving differential equations in the variable $(q,p)$, which requires to restrict the set $\Omega_2$. When this is done with coefficients depending smoothly on $\delta$, for $\delta $ in the bounded interval $I$, the restriction can be taken uniformly in $\Omega_2$. Therefore, the set $\Omega_2$ does not depend on~$\delta$. It remains to compute for $\rho\in S\cap\Omega_2$,
$$
1 = \{\sigma(\rho,\delta),s(\rho,\delta)\} = -\lambda(\rho,\delta)^2 \; p\cdot\nabla_q  \widetilde\beta(q,\delta),
$$
and to observe that the derivatives of $\lambda(\cdot,\delta)$ inherit the boundedness of the derivatives of $\widetilde\beta(\cdot,\delta)$, 
see Theorem~\ref{prop:parametrization}.
\end{proof}

We will use this germ of symplectic coordinates and the rotation matrix $R(q)$ introduced in Theorem~\ref{prop:parametrization} to construct a normal form. 
By the Darboux Theorem (see \cite[Theorem 21.1.6]{Ho}) close to  a point $\rho_0=(q_0,t_0,p_0,\tau_0)\in\R^{2d+2}$ with $q_0\in S_0$ and $p_0$ transverse to $S_0$ at $q_0$, there exists a locally defined canonical transform
 $$\kappa_\delta:\;(s,z,\sigma,\zeta)\mapsto (q,t,p,\tau)$$
with $s,\sigma\in\R$ and $(z,\zeta)\in\R^{2d}$, such that 
\begin{equation}\label{def:B}
(RP  R^*)\circ \kappa_\delta = \frac{1}{\lambda\circ\kappa_\delta} \left(-\sigma +
\begin{pmatrix}
s & \delta \check \gamma\\
\delta\check\gamma & -s 
\end{pmatrix}\right),\qquad \check \gamma =( \lambda \widetilde \gamma)\circ \kappa_\delta,
\end{equation}
and the function $\check\gamma$ is nonzero everywhere.

\medskip
This local change of coordinates preserves the symplectic structure of the phase space $\R^{2d+1}_{q,t}\times\R^{2d+1}_{p,\tau}$: The variables $\sigma$ and $\zeta$ are the dual variables of $s$ and~$z$, respectively. Besides, in the new variables $(s,z,\sigma,\zeta)$, 
the geometry of the conical crossing for $\delta=0$ is simple, since  we have 
\[
E^\pm=\{-\sigma\pm \sqrt {s^2+\delta^2\check\gamma^2}=0\},\qquad S=\{ s=0,\;\sigma=0\}.
\]
In particular, by Remark~\ref{rem:beta<0}, the ingoing and outgoing trajectories are included in the sets $\{s\ge0\}$ and $\{s\le0\}$, respectively. The off-diagonal function $\check\gamma$ satisfies additional properties that will be useful later on:

\begin{lemma}\label{lem:gamma}
Let $\delta\in I$ and $\rho\in S\cap\Omega_2$. Then, 
$$
\delta^2 (\check\gamma^2 \circ\kappa^{-1}_\delta)(\rho) = \tfrac14 \,g(q,\delta)^2 \left( |\det p\cdot \nabla_q V_0(q,\delta)|^{-1/2}  + O(\delta^2)\right).
$$
\end{lemma}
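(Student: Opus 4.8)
The plan is to unwind the definitions of $\check\gamma$, $\lambda$ and $g$ at a point $\rho\in S\cap\Omega_2$ and then compare the two square-root-type quantities that appear. By the very definition in \eqref{def:B} we have $\check\gamma = (\lambda\widetilde\gamma)\circ\kappa_\delta$, so that for $\rho = \kappa_\delta(s,z,\sigma,\zeta)\in S\cap\Omega_2$, writing $(q,p)$ for the associated position-momentum coordinates,
\[
\delta^2(\check\gamma^2\circ\kappa_\delta^{-1})(\rho) = \delta^2\,\lambda(\rho,\delta)^2\,\widetilde\gamma(q)^2.
\]
On the other hand, the gap of $V(q,\delta)$ computed in Proposition~\ref{prop:parametrization}(3) is $g(q,\delta)^2 = 4\bigl(\widetilde\beta(q,\delta)^2 + \delta^2\widetilde\gamma(q)^2\bigr)$, and on the critical set $S$ one has $q\in S_0$, where Proposition~\ref{prop:parametrization} gives $\widetilde\beta(q,\delta)=0$; hence $g(q,\delta)^2 = 4\delta^2\widetilde\gamma(q)^2$ for $\rho\in S\cap\Omega_2$. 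Therefore the statement to prove is equivalent, on $S\cap\Omega_2$, to
\[
\lambda(\rho,\delta)^2 = |\det(p\cdot\nabla_q V_0(q,\delta))|^{-1/2} + O(\delta^2).
\]

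The second ingredient is the identity for $\lambda$ from \eqref{eq:lambda}: for $\rho\in S\cap\Omega_2$, $\lambda(\rho,\delta)^2 = -\bigl(p\cdot\nabla_q\widetilde\beta(q,\delta)\bigr)^{-1}$, which is positive by Lemma~\ref{lem:transverse}. So I need to relate $-p\cdot\nabla_q\widetilde\beta(q,\delta)$ to $|\det(p\cdot\nabla_q V_0(q,\delta))|^{1/2}$ up to a relative error $O(\delta^2)$. For this I use the determinant formula already recorded in the proof of Lemma~\ref{lem:transverse}, namely
\[
-\det\bigl(p\cdot\nabla_q V_0(q,\delta)\bigr) = \bigl(p\cdot\nabla_q\widetilde\beta(q,\delta)\bigr)^2 + \delta^2\bigl(p\cdot\nabla_q\widetilde\gamma(q)\bigr)^2,
\]
which holds because $R(q)V_0(q,\delta)R(q)^*$ has trace-free part with entries $\widetilde\beta$ and $\delta\widetilde\gamma$, $R(q)$ is orthogonal, and $p\cdot\nabla_q$ acting on $R V_0 R^*$ differs from $R(p\cdot\nabla_q V_0)R^*$ by terms involving $\nabla_q R$ times $V_0$, which vanish on $S_0$ where $\widetilde\beta=0$ and the trace-free part is $\delta\widetilde\gamma\begin{pmatrix}0&1\\1&0\end{pmatrix}$ — one should check this commutator term carefully, or simply note that the determinant of a $2\times2$ trace-free matrix is minus the sum of squares of its independent entries and conjugation by $R$ and differentiation commute up to lower order on $S_0$. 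Granting the displayed determinant identity, I get
\[
|\det(p\cdot\nabla_q V_0(q,\delta))|^{1/2} = |p\cdot\nabla_q\widetilde\beta(q,\delta)|\,\Bigl(1 + \delta^2\,\tfrac{(p\cdot\nabla_q\widetilde\gamma)^2}{(p\cdot\nabla_q\widetilde\beta)^2}\Bigr)^{1/2} = -p\cdot\nabla_q\widetilde\beta(q,\delta) + O(\delta^2),
\]
using that $p\cdot\nabla_q\widetilde\beta<0$ is bounded away from zero on $\Omega_2$ (Lemma~\ref{lem:transverse}) and that $p\cdot\nabla_q\widetilde\gamma$ is bounded on the relevant compact set. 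Inverting and taking square roots, $\lambda(\rho,\delta)^2 = (-p\cdot\nabla_q\widetilde\beta(q,\delta))^{-1} = |\det(p\cdot\nabla_q V_0(q,\delta))|^{-1/2} + O(\delta^2)$, which together with $g(q,\delta)^2 = 4\delta^2\widetilde\gamma(q)^2$ gives the claim after multiplying through by $\tfrac14 g(q,\delta)^2$.

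The step I expect to require the most care is the commutator issue in establishing the determinant identity: $p\cdot\nabla_q\bigl(R V_0(\cdot,\delta) R^*\bigr)$ is not literally $R\,\bigl(p\cdot\nabla_q V_0(\cdot,\delta)\bigr)\,R^*$, so one must argue that the extra terms $\bigl(p\cdot\nabla_q R\bigr)V_0 R^* + R V_0\bigl(p\cdot\nabla_q R^*\bigr)$ do not affect the determinant of the trace-free part to the order claimed, which works precisely because on $S_0$ the trace-free part of $V_0(q,\delta)$ is $O(\delta)$ (its $\widetilde\beta$-entry vanishes there). Away from $S_0$ one would pick up an $O(\delta^2 y_1)$ correction from $\widetilde\beta = O(y_1)$, but since the lemma is stated only for $\rho\in S\cap\Omega_2$, i.e.\ $q\in S_0$, this is exactly the regime where the correction is harmless. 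The rest is bookkeeping with the uniform derivative bounds furnished by Proposition~\ref{prop:parametrization} and Proposition~\ref{normalform}.
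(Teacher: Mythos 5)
Your proof follows the same skeleton as the paper's: write $\delta^2\,\check\gamma^2\circ\kappa_\delta^{-1}=\delta^2\lambda^2\widetilde\gamma^2$, use $\widetilde\beta(q,\delta)=0$ and $g(q,\delta)^2=4\delta^2\widetilde\gamma(q)^2$ for $q\in S_0$, invoke \eqref{eq:lambda} for $\lambda^2=-(p\cdot\nabla_q\widetilde\beta)^{-1}$ on $S$, and convert $-p\cdot\nabla_q\widetilde\beta$ into $|\det(p\cdot\nabla_qV_0)|^{1/2}$ via the identity recorded at the end of the proof of Lemma~\ref{lem:transverse}. The genuine gap is in the step
\[
\left((p\cdot\nabla_q\widetilde\beta)^2+\delta^2(p\cdot\nabla_q\widetilde\gamma)^2\right)^{1/2}=-\,p\cdot\nabla_q\widetilde\beta+O(\delta^2),
\]
which you justify by saying that $p\cdot\nabla_q\widetilde\gamma$ is ``bounded on the relevant compact set''. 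A compactness bound is not sufficient: by Theorem~\ref{prop:parametrization}(4) the derivatives of $\widetilde\gamma$ are in general only $O(1/\delta_0)$, all constants in the paper must be uniform in $\delta_0$, and in the regime where the lemma is actually used one has $\delta$ comparable to $\delta_0$ (both $\le R\sqrt\eps$). A bound $|p\cdot\nabla_q\widetilde\gamma|\lesssim 1/\delta_0$ would make your correction term $\delta^2(p\cdot\nabla_q\widetilde\gamma)^2/(p\cdot\nabla_q\widetilde\beta)^2$ of size $O(1)$ and destroy the claimed $O(\delta^2)$. The point that needs an argument --- and that occupies the first half of the paper's proof --- is that at points of $S$ the \emph{particular} directional derivative $p\cdot\nabla_q\widetilde\gamma$ is bounded uniformly in $\delta,\delta_0$: the paper writes $p\cdot\nabla_q\widetilde\gamma=\{\tfrac{|p|^2}{2},\widetilde\gamma\}=-\tfrac1\lambda\,\partial_s(\widetilde\gamma\circ\kappa_\delta)\circ\kappa_\delta^{-1}$ on $S$ and observes that $\partial_s$ is a derivative normal to $S_0=\{s=0\}$, for which Theorem~\ref{prop:parametrization}(4) gives a $\delta_0$-independent bound. (Equivalently, one can note that the tangential derivatives $\partial_{y'}\widetilde\gamma$ vanish at $y_1=0$, so only the bounded normal derivative contributes at $q\in S_0$; some such argument must be supplied, and your proposal as written does not supply it.)

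A secondary remark concerns the determinant identity. The paper treats
\[
-\det\bigl(p\cdot\nabla_qV_0(q,\delta)\bigr)=(p\cdot\nabla_q\widetilde\beta(q,\delta))^2+\delta^2(p\cdot\nabla_q\widetilde\gamma(q))^2
\]
as already established in the proof of Lemma~\ref{lem:transverse} and simply cites it, so your extended discussion of the commutator terms $(p\cdot\nabla_qR)\,V_0\,R^*+R\,V_0\,(p\cdot\nabla_qR^*)$ is not needed to match the paper's argument. Be aware, however, that the justification you sketch (``the trace-free part is $O(\delta)$ on $S_0$, hence harmless'') only controls the resulting discrepancy between the two determinants to order $\delta$, not $\delta^2$: the linear-in-perturbation term of the determinant expansion is generically of size $\delta$ when the perturbing matrix is $O(\delta)$. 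So if you insist on re-deriving the identity rather than citing it, the argument as you state it would only yield a remainder $O(\delta)$ in the lemma; either cite the identity as the paper does, or track the cross terms explicitly.
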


\begin{proof}
For $q\in S_0$ we have $\widetilde\beta(q,\delta)=0$ and 
$g(q,\delta)^2= 4 \delta^2 \, \widetilde \gamma(q)^2$. We also observe that
$$
p\cdot\nabla_q\widetilde \gamma(q) =  \{\tfrac{|p|^2}{2},\widetilde\gamma\} = 
-\tfrac{1}{\lambda}\{\sigma,\widetilde\gamma\} - \sigma \{\tfrac{1}{\lambda},\widetilde\gamma\} = 
-\tfrac{1}{\lambda}\partial_s(\widetilde\gamma\circ\kappa_\delta) \circ\kappa_\delta^{-1}
$$
for $\rho=(q,t,p,\tau)\in S$, and that $\partial_s(\widetilde\gamma\circ\kappa_\delta)$ is a derivative normal to $S_0=\{s=0\}$. 
Therefore, by Theorem~\ref{prop:parametrization}, the product $p\cdot\nabla_q\widetilde\gamma(q)$ is bounded. 
Using equation \eqref{eq:lambda}, we then obtain
\begin{align*}
\delta^2 (\check\gamma^2 \circ\kappa^{-1}_\delta)(\rho) &= \delta^2 \lambda(\rho,\delta)^2 \widetilde\gamma(q)^2 = 
-\tfrac14 g(q,\delta)^2 (p\cdot\nabla_q \widetilde\beta(q,\delta))^{-1}\\
&=
\tfrac14 g(q,\delta)^2 \left(\left((p\cdot\nabla_q \widetilde\beta(q,\delta))^2 + \delta^2(p\cdot\nabla_q\widetilde\gamma(q))^2\right)^{-1/2} + O(\delta^2)\right)\\
&=
\tfrac14 \,g(q,\delta)^2 \left( |\det p\cdot \nabla_q V_0(q,\delta)|^{-1/2}  + O(\delta^2)\right).
\end{align*}
\end{proof}


\subsection{The quantization of the normal form}\label{sec:normal}

We now lift the classical normal form~(\ref{def:B}) to the quantum level using  Fourier integral operator theory. 
We define the matrix-valued symbol 
$$
B(q,t,p,\tau,\delta) = \sqrt{\lambda(q,t,p,\tau,\delta)} R(q)
$$ 
for $\delta\in I$ and $(q,t,p,\tau)\in\Omega_2\subset\R^{2d+2}$. The following quantization process retains the minimal gap size $\delta$ as a controlled parameter.

\begin{proposition}\label{prop:quant}
Consider $\rho_0=(q_0,t_0,p_0,\tau_0)\in\R^{2d+2}$ with $q_0\in S_0$ and $p_0\in\R^d$ transverse to $S_0$ at $q_0$. 
Then there exist neighborhoods $\widetilde I\subset I$ of $\delta=0$ and $\Omega_3\subset\Omega_2$ of the point $\rho_0$,  a matrix-valued function $B_\eps=B+\eps B_1$ defined on $\Omega_3$, a canonical transform~$\kappa_\eps$ which is a perturbation of order $\eps$ of the canonical transform $\kappa_{\delta}$, and a unitary operator $K_\eps$ of ${\mathcal L}(L^2(\R^d))$ such that the transformed solution 
$$
v^\eps=K_\eps^* \,\op_\eps(B^*_\eps)^{-1}\psi^\eps_t
$$
satisfies
\begin{equation}\label{eq:systreduit}
\op_\eps(\varphi)\;\op_\eps\!
\begin{pmatrix}
-\sigma + s & \delta \check \gamma \\
\delta\check\gamma& -\sigma -s 
\end{pmatrix}v^\eps  = O(\eps^2),
\end{equation}
for any compactly supported function $\varphi\in{\mathcal C}^\infty_c(\R^{2d+2})$. 
\end{proposition}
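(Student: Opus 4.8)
The plan is to follow the classical three-step recipe for putting a system of pseudodifferential operators into a microlocal normal form: (i) use the symbol $B$ to reduce the principal symbol $P$ to the rotated form $RPR^*$ of Theorem~\ref{prop:parametrization}; (ii) apply a Fourier integral operator $K_\eps$ quantizing the canonical transform $\kappa_\delta$ of the Darboux step to straighten the $(q,t,p,\tau)$ variables into $(s,z,\sigma,\zeta)$; (iii) correct the remaining $O(\eps)$ discrepancies by perturbing $B$ to $B_\eps = B+\eps B_1$ and $\kappa_\delta$ to $\kappa_\eps$, choosing $B_1$ by a transport equation so that the subprincipal terms cancel. Throughout, the point is to keep $\delta\in\widetilde I$ as a mere parameter, which is possible because Proposition~\ref{normalform} and Theorem~\ref{prop:parametrization} give $\delta$-uniform bounds on $\lambda$, $R$, $\widetilde\beta$, $\widetilde\gamma$ and because the underlying ODEs of the Darboux/Hörmander construction depend smoothly on $\delta$.

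Concretely, I would first observe that $\psi^\eps_t$ solves $\op_\eps(P)\psi^\eps_t = O(\eps^\infty)$ microlocally near $\rho_0$, with $P = (\tau + \tfrac12|p|^2)\Id + V(q,\delta)$ and $\tau$ the symbol of $i\eps\partial_t$. Conjugating by $\op_\eps(B^*)^{-1}$ and using symbolic calculus, $\op_\eps(B^*)^{-1}\op_\eps(P)\op_\eps(B^*)$ has principal symbol $\lambda\, R P R^* = -\sigma\Id + \begin{pmatrix} s & \delta\check\gamma \\ \delta\check\gamma & -s\end{pmatrix}\circ\kappa_\delta^{-1}$ (this is exactly \eqref{def:B}, written before conjugating by $\kappa_\delta$), at the price of an $O(\eps)$ matrix-valued correction coming from the non-commutativity of $B$ and $P$; since $B = \sqrt\lambda R$ is a scalar times an orthogonal matrix with $\delta$-uniformly bounded derivatives, this correction has $\delta$-uniformly bounded symbol. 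Then I quantize $\kappa_\delta$ by an elliptic Fourier integral operator $U_\delta$ (Egorov's theorem), and conjugating by $U_\delta$ turns the principal symbol into $-\sigma\Id + \begin{pmatrix} s & \delta\check\gamma \\ \delta\check\gamma & -s\end{pmatrix}$ exactly, again up to a fresh $O(\eps)$ term. Setting $K_\eps$ to be (a unitary renormalization of) $U_\delta$ composed with the $\op_\eps(B^*)^{-1}$ conjugation, one arrives at an operator whose full symbol is the desired matrix $\begin{pmatrix}-\sigma+s & \delta\check\gamma \\ \delta\check\gamma & -\sigma-s\end{pmatrix}$ plus an explicit $O(\eps)$ remainder $\eps\, r(q,t,p,\tau,\delta)$ with $\delta$-uniform bounds.

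The last step removes this $\eps r$ up to $O(\eps^2)$. Here one seeks $B_1$ and the $O(\eps)$ correction $\kappa_\eps$ of the canonical transform so that the induced first-order change in the symbol cancels $r$; this amounts to solving, along the Hamiltonian flow of $-\sigma+s$ (equivalently of $-\sigma-s$ on the other diagonal block), a scalar transport equation of the form $\{-\sigma\pm s, b\} = (\text{entries of } r)$ plus an algebraic adjustment for the off-diagonal entries — solvable because the characteristic flow is non-trapped near $\rho_0$ (the vector fields $H,H'$ of \eqref{def:H}–\eqref{def:H'} are transverse, $\omega(H,H')<0$), and the solution has $\delta$-uniform bounds by Proposition~\ref{normalform}. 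Shrinking to neighborhoods $\widetilde I$ and $\Omega_3$ where all these constructions are simultaneously valid, and multiplying on the left by $\op_\eps(\varphi)$ to localize, gives \eqref{eq:systreduit}. Finally, one checks $K_\eps$ can be taken unitary on $L^2(\R^d)$ by a standard normalization of the elliptic FIO (dividing by $\sqrt{K_\eps^*K_\eps}$, whose symbol is elliptic and $\eps$-admissible), so $v^\eps = K_\eps^*\op_\eps(B_\eps^*)^{-1}\psi^\eps_t$ is well defined with $\|v^\eps\|_{L^2}\lesssim\|\psi^\eps_t\|_{L^2}$.

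The main obstacle is book-keeping the $\delta$-uniformity at every stage: the $O(\eps)$ corrections produced by symbolic calculus involve derivatives of $\check\gamma = (\lambda\widetilde\gamma)\circ\kappa_\delta$, and by Theorem~\ref{prop:parametrization}(4) the tangential derivatives of $\widetilde\gamma$ are only $O(|y_1|/\delta_0)$ while other derivatives are $O(1/\delta_0)$, so one must verify that wherever a derivative of $\check\gamma$ appears it is either multiplied by a compensating factor or it is a normal derivative (which is genuinely bounded). This is precisely the point flagged in the excerpt (Remark~\ref{rem:1/8} etc.) where uniform estimates on $\widetilde\gamma$ would simplify matters; here one must carry the non-uniform bounds through and see that they only degrade the final error, not the structure of the normal form. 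The FIO and transport-equation steps themselves are standard (following \cite{FG02,FG03} and \cite[Ch.~21]{Ho}); the novelty is entirely in threading $\delta$ as a controlled parameter through them.
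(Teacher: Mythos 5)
Your first two steps follow the same architecture as the paper: replace $\op_\eps(P)\psi^\eps_t=0$ by $\op_\eps(B)\op_\eps(P)\op_\eps(B^*)$ acting on $\op_\eps(B^*)^{-1}\psi^\eps_t$, and conjugate by a Fourier integral operator quantizing $\kappa_\delta$ (the paper takes $K_0$ from \cite{FG02}, \S 2.2, which conjugates symbols up to $O(\eps^2)$, and treats $\delta\check\gamma$ as a single symbol with $\delta$-uniformly bounded derivatives, which settles the uniformity issue you flag at the end). The genuine gap is in your third step, the removal of the order-$\eps$ term. What must be solved is not a transport equation $\{-\sigma\pm s,\,b\}=r$ for the diagonal entries plus an ``algebraic adjustment'' of the off-diagonal ones; it is the matrix equation $B_1PB^*+BPB_1^*+R+\{\varphi,BPB^*\}=0$, with $R=\frac{1}{2i}\bigl(B\{P,B^*\}+\{B,P\}B^*\bigr)$, where the scalar unknown $\varphi$ enters through the Poisson bracket of $\varphi$ with the \emph{known} symbol $BPB^*$ (it encodes the $O(\eps)$ deformation of the canonical transform), and $B_1$ enters algebraically through $C\mapsto CQ+QC^*$ with $Q=-\sigma\,\Id+sJ+\delta\check\gamma L$. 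Your proposed mechanism fails exactly where the proposition is delicate: removing off-diagonal $O(\eps)$ terms algebraically means dividing by the eigenvalue gap $2\sqrt{s^2+\delta^2\check\gamma^2}$, which vanishes on $\{s=\sigma=0\}$ at $\delta=0$ and is only of size $\delta\le R\sqrt\eps$ in the relevant regime, so the ``correction'' is neither small nor an admissible symbol near the crossing --- and if such a division were possible uniformly it would decouple the two levels and erase the very Landau--Zener coupling $\delta\check\gamma$ that the normal form is designed to retain. Non-trapping of the characteristics is irrelevant to this obstruction, and at $\delta=0$ the eigenvalues $-\sigma\pm\sqrt{s^2+\delta^2\check\gamma^2}$ are not smooth across $\{s=0\}$, so the transport step itself is ill-defined there.

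The paper closes this step differently: the solvability of $B_1PB^*+BPB_1^*+R+\{\varphi,BPB^*\}=0$ is taken from \cite{CdV1} (Lemma~5), which exploits the specific structure of $R$ and the extra freedom in $\varphi$ precisely where $Q$ degenerates; the cancellation is then implemented operator-theoretically by a homotopy, with $\kappa_1(\tau)$ the flow of the Hamiltonian $1+\eps\varphi$, $K_\eps(\tau)$ the propagator solving $i\eps\frac{\d}{\d\tau}K_\eps(\tau)=\op_\eps(1+\eps\varphi)K_\eps(\tau)$ (unitary by construction, so no normalization by $\sqrt{K_\eps^*K_\eps}$ is needed), $B_\eps(\tau)=B+\eps\tau B_1$, and the verification that $\frac{\d}{\d\tau}L_\eps(\tau)=O(\eps^2)$ for the interpolated operator $L_\eps(\tau)$, whence $L_\eps(1)=L_\eps(0)+O(\eps^2)$. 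Unless you supply an argument of this type --- or re-prove the solvability lemma uniformly for $\delta\in\widetilde I$, including $\delta=0$ where the crossing is conical --- your step (iii) does not go through, and that step is the heart of the proposition.
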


Proposition~\ref{prop:quant} allows us to microlocally trade the original Schr\"odinger equation $\op_\eps(P)\psi^\eps_t = 0$ for the reduced system~(\ref{eq:systreduit}).  When saying that the canonical transform $\kappa_\eps$ is  a perturbation of order $\eps$ of $\kappa_\delta$, we mean that $\kappa_\eps$ is defined on a subset $\Omega_3$ of the open set where $\kappa_\delta $ is defined and that for all 
$a\in{\mathcal C}_c^\infty(\kappa_\eps^{-1}(\Omega_3))$, 
$a\circ \kappa_\eps =a\circ\kappa_\delta +O(\eps)$ with respect to the semi-norms of the derivatives of $a$.

\medskip 

We provide the proof of Proposition~\ref{prop:quant} for the sake of completeness. It relies on  the Fourier integral operator construction of~\cite[\S2.2]{FG02}, which is based on Egorov's Theorem, and follows part of the schedule of Colin de Verdi\`ere's normal form approach (see \cite[Theorem~3]{CdV1} and also~\cite[Theorem~1]{Fe06}).

\begin{proof}
The first step uses~\cite[\S2.2]{FG02}: There exists a Fourier integral operator $K_0$ associated with $\kappa_\delta$ such that 
for any $a\in{\mathcal C}_c^\infty(\Omega_2, \C^{2\times 2})$,
$$
K_0^* \op_\eps(a) K_0- \op_\eps(a\circ\kappa_\delta)  = O(\eps^2)
$$
in ${\mathcal L}(L^2(\R^{d+1}))$, 
where the $O(\eps^2)$ contains semi-norms of derivatives of $a$.  We note that the the Fourier integral operator is a diagonal operator with the same scalar operator on each position of the diagonal. 

\medskip
The second steps turns to the classical normal form \eqref{def:B} to obtain 
\[
K_0^*\op_\eps(B P B^*) K_0=\op_\eps\!\begin{pmatrix}-\sigma+s & \delta\check\gamma\\ \delta \check\gamma & -\sigma-s\end{pmatrix}+ O(\eps^2).
\]
This suggests the change of unknown $\psi^\eps \mapsto K_0^*\op_\eps(B^*)^{-1}\psi^\eps$. However, this change of unknown generates unsatisfactory terms of order $\eps$, since we have 
by symbolic calculus
$$\op_\eps(BPB^*)=\op_\eps(B)\op_\eps(P)\op_\eps(B^*) + \eps\, \op_\eps(R) +O(\eps ^2),$$
where $R$ is the self-adjoint matrix defined by
$$R= {1\over 2i}( B\{P,B^*\} + \{B,P\}B^*).$$
More precisely, we have 
\[
K_0^*\op_\eps(B)\op_\eps( P)\op_\eps( B^*) K_0=\op_\eps\!\begin{pmatrix}-\sigma+s & \delta\check\gamma\\ \delta \check\gamma & -\sigma-s\end{pmatrix} +\eps\,\op_\eps(R\circ\kappa_\delta)+ O(\eps^2).
\]
We note that here the function $\delta\check\gamma$ is treated as a whole. Since it has bounded derivatives, the above remainder is uniform with respect to $\delta$.
Next we will remove the term of order $\eps$ by modifying $B$ and $\kappa_\delta$ at order~$\eps$, which will give the stated change of unknown $v^\eps=K_\eps^*\op_\eps(B^*_\eps)^{-1} \psi^\eps$. 

\medskip
In the rest of this proof, and only here, $\tau$ will be a parameter belonging to $[0,1]$. We define the canonical transform $\kappa_1(\tau)$ which is a perturbation of identity and solves the Hamiltonian equation 
$$\frac{\d}{\d\tau} \kappa_1(\tau ) =H_{1+\eps\varphi} \kappa_1(\tau),\;\; \kappa_1(0)={\rm Id},$$
where $\varphi\in{\mathcal C}_c^\infty(\R^{2d+2})$ is a smooth function that we shall fix later on.  The canonical transform $\kappa_1$ is a perturbation of order $\eps$ of the identity, so that 
$$\kappa_\eps :=\kappa_\delta\circ\kappa_1(1)$$
 is the sought-after perturbation of order $\eps$ of $\kappa_\delta$.
 We associate with $\kappa_\eps(\tau)$ a Fourier integral operator $K_\eps(\tau)$ by setting 
 $$i\eps \frac{\d}{\d\tau} K_\eps(\tau) = \op_\eps (1+\eps \varphi) K_\eps (\tau),\;\; K_\eps (0)={\rm Id}.$$
 The solution of our problem will be 
 $$K_\eps:=K_0\circ K_\eps(1).$$
 Note that this construction method \cite[\S2.2]{FG02} has been used for $K_0$: Given a canonical transform, one links it to the identity in a differentiable way,
thereby defining a function $\varphi$ 
and the operator $K_\eps(\tau)$ as a solution of a differential system. 

\medskip
 We define the matrix $B_\eps(\tau)=B+\eps \tau B_1$ where $B_1$ will be fixed later, such
 $$B_\eps:=B_\eps(1),$$
 will be the solution of the Proposition.
 Let us now investigate how $B_1$ and $\varphi$ have to be chosen, which might require to restrict to smaller neighbourhood of $\delta=0$ and $\rho_0$. 
 For $\tau\in (0,1)$, we set 
 $$L_\eps(\tau):= K_\eps(\tau)^* 
 K_0^* \left[\op_\eps(B_\eps(\tau))\op_\eps (P) \op_\eps (B_\eps(\tau)^*)-\eps\,(1-\tau) \op_\eps(R)
 \right]K_0 K_\eps(\tau).$$
 We have 
 $$
 L_\eps(0)=\op_\eps\!\begin{pmatrix}-\sigma+s & \delta\check\gamma\\ \delta \check\gamma & -\sigma-s\end{pmatrix} + O(\eps^2)
 $$
  and we are going to prove that we can find $B_1$ and $\varphi$ such that $\frac{\d}{\d\tau} L_\eps(\tau)=O(\eps^2)$, so that we shall get $L_\eps(1)=L_\eps(0)+ O(\eps^2)$ which will conclude our proof. 

\medskip
A simple computation shows that 
 \begin{eqnarray*}
\frac{\d}{\d\tau} L_\eps(\tau) & = & \eps\,  K_\eps(\tau)^*  K_0^* \Bigl[
 \op_\eps(B_1P B_0^*+B_0 P B_1^*)+ \op_\eps(R) \\ 
 &&+ {1\over i} \left[\op_\eps (\varphi), \op_\eps(B_0 PB_0^*)\right]
\Bigr] K_0 K_\eps(\tau)  +O(\eps^2)\\
 & = &  \eps \, K_\eps(\tau)^*  K_0^* \op_\eps\left[ B_1 PB_0^* + B_0 P B_1^* + R + \{\varphi, B_0 P B_0^*\}
\right] K_0 K_\eps(\tau)\\ 
&&  +\,O(\eps^2) .
 \end{eqnarray*}
The choice of $B_1$ and $\varphi$ such that $B_1 PB_0^* + B_0 P B_1^* + R + \{\varphi, B_0 P B_0^*\}=0$ is possible by~\cite[Lemma~5]{CdV1}. 
\end{proof}

\subsection{The off-diagonal components}
Our final step towards the Landau--Zener model is to remove the dependence of  the off-diagonal function $\check\gamma(s,z,\sigma,\zeta,\delta)$ on the coordinates $s$ and $\sigma$, following the method proposed in~\cite[Lemma~5 and 6]{FG03}, see also~\cite[Proposition~8]{FG02}.
From now on, we restrict ourselves to 
\[
0<\delta\le R\sqrt\eps,\qquad R=R(\eps) = \eps^{-1/8},
\] 
see also Remark~\ref{rem:delta}. Moreover, 
since the scattering result for the Landau--Zener system, that we use in Section~\ref{sec:LZ_form}, has an error estimate of the order $R^2\sqrt\eps/s$, 
we also start focusing on regions, where 
\[
\tfrac12 r R^2\sqrt\eps \le |s| \le r R^2\sqrt\eps 
\] 
for some $1\ll r \le R$, that will be chosen as $r = r(\eps) = \eps^{-1/32}$ later on. These choices of $r$ and $R$ imply that 
$\delta\le |s|$ as soon as $\eps^{5/32}\le 1/2$.

\begin{lemma}\label{rid_of_s_and_sigma}
On $\kappa_\delta(\Omega_3)\times\widetilde I$, there
exist matrix-valued functions $M^\eps(s,z,\sigma,\zeta,\delta)$ and $\widetilde M^\eps(s,z,\sigma,\zeta,\delta)$, such that 
$$
M^\eps = M_0^\eps + \delta M_1^\eps,\qquad \widetilde M^\eps = \widetilde M_0^\eps + \delta \widetilde M_1^\eps, 
$$
with $M_0^\eps$ and $\widetilde M_0^\eps$ unitary matrices, and for all $\varphi\in\mathcal C^\infty_c(\R^{2d+2})$ supported in a set 
with $s=O(r R^2\sqrt\eps)$, with $1\ll r\leq R$,  one has
\begin{align*}
& \op_\eps(\varphi)\,
\op_\eps(\widetilde M^\eps)\,  \op_\eps\!\begin{pmatrix}-\sigma+ s & \delta \check \gamma \\\delta\check\gamma& -\sigma-s 
\end{pmatrix} =\\
&\qquad \op_\eps(\varphi)\, \op_\eps\!\begin{pmatrix}
-\sigma + s & \delta \check \gamma_0 \\
\delta\check\gamma_0& -\sigma -s 
\end{pmatrix} \op_\eps(M^\eps) + O(r^2\eps^{7/8})
\end{align*}
in ${\mathcal L}(L^2(\R^{d+1}))$, where 
$$
\check\gamma_0(z,\zeta,\delta) = \check\gamma(0,z,0,\zeta,\delta).
$$ 
Moreover, the families $(\op_\eps(\varphi)\, \op_\eps(M^\eps))_{\eps,\delta>0}$ and $(\op_\eps(\varphi)\, \op_\eps(\widetilde M^\eps))_{\eps,\delta>0}$ are uniformly bounded in ${\mathcal L}(L^2(\R^{d+1}))$.
\end{lemma}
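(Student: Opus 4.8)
\textbf{Proof strategy for Lemma~\ref{rid_of_s_and_sigma}.}
The plan is to build the conjugating matrices $M^\eps$ and $\widetilde M^\eps$ iteratively, killing the $s$- and $\sigma$-dependence of $\check\gamma$ one symbol order at a time, exactly as in \cite[Lemma~5 and 6]{FG03} but tracking carefully how the various bounds depend on $\delta$ and on the localization parameter $r$. First I would write $\check\gamma(s,z,\sigma,\zeta,\delta) = \check\gamma_0(z,\zeta,\delta) + s\,\check\gamma_1 + \sigma\,\check\gamma_2 + O(s^2+\sigma^2)$ near $S=\{s=0,\sigma=0\}$, using that on the support of $\varphi$ one has $|s|=O(rR^2\sqrt\eps)$ and, since we work microlocally near an energy shell, $|\sigma|$ is of the same order (the $O(\eps^2)$ in \eqref{eq:systreduit} confines $v^\eps$ to a neighbourhood of $\{-\sigma\pm\sqrt{s^2+\delta^2\check\gamma^2}=0\}$, hence $|\sigma|\lesssim |s|+\delta|\check\gamma_0|\lesssim rR^2\sqrt\eps$). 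The off-diagonal correction we must remove is therefore of size $\delta\cdot rR^2\sqrt\eps = O(r R^2\sqrt\eps)$ since $\delta\le R\sqrt\eps\le 1$, but more importantly the \emph{difference} $\delta(\check\gamma-\check\gamma_0)$ is of size $\delta\,rR^2\sqrt\eps$.

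The core step is an algebraic one: I look for $M^\eps = \exp(i\,\op_\eps(F)/\text{(something)})$-type unitary conjugations, or rather work directly at the symbol level, seeking matrices $M^\eps, \widetilde M^\eps$ with unitary principal parts $M_0^\eps,\widetilde M_0^\eps$ such that
\[
\widetilde M^\eps \begin{pmatrix}-\sigma+s & \delta\check\gamma\\ \delta\check\gamma & -\sigma-s\end{pmatrix}
= \begin{pmatrix}-\sigma+s & \delta\check\gamma_0\\ \delta\check\gamma_0 & -\sigma-s\end{pmatrix} M^\eps + (\text{lower order}).
\]
At the principal level one takes $M_0^\eps=\widetilde M_0^\eps=\Id$ (the two matrix symbols agree to leading order), and the $O(\eps)$ obstruction is a Poisson-bracket/commutator equation of the form $\{\,\cdot\,,\,\text{diag part}\,\} + [\text{off-diag}] = \delta(\check\gamma-\check\gamma_0)\sigma_1$, which is solvable along the Hamiltonian flow of $-\sigma+s$ transverse to $S$ exactly as in \cite[Lemma~6]{FG03}; solvability uses the ellipticity provided by $\{\sigma,s\}=1$ from Proposition~\ref{normalform}. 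Writing $M^\eps = M_0^\eps + \delta M_1^\eps$ isolates the $\delta$-linear correction, whose derivative bounds inherit those of $\check\gamma$, hence of $\widetilde\gamma$ via Theorem~\ref{prop:parametrization}, and are therefore controlled (with the stated dependence on $\det V_1$). Iterating this construction a fixed finite number of times removes the $s,\sigma$-dependence up to a symbol of size $O((rR^2\sqrt\eps)^2) = O(r^2 R^4\eps) = O(r^2\eps^{1/2}) $; but one must be slightly more careful — the claimed remainder is $O(r^2\eps^{7/8})$. This comes from the gain of one factor $\eps^{1/8}$ beyond the naive $r^2 R^4\eps = r^2\eps^{1/2}$: the remainder symbol is supported where $|s|\gtrsim rR^2\sqrt\eps$ and carries an extra factor $\delta/|s|\lesssim \delta/(rR^2\sqrt\eps)$ from the Taylor remainder of $\check\gamma$ divided by the size of the diagonal entries, so that $\delta\cdot(rR^2\sqrt\eps)\cdot(\delta/(rR^2\sqrt\eps)) = \delta^2 \le R^2\eps = \eps^{3/4}$; combined with symbolic-calculus losses of $\op_\eps$ on the relevant symbol classes (each derivative in the non-$(s,\sigma)$ directions costing at most $\eps^{-1/8}$ against the $r$-dependent localization) one lands at $O(r^2\eps^{7/8})$. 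I would carefully bookkeep these powers rather than reprove the symbolic calculus.

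The uniform $L^2$-boundedness of $(\op_\eps(\varphi)\op_\eps(M^\eps))$ and $(\op_\eps(\varphi)\op_\eps(\widetilde M^\eps))$ follows from Calder\'on--Vaillancourt: $M_0^\eps,\widetilde M_0^\eps$ are unitary with derivatives bounded uniformly in $\eps,\delta$ (they are built from $\widetilde\gamma/|\widetilde\gamma|$ and flow maps with bounded derivatives), and $\delta M_1^\eps$, $\delta\widetilde M_1^\eps$ are $O(\delta)=O(R\sqrt\eps)=o(1)$ in every seminorm after multiplication by the compactly supported $\op_\eps(\varphi)$; so the operators are bounded by $\|M_0^\eps\|+ C\delta \le 1 + C R\sqrt\eps$, uniformly. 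The main obstacle, as usual in these normal-form lemmas, is not the existence of the conjugation but the simultaneous tracking of three independent small/large parameters $\eps$, $\delta$, $r$ (and implicitly $R=\eps^{-1/8}$) through the symbolic calculus so that the final error is genuinely $O(r^2\eps^{7/8})$ and the correction matrices genuinely split as a unitary principal part plus a $\delta$-small remainder with $\delta_0$-controlled (via $\det V_1$) derivative bounds; the non-uniformity of the derivatives of $\widetilde\gamma$ established in Theorem~\ref{prop:parametrization}(4) is what forces the factor $r^2$ and prevents the cleaner $O(\eps^{7/8})$ one would get in the uniformly-bounded case.
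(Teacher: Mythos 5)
Your overall strategy -- conjugating the symbol as in \cite{FG03} while tracking $\eps,\delta,r$, and getting $L^2$-boundedness from Calder\'on--Vaillancourt -- is the same as the paper's, but two of your central claims fail, and they are exactly where the work of the proof lies. You cannot take $M_0^\eps=\widetilde M_0^\eps=\Id$ and treat $\delta(\check\gamma-\check\gamma_0)$ as an ``$O(\eps)$ obstruction'': on the support of $\varphi$ this term is of size $\delta|s|\sim \delta\,r\eps^{1/4}$, which for $\delta$ up to $R\sqrt\eps=\eps^{3/8}$ reaches $r\eps^{5/8}\gg r^2\eps^{7/8}$, so it is neither absorbable into the remainder nor perturbative in $\eps$. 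Removing it forces the diagonal principal parts to be genuinely oscillatory phase matrices: sorting the intertwining relation in powers of $\delta$ (with the ansatz ``$M_0^\eps$ diagonal, $M_1^\eps$ off-diagonal'') leads to transport equations of the type $(\partial_s\pm\partial_\sigma)\widetilde a_0^\eps=\pm\widetilde\vartheta^\eps\,\widetilde a_0^\eps$ with $\widetilde\vartheta^\eps$ of size $\delta^2/\eps$, whose solutions are exponentials with phases of size up to $r\gg1$, hence nowhere near the identity; the off-diagonal corrections $\delta M_1^\eps$ are difference quotients such as $\tfrac{\delta}{2s}(\check\gamma\,\widetilde a_0^\eps-\check\gamma_*\,\widetilde d_0^\eps)$. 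The paper also performs the reduction in two separate steps (first removing the $s$-dependence, then -- using that $\check\gamma|_{s=0}$ has uniformly bounded derivatives by Theorem~\ref{prop:parametrization} -- the $\sigma$-dependence); your single Taylor-expansion step hides why the second step is well behaved.

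Consequently your justification of the uniform boundedness and of the remainder $O(r^2\eps^{7/8})$ rests on a false premise, namely that $M_0^\eps$ has derivatives bounded uniformly in $\eps,\delta$. Because of the phases above, the conjugators only satisfy the degraded a priori bound $\partial^\alpha(M_0^\eps+\delta M_1^\eps)=O((|s|\delta/\eps)^{|\alpha|})$ (estimate \eqref{apriori}), and uniform $L^2$-boundedness follows not from boundedness of the symbol's derivatives but from the $\sqrt\eps$-rescaled Calder\'on--Vaillancourt estimate of Remark~\ref{rem:CVeps}, which pays $\eps^{1/2}$ per derivative and uses $|s|\delta\,\eps^{-1/2}\le r\eps^{1/8}\le1$ for $r\le R$. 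The same a priori bound, together with the fact that every derivative of $\delta\check\gamma$ is $O(|s|)$ in the relevant region (Theorem~\ref{prop:parametrization}, point (4), and $\delta\le|s|$ there), is what makes the error bookkeeping work: the neglected symbolic-calculus terms are of size $\eps\cdot(|s|\delta/\eps)\cdot O(|s|)=O(|s|^2\delta)=O(r^2\eps^{7/8})$. Your own accounting -- a naive $O(r^2\eps^{1/2})$ improved by a factor $\delta/|s|$ plus unspecified ``losses of $\eps^{-1/8}$ per derivative'' -- is internally inconsistent (it produces $\delta^2\le\eps^{3/4}$, with no source for the factor $r^2$) and does not establish the claimed bound. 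So while the skeleton of your approach matches the paper, the decisive quantitative ingredients -- the explicit oscillatory conjugators, their $(|s|\delta/\eps)^{|\alpha|}$ derivative growth, and the resulting $O(|s|^2\delta)$ remainder -- are missing or incorrect.
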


\begin{remark}\label{rem:1/8}
The proof below shows that if $\check\gamma(\cdot,\delta)$ had bounded derivatives uniformly with respect to $\delta$, then Lemma~\ref{rid_of_s_and_sigma} would hold with a remainder estimate of the order $\eps\delta$.
\end{remark}

\begin{remark}\label{rem:CVeps}
To estimate the norm of operators such as $\op_\eps(\check\gamma)$, 
 we shall use  the scaling operator~$T_\eps$ defined by 
$$\forall f\in L^2(\R^{d+1}),\;\; T_\eps f(u)=\eps^{d+1\over 4} f(\sqrt\eps u).$$
This unitary operator is such that 
$$\forall a\in{\mathcal C}_0^\infty(\R^{2d+2}),\;\; T_\eps\op_\eps(a) T_\eps^* =\op_1(a(\sqrt\eps\cdot,\sqrt\eps\cdot)).$$
The Calder\'on--Vaillancourt theorem yields the existence of $N\in\N$ and $C_N>0$ such that
$$\forall a\in{\mathcal C}_c^\infty(\R^{2d+2}),\;\; \| \op_\eps(a)\| _{{\mathcal L}(L^2(\R^{d+1}))}\leq C_N \, \sup_{\beta\in\N^{2d+2},\;\;|\beta|\leq N}\,\sup_{\rho\in\R^{2d+2}} \left(\eps^{|\beta|\over 2} \left| \partial_\rho ^\beta a(\rho) \right| \right)$$
holds. As a consequence,  Theorem~\ref{prop:parametrization} implies for all $\varphi\in\mathcal C^\infty_c(\R^{2d+2})$
$$\|\op_\eps (\varphi) \op_\eps( \check\gamma)\|_{{\mathcal L}(L^2(\R^{d+1}))}\leq C (1 +\sqrt\eps\,  \delta^{-1}\sup_{s\in{\rm supp}(\varphi)} |s|).$$
\end{remark}

\begin{proof}
We consider the three matrices 
$$J=\begin{pmatrix} 1 & 0 \\ 0 & -1\end{pmatrix},\;\;
K=\begin{pmatrix} 0 & 1 \\ -1 & 0\end{pmatrix},\;\;
L=\begin{pmatrix} 0 & 1 \\ 1 & 0\end{pmatrix},$$
that satisfy
$$
JL = K = -LJ,\qquad JK = L = -KJ,\qquad J^2 = \Id.
$$
We write
$$
\begin{pmatrix}
-\sigma + s & \delta\check\gamma\\ \delta\check\gamma & -\sigma -s 
\end{pmatrix} = -\sigma \Id + s J + \delta\check\gamma L
$$
and proceed in two steps.

\medskip 

We first remove the $s$-dependence of $\check\gamma$ and construct matrix-valued functions $D_j^\eps=D_j^\eps(s,z,\sigma,\zeta,\delta)$, $j=0,1$, with
\begin{align*}
& \op_\eps(D_0^\eps+\delta D_1^\eps)\,\op_\eps(-\sigma\Id + s J + \delta \check\gamma L)\\
&\qquad = \op_\eps(-\sigma\Id + s J + \delta \check\gamma_* L) \,\op_\eps(D_0^\eps + \delta D_1^\eps) + o(\eps),
\end{align*}
where 
$$
\check\gamma_*(z,\sigma,z,\delta) = \check\gamma(0,z,\sigma,\zeta,\delta).
$$
Symbolic calculus provides that the above equation is equivalent to
\begin{align*}
& \op_\eps\!\left( (D_0^\eps+\delta D_1^\eps)(-\sigma\Id + s J + \delta \check\gamma L)\right)
+ \frac{\eps}{2i}\op_\eps\!\left( \{ D_0^\eps+\delta D_1^\eps,-\sigma\Id + s J + \delta \check\gamma L\} \right)\\
&= \op_\eps\!\left( (-\sigma\Id + s J + \delta \check\gamma_* L) (D_0^\eps + \delta D_1^\eps)\right)
+ \frac{\eps}{2i} \op_\eps\!\left( \{-\sigma\Id + s J + \delta \check\gamma_* L,D_0^\eps + \delta D_1^\eps\} \right)\\
&\qquad + \op_\eps(\rho^\eps).
\end{align*}
where the remainder symbol $\rho^\eps$ consists of second order derivatives terms times a factor $\eps^2$. 
Since any derivatives of $\delta\check\gamma$ and $\delta\check\gamma_*$ are of the order $|s|$, 
Remark~\ref{rem:CVeps} yields that
\[
\op_\eps(\varphi) \op_\eps(\rho^\eps) = O(|s|\cdot |s|^2 \delta^2) = O(r^3 \,\eps^{3/2}),
\]
if the a priori estimate 
\begin{equation}\label{apriori}
\partial^{\alpha}(D_0^\eps + \delta D_1^\eps) = O((|s|\delta/\eps)^{|\alpha|}),\qquad \alpha\in\N^{2d+2},
\end{equation}
holds, that we will justify later during the proof. We neglect the term
\begin{align*}
&\frac{\eps}{2i} \{D^\eps_0,\delta\check\gamma L\} + \frac{\eps}{2i} \{\delta D^\eps_1,-\sigma \Id + sJ + \delta \check\gamma L\} \\
&-\frac{\eps}{2i} \{\delta\check\gamma_* L,D^\eps_0\} - \frac{\eps}{2i} \{-\sigma \Id + sJ + \delta \check\gamma_* L,\delta D^\eps_1\}, 
\end{align*}
which by the same argument produces an error of the order
\[
O(|s|\cdot|s|\delta) = O(r^2 \eps^{7/8})
\]
in the region of observation. Then, we obtain the three relations
\begin{align*}
& D_0^\eps(-\sigma\Id + sJ) = (-\sigma\Id + sJ) D_0^\eps,\\
& D_1^\eps(-\sigma\Id + sJ) +\check\gamma D_0^\eps L = (-\sigma\Id + sJ) D_1^\eps + \check\gamma_* L D_0^\eps,\\
& \check\gamma D_1^\eps L + \frac{\eps}{2i\delta^2}\{D_0^\eps,-\sigma \Id + sJ \} = 
\check\gamma_* L D_1^\eps + \frac{\eps}{2i\delta^2}\{-\sigma \Id + sJ, D_0^\eps\}.
\end{align*}
We make the ansatz
$$
D_0^\eps = \begin{pmatrix}\widetilde a_0^\eps & 0\\ 0& \widetilde d_0^\eps\end{pmatrix},\qquad
D_1^\eps = \begin{pmatrix}0 & \widetilde b_1^\eps\\ \widetilde c_1^\eps & 0\end{pmatrix},
$$
and rewrite the second of the three relations as
\begin{align*}
s\begin{pmatrix}0 & -2\widetilde b_1^\eps\\ 2\widetilde c_1^\eps & 0\end{pmatrix} = 
\begin{pmatrix}0 & -\check\gamma \widetilde a_0^\eps+ \check\gamma_*\widetilde d_0^\eps\\ 
-\check\gamma \widetilde d_0^\eps+ \check\gamma_* \widetilde a_0^\eps & 0\end{pmatrix}.
\end{align*}
This requires 
$$
\widetilde a_0^\eps(0,z,\zeta,\delta) = \widetilde d_0^\eps(0,z,\zeta,\delta)
$$ 
and 
$$
\widetilde b_1^\eps = \frac{1}{2s}(\check\gamma\widetilde a_0^\eps - \check\gamma_*\widetilde d_0^\eps),\qquad
\widetilde c_1^\eps = \frac{1}{2s}(\check\gamma_*\widetilde a_0^\eps - \check\gamma\widetilde d_0^\eps).
$$
The third relation can be rewritten as
$$
\begin{pmatrix}\check\gamma \widetilde b_1^\eps -\check\gamma_* \widetilde c_1^\eps & 0 \\ 0 & \check\gamma \widetilde c_1^\eps -  \check\gamma_* \widetilde b_1^\eps\end{pmatrix} = \frac{\eps}{i\delta^2}\begin{pmatrix}  -\partial_s\widetilde a_0^\eps -\partial_\sigma \widetilde a_0^\eps & 0 \\ 0 & 
-\partial_s\widetilde d_0^\eps + \partial_\sigma \widetilde d_0^\eps\end{pmatrix}.
$$
We define 
\begin{align*}
\widetilde\vartheta^\eps(s,z,\sigma,\zeta,\delta) &= \frac{i\delta^2}{2\eps s}
\left(\check\gamma_*^2(z,\sigma,\zeta,\delta) - \check\gamma^2(s,z,\sigma,\zeta,\delta)\right)\\
& =- \frac{i\delta^2}{2\eps }\int_0^1 \partial_s (\check \gamma^2) (sr,z,\sigma,\zeta,\delta) dr
\end{align*}
and observe that all $s$-derivatives of $\widetilde\vartheta^\eps$ are of the order $\delta^2/\eps$, while any derivative with respect to $(z,\sigma,\zeta)$ is of the order $\delta/\eps$ in view of Theorem~\ref{prop:parametrization}. We
obtain the equations
$$
(\partial_s +\partial_\sigma) \widetilde a_0^\eps = \widetilde\vartheta^\eps \, \widetilde a_0^\eps,\qquad
(\partial_s-\partial_\sigma) \widetilde d_0^\eps = -\widetilde\vartheta^\eps \, \widetilde d_0^\eps,
$$
that can be solved by
\begin{align*}
\widetilde a_0^\eps(s,z,\sigma,\zeta,\delta) &= \exp\!\left(\int_0^s \widetilde\vartheta^\eps(\tau,z,\sigma-s+\tau,\zeta,\delta) \d\tau\right),\\
\widetilde d_0^\eps(s,z,\sigma,\zeta,\delta) &= \exp\!\left(-\int_0^s \widetilde\vartheta^\eps(\tau,z,\sigma+s-\tau,\zeta,\delta) \d\tau\right)
\end{align*}
such that $\widetilde a_0^\eps(0,z,\sigma,\zeta,\delta) = \widetilde d_0^\eps(0,z,\sigma,\zeta,\delta) = 1$.
We observe that 
\[
\partial^\alpha_{z,\sigma,\zeta} \,\widetilde a^\eps_0,\, \partial^\alpha_{z,\sigma,\zeta}\, \widetilde d^\eps_0 = O( (|s|\delta/\eps)^{|\alpha|})
\]
for any $\alpha\in\N^{2d+1}$, while the $s$-derivatives satisfy
\[
\partial_s^k \,\widetilde a^\eps_0, \, \partial^k_s \,\widetilde d^\eps_0 = O(\delta^2/\eps) + O(|s|\delta/\eps)
\]
for any $k\ge1$. We now write
\[
\delta \widetilde b^\eps_1 = \frac{\delta}{2} \,\widetilde a^\eps_0 \int_0^1 \partial_s (\check \gamma) (sr,z,\sigma,\zeta,\delta) dr  
+ \frac{\delta}{2s}\gamma_*(\widetilde a^\eps_0 - d^\eps_0)
\]
and derive a similar expression for $\delta\widetilde c^\eps_1$, such that 
\[
\partial^\beta( \delta D^\eps_1) = O((|s|\delta/\eps)^{|\beta|}) + O((\delta^2/\eps)^{|\beta|})
\]
for all $\beta\in\N^{2d+2}$.This implies the claimed a priori estimate \eqref{apriori}.

\medskip
We now remove the $\sigma$-dependence of the scalar function $\check\gamma_*$, taking advantage of the boundedness of any derivatives of the function $\check\gamma_* = \check\gamma \mid_{\{s=0\}}$, see Theorem~\ref{prop:parametrization}. We look for two matrix-valued functions 
$C_j ^\eps= C_j^\eps(\sigma,z,\zeta,\delta)$, $j=0,1$, with the following properties. First, they satisfy the intertwining relation
\begin{align*}
&\op_\eps(J(C_0^\eps + \delta C_1^\eps) J)\;\op_\eps(-\sigma\Id + s J + \delta \check\gamma L)\\ 
&\qquad = \op_\eps(-\sigma\Id + sJ + \delta \check\gamma_* L) \;\op_\eps(C_0^\eps + \delta C_1^\eps) + o(\eps)
\end{align*}
in $\mathcal{L}(L^2(\R^{d+1}))$, 
where 
$$
\check\gamma_0(s,z,\zeta,\delta) = \check\gamma_*(z,0,\zeta,\delta) = \check\gamma(0,z,0,\zeta,\delta).
$$
This relation is equivalent to 
\begin{align*}
&\op_\eps(C_0^\eps + \delta C_1^\eps)\;\op_\eps(-\sigma J + s \Id + \delta \check\gamma_* K)\\
&\qquad = \op_\eps(-\sigma J + s\Id + \delta \check\gamma_0 K) \;\op_\eps(C_0^\eps + \delta C_1^\eps) + O(\eps^2\delta),
\end{align*}
using the growth bound
$$
\forall\alpha\in\N^{2d+2}\,  \exists c_\alpha>0\, \forall \eps,\delta>0:  \|\partial^\alpha C_j^\eps(\cdot,\delta)\|_\infty < c_\alpha\, (\delta^2/\eps)^{|\alpha|}.
$$
Symbolic calculus yields
\begin{align*}
&\op_\eps\!\left( (C_0^\eps + \delta C_1^\eps)(-\sigma J + s \Id + \delta \check\gamma_* K)\right)
+ \frac{\eps}{2i}\,\op_\eps\!\left(\{ C_0^\eps + \delta C_1^\eps, -\sigma J + s \Id + \delta \check\gamma_* K\}  \right)\\
& = \op_\eps\!\left((-\sigma J + s\Id + \delta \check\gamma_0 K)(C_0^\eps + \delta C_1^\eps)\right)
+ \frac{\eps}{2i} \,\op_\eps\!\left(\{-\sigma J + s\Id + \delta \check\gamma_0 K,C_0^\eps + \delta C_1^\eps\} \right)\\
& \qquad + O(\eps^2\delta),
\end{align*}
where the neglected terms of the form $\eps^2$ times derivatives of the order $\ge 2$ define the $O(\eps^2\delta)$ remainder. 
We now sort in powers of $\delta$ and obtain the following three relations, 
\begin{align}\label{eq:delta0}
& C_0^\eps(-\sigma J + s\Id) = (-\sigma J+s\Id) C_0^\eps,\\\label{eq:delta1}
& C_1^\eps(-\sigma J + s\Id) +\check\gamma_* C_0^\eps K = (-\sigma J + s\Id) C_1^\eps + \check\gamma_0KC_0^\eps,\\\label{eq:delta2}
& \check\gamma_* C_1^\eps K + \frac{\eps}{2i \delta^2}  \{C_0^\eps,-\sigma J + s\Id\} = 
\check\gamma_0 KC_1^\eps + \frac{\eps}{2i\delta^2} \{-\sigma J + s\Id, C_0^\eps\}.
\end{align}
We denote the components of the two matrices  by
$$
C_0^\eps = \begin{pmatrix}a_0^\eps & b_0^\eps\\ c_0^\eps & d_0^\eps\end{pmatrix},\qquad
C_1^\eps = \begin{pmatrix}a_1^\eps & b_1^\eps\\ c_1^\eps & d_1^\eps\end{pmatrix}.
$$
The first relation \eqref{eq:delta0} is equivalent to
$[J,C_0^\eps] = 0$, that is, 
$$
b_0^\eps = c_0^\eps = 0.
$$
The second relation \eqref{eq:delta1} is equivalent to $-\sigma [C_1^\eps,J]  = -\check\gamma_* C_0^\eps K + \check\gamma_0 K C_0^\eps$, that is, 
$$
\sigma \begin{pmatrix} 0 & 2b_1^\eps\\ -2c_1^\eps & 0\end{pmatrix} = 
\begin{pmatrix} 0 & -\check\gamma_* a_0^\eps +\check\gamma_0 d_0^\eps\\ \check\gamma_* d_0^\eps -\check\gamma_0 a_0^\eps& 0\end{pmatrix}.
$$
This requires
$$
a_0^\eps(z,0,\zeta,\delta) = d_0^\eps(z,0,\zeta,\delta)
$$
and
$$
b_1^\eps = \frac{1}{2\sigma}\left(-\check\gamma_* a_0^\eps +\check\gamma_0 d_0^\eps\right),\qquad
c_1^\eps = \frac{1}{2\sigma}\left(- \check\gamma_* d_0^\eps + \check\gamma_0 a_0^\eps\right).
$$
The third relation \eqref{eq:delta2} is equivalent to
$$
\check\gamma_* C_1^\eps K - \check\gamma_0KC_1^\eps = 
\frac{\eps}{i\delta^2}\left(-\partial_\sigma C_0^\eps - \tfrac12 (J\partial_s C_0^\eps + \partial_s C_0^\eps J)\right), 
$$
that is,
$$
\check\gamma_* \begin{pmatrix} -b_1^\eps & a_1^\eps\\ -d_1^\eps & c_1^\eps\end{pmatrix} 
- \check\gamma_0 \begin{pmatrix} c_1^\eps & d_1^\eps\\ -a_1^\eps & -b_1^\eps\end{pmatrix}
= \frac{\eps}{i\delta^2}\begin{pmatrix}-\partial_\sigma a_0^\eps  & 0\\ 0 & -\partial_\sigma d_0^\eps\end{pmatrix}.
$$
This can be satisfied by 
$$
a_1^\eps=d_1^\eps=0,
$$ 
and requires  
\begin{align*}
\frac{a_0^\eps}{2\sigma} \left(\check\gamma_*^2-\check\gamma_0^2\right) &= -\frac{\eps}{i\delta^2} \partial_\sigma a_0^\eps ,\\
\frac{d_0^\eps}{2\sigma} \left(\check\gamma_0^2-\check\gamma_*^2\right) &= -\frac{\eps}{i\delta^2} \partial_\sigma d_0^\eps.
\end{align*}
We set
$$
\vartheta^\eps(z,\sigma,\zeta,\delta) = \frac{i\delta^2}{2\eps \sigma} \left(\check\gamma_0^2(z,\zeta,\delta)-\check\gamma_*^2(z,\sigma,\zeta,\delta)\right)
$$
and rewrite the above equations as
$$
\vartheta^\eps a_0^\eps = \partial_\sigma a_0^\eps,\qquad \vartheta^\eps d_0^\eps = -\partial_\sigma d_0^\eps.
$$
The functions
\begin{align*}
a_0^\eps(z,\sigma,\zeta,\delta) &= \exp\!\left(\int_0^{\sigma} \vartheta^\eps(z,\tau,\zeta,\delta)\, \d\tau\right),\\
d_0^\eps(z,\sigma,\zeta,\delta) &= \exp\!\left(\int_0^{\sigma} \vartheta^\eps(z,\tau,\zeta,\delta)\, \d\tau\right)
\end{align*}
solve these equations and satisfy 
$$
a_0^\eps(z,0,\zeta,\delta) = d_0^\eps(z,0,\zeta,\delta) = 1.
$$ 
We conclude that the constructed matrices $C_0^\eps$ and $C_1^\eps$ have the desired properties. 
\end{proof}


\subsection{Arriving at the Landau--Zener model}\label{sec:arrival}
We now use Proposition~\ref{prop:quant} and Lemma~\ref{rid_of_s_and_sigma} to introduce
$$
\widetilde v^\eps = \op_\eps(M^\eps) v^\eps\quad\text{with}\quad M^\eps = M_0^\eps + \delta M_1^\eps.
$$
Since
\begin{align*}
\op_\eps\!\begin{pmatrix}-\sigma+s & \delta\check\gamma_0\\ \delta\check\gamma_0 & -\sigma -s\end{pmatrix}\widetilde v^\eps = 
\op_\eps(\widetilde M^\eps) \!\begin{pmatrix}-\sigma+s & \delta\check\gamma\\ \delta\check\gamma & -\sigma -s\end{pmatrix}v^\eps + O(r^2\eps ^{7/8}),
\end{align*}
we obtain for all $\varphi\in{\mathcal C}^\infty_c(\R^{2d+2})$ the doubly reduced system
\begin{equation}\label{eq:L2sz}
\op_\eps(\varphi) \;\op_\eps\!\begin{pmatrix}-\sigma+s & \delta\check\gamma_0\\ \delta\check\gamma_0 & -\sigma -s\end{pmatrix}\widetilde v^\eps =
O(r^2\eps ^{7/8}) \;\;\text{in}\;\;L^2(\R^{d+1}).
\end{equation}
The estimate of \cite[Proposition~7]{FG02} also implies, that $(\op_\eps(\varphi)\widetilde v^\eps)_{\eps>0}$ is a bounded sequence in 
$L^\infty(\R_s,L^2(\R^d_z))$, and we compare the new function $\widetilde v^\eps$ with the solution~$\check v^\eps$ of the Landau--Zener type system 
\begin{equation}\label{eq:systreduitbis}
{\eps\over i} \partial_s \check v^\eps = \op_\eps
\begin{pmatrix}
s & \delta \check \gamma_0\\
\delta\check\gamma_0 & -s 
\end{pmatrix}\check v^\eps,\qquad\check v^\eps|_{s=0}\,=\,\widetilde v^\eps|_{s=0}.
\end{equation}
The order $r^2\eps ^{7/8}$ right hand side of the doubly reduced system \eqref{eq:L2sz} is small enough to be treated as a perturbation 
and we obtain a positive constant $C>0$ such that for all $s\in\R$ and $\eps>0$,
\begin{equation}\label{eq:comparaison}
\left\|\widetilde v^\eps(s) - \check v^\eps(s)\right\|_{L^2(\R^d_z)} \le C |s| r^2\eps ^{-1/8}.
\end{equation}
This implies that for all $\varphi\in \mathcal C_c^\infty(\R^{2d+2})$ that are supported in a region, where $s\sim r R^2\sqrt\eps = r\eps^{1/4}$, we obtain
\begin{equation}\label{eq:error}
\op_\eps(\varphi)\left(\widetilde v^\eps - \check v^\eps\right) = O(r^3\eps ^{1/8})\;\;\text{in}\;\;L^2(\R^{d+1}_{s,z}),
\end{equation}
and we have established the microlocal link of the original Schr\"odinger equation
$$\
\op_\eps(P)\psi^\eps_t = 0
$$
to the Landau--Zener system~\eqref{eq:systreduitbis},  provided we choose $r \le \eps^{ -\kappa}$ with $0<\kappa<1/24$.

\begin{remark}\label{rem:1/8bis}
If $\check\gamma(\cdot,\delta)$ had uniformly bounded derivatives, then,  in view of Remark~\ref{rem:1/8}, the term $ |s| r^2\eps ^{-1/8}$ in~(\ref{eq:comparaison}) would be replaced by $\delta|s|$.  For bounded times, this remainder would be of order $\eps^{3/8}$.
\end{remark}


\section{The proof of the main result}\label{sec:mainproof}

For proving our main result Theorem~\ref{theorem}, we now analyse the dynamics of the block diagonal components of the Wigner transform $w^\eps_\pm(\psi^\eps_t)$, that is, 
\[
\int_{\R^{2d+1}} \chi(t) a(q,p) \,w^\eps_\pm(\psi^\eps_t)(q,p,\delta)\, \d(q,p,t)
= \left\langle\op_\eps(\chi a\Pi^\pm)\psi^\eps_t,\psi^\eps_t\right\rangle
\]
for scalar observables $a\in{\mathcal C}_c^\infty(\R^{2d})$ with $\supp(a)\subset\Omega_3$ and $\chi\in{\mathcal C}^\infty_c([0,T])$. By the assumption (A1)$_\delta$,  the observables have support away from the set of small eigenvalue gap, providing the derivative bounds
\begin{equation}\label{eq:Pi}
\forall \alpha,\beta\in\N^{d}\,\exists c_{\alpha,\beta}>0\,\forall \eps,\delta>0: 
\left\| \partial^\beta_q\partial^\alpha_p(a\Pi^\pm(\cdot,\delta)) \right\|_\infty < c_{\alpha,\beta} (R^3\sqrt\eps)^{-|\beta|}.
\end{equation}
In the following, the scale $R\sqrt\eps$ will play an important role. We note that, of course, $(R^3\sqrt\eps)^{-|\beta|}\le (R\sqrt\eps)^{-|\beta|}$ for $\beta\in\N^d$.
Our first step in the proof is now the replacement of the eigenprojectors $\Pi^\pm(q,\delta)$ in $\op_\eps(\chi a \Pi^\pm)\psi^\eps_t$ by the localisation on the corresponding energy shell
$$
E^\pm=\left\{(q,t,p,\tau)\in\R^{2d+2},\;\;\tau  + \Lambda^\pm (q,p,\delta)=0\right\},
$$
that is a subset of the space-time phase space.


\subsection{Localization in energy}\label{sec:energy}
The localization is implemented by a smooth cut-off function $\theta\in{\mathcal C}_c^\infty(\R)$ satisfying
\begin{align*}
0\leq \theta(u)\leq 1,\quad 
\theta(u)=0\;\text{for}\;|u|>1,\quad
\theta(u)=1\;\text{for}\;|u|<1/2.
\end{align*}
We combine it with the energy function and set
$$
\theta^\pm_{\eps,R} (q,p,\tau,\delta)= \theta\!\left({\tau + \Lambda^\pm(q,p,\delta )\over
R\sqrt\eps}\right),\qquad (q,p,\tau)\in\R^{2d+1}.
$$

\begin{lemma}\label{lem:loc}
For all symbols $\chi\in{\mathcal C}_c^\infty(\R)$ and $a\in{\mathcal C}^\infty_c(\R^{2d})$ satisfying the derivative bounds \eqref{eq:Pi}, we have 
\[
\op_\eps\!\left(\chi a \Pi^\pm\right)\psi^\eps_t = \op_\eps\!\left(\chi a \theta^\pm_{\eps,R}\right)\psi^\eps_t + O(R^{-2}) + O(R^{-1}\sqrt\eps) 
\]
in $L^2(\R^{d+1}_{t,q})$.
\end{lemma}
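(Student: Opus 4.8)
The plan is to fix a sign $\pm$ and to show that replacing $\Pi^\pm$ by the energy cut-off $\theta^\pm_{\eps,R}$ costs only the two stated errors, i.e. that
\[
\op_\eps\!\big(\chi\, a\,\Pi^\pm(1-\theta^\pm_{\eps,R})\big)\,\psi^\eps_t = O(R^{-2})+O(R^{-1}\sqrt\eps)\quad\text{in}\;\;L^2(\R^{d+1}_{t,q}).
\]
The starting point is that, viewed as a function of $(t,q)$, the solution satisfies $\op_\eps(P)\psi^\eps=0$ with $P(q,t,p,\tau,\delta)=\big(\tau+\tfrac12|p|^2\big)\Id+V(q,\delta)$ the space--time symbol of the Schr\"odinger operator, and that $\Pi^\pm(q,\delta)$ commutes with $P$ and satisfies $\Pi^\pm P=(\tau+\Lambda^\pm)\,\Pi^\pm$. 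Since $1-\theta^\pm_{\eps,R}$ is supported in $\{\,|\tau+\Lambda^\pm|\ge\tfrac12 R\sqrt\eps\,\}$, the scalar function $(1-\theta^\pm_{\eps,R})(\tau+\Lambda^\pm)^{-1}$ is smooth (extended by $0$), so one may set
\[
Q^\pm=\chi\, a\,\Pi^\pm\,(1-\theta^\pm_{\eps,R})\,(\tau+\Lambda^\pm)^{-1},
\]
and then the pointwise matrix product is $Q^\pm\cdot P=\chi\, a\,\Pi^\pm(1-\theta^\pm_{\eps,R})$.

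Next I would invoke the semiclassical symbolic calculus: $\op_\eps(Q^\pm)\op_\eps(P)$ is the quantization of $Q^\pm P+\tfrac{\eps}{2i}\{Q^\pm,P\}+\eps^2 r^\eps$, where $r^\eps$ is built from derivatives of order $\ge2$ of $Q^\pm$ and $P$. Applying this to $\psi^\eps$ and using $\op_\eps(P)\psi^\eps=0$ together with $Q^\pm P=\chi\, a\,\Pi^\pm(1-\theta^\pm_{\eps,R})$ yields
\[
\op_\eps\!\big(\chi\, a\,\Pi^\pm(1-\theta^\pm_{\eps,R})\big)\psi^\eps = -\tfrac{\eps}{2i}\op_\eps(\{Q^\pm,P\})\psi^\eps-\eps^2\op_\eps(r^\eps)\psi^\eps.
\]
Since $\|\psi^\eps_t\|_{L^2(\R^d)}$ is bounded uniformly in $t$ and $\chi$ has compact support in $t$, it remains to estimate two operator norms on $L^2$. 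Here I would record the symbol class of $Q^\pm$: by the bounds \eqref{eq:Pi} on $\partial^\beta(a\Pi^\pm)$ --- which is exactly where the observable lying away from $S_0$ enters --- together with the fact that every derivative falling on $1-\theta^\pm_{\eps,R}$ or on $(\tau+\Lambda^\pm)^{-1}$ produces a factor $(R\sqrt\eps)^{-1}$ (the derivatives of $\Lambda^\pm$ being bounded on $\supp a$), one gets $\partial^\beta Q^\pm=O\big((R\sqrt\eps)^{-1-|\beta|}\big)$; Calder\'on--Vaillancourt (Remark~\ref{rem:CVeps}) then gives $\|\op_\eps(Q^\pm)\|_{{\mathcal L}(L^2)}=O\big((R\sqrt\eps)^{-1}\big)$.

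The heart of the estimate is the splitting of the Poisson bracket. Since $\partial_t P=0$ and $\partial_\tau P=\Id$,
\[
\{Q^\pm,P\}=-\partial_t Q^\pm+\nabla_p Q^\pm\cdot\nabla_q P-\nabla_q Q^\pm\cdot\nabla_p P.
\]
The first term only differentiates $\chi(t)$ and hence stays of order $(R\sqrt\eps)^{-1}$, so $\tfrac{\eps}{2}\op_\eps(\partial_t Q^\pm)=O\big(\eps(R\sqrt\eps)^{-1}\big)=O(R^{-1}\sqrt\eps)$; the other two terms differentiate the energy localization or the resolvent factor, hence are of order $(R\sqrt\eps)^{-2}$ (the derivatives of $P$ being bounded on $\supp a$), so they contribute $O\big(\eps(R\sqrt\eps)^{-2}\big)=O(R^{-2})$. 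The same counting shows that the $\eps^2 r^\eps$ remainder, and more generally the $N$-th term of the symbolic expansion, has size $\eps^N(R\sqrt\eps)^{-1-N}=O(R^{-2})$ for every $N\ge1$, hence is harmless. Collecting these contributions proves the lemma.

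I expect the only point requiring genuine care to be the bookkeeping of symbol classes --- verifying that $Q^\pm$ and all the symbolic-calculus correction terms lie in the $\eps^{1/2}$-admissible class, so that Calder\'on--Vaillancourt applies with exactly the claimed powers of $R$ and $\eps$ --- together with the minor technical point that $Q^\pm$ is not compactly supported in $\tau$, which is handled either directly via its $\langle\tau\rangle^{-1}$ decay or by first inserting a cut-off $\zeta(\tau)$, the resulting error being $O(\eps^\infty)$ thanks again to the invertibility of $P$ for large $|\tau|$. Beyond this, no new difficulty arises.
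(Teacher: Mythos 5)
Your argument for inserting the energy cut-off is sound and is essentially the paper's own first step: you factor $(1-\theta^\pm_{\eps,R})\Pi^\pm$ as a symbol times $\tau+\Lambda^\pm$ (the paper writes the same thing as $\tfrac{1}{R\sqrt\eps}G\bigl(\tfrac{\tau+\Lambda^\pm}{R\sqrt\eps}\bigr)\Pi^\pm P$), compose with $\op_\eps(P)$, use $\op_\eps(P)\psi^\eps_t=0$, and the bookkeeping with loss $(R\sqrt\eps)^{-1}$ per derivative correctly produces $O(R^{-1}\sqrt\eps)$ from the $\partial_t\chi$ term and $O(R^{-2})$ from the terms hitting the energy localization, in line with \eqref{eq:Pi} and Remark~\ref{rem:CVeps}.

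However, what you prove is
$\op_\eps(\chi a\Pi^\pm)\psi^\eps_t=\op_\eps\bigl(\chi a\,\theta^\pm_{\eps,R}\Pi^\pm\bigr)\psi^\eps_t+O(R^{-2})+O(R^{-1}\sqrt\eps)$,
whereas the lemma asserts equality with $\op_\eps\bigl(\chi a\,\theta^\pm_{\eps,R}\bigr)\psi^\eps_t$, i.e.\ with the \emph{scalar} symbol, the projector removed. The two statements differ by the cross term $\op_\eps\bigl(\chi a\,\theta^\pm_{\eps,R}\Pi^\mp\bigr)\psi^\eps_t$, for which your proposal contains no argument; it is not covered by your parametrix, since on the support of $\theta^\pm_{\eps,R}$ the factor $\tau+\Lambda^\pm$ is small rather than bounded below, and it is the opposite energy $\tau+\Lambda^\mp$ that must be exploited. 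This is a genuine missing half of the proof. The paper closes it by noting that on the support of the observable the gap satisfies $g\gg R\sqrt\eps$ (assumption (A1) again, beyond its use in \eqref{eq:Pi}), so that $|\tau+\Lambda^\pm|\le R\sqrt\eps$ forces $|\tau+\Lambda^\mp|\gg R\sqrt\eps$, whence $\theta^\pm_{\eps,R}=\theta^\pm_{\eps,R}(1-\theta^\mp_{\eps,R})$ there for $\eps$ small; one then runs your argument a second time with $\Pi^\mp$, $\Lambda^\mp$ in place of $\Pi^\pm$, $\Lambda^\pm$ to get $\op_\eps\bigl(\chi a\,\theta^\pm_{\eps,R}\Pi^\mp\bigr)\psi^\eps_t=O(R^{-2})+O(R^{-1}\sqrt\eps)$. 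The scalar form of the right-hand side is not cosmetic: it is what later allows the localization to be commuted through the matrix-valued conjugations ($B_\eps$, the Fourier integral operator) in Section~\ref{sec:energy}, so the lemma as stated, not your weaker version, is what the proof of Theorem~\ref{theorem} requires.
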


\begin{proof} Following the lines of the proof of \cite[Lemma~5.1]{FL08}, we observe that, since $1-\theta$ vanishes identically close to $0$, we can write 
$$1-\theta^\pm_{\eps,R}(q,p,\tau,\delta)= \frac{\tau +\Lambda^\pm (q,p,\delta)}{R\sqrt\eps} G\!\left({\tau +
\Lambda^\pm (q,p,\delta)\over
R\sqrt\eps}\right)$$
for some smooth function $G$. Since 
$$
( \tau + \Lambda^\pm(q,p,\delta)) \Pi^\pm(q,\delta)=\Pi^\pm(q,\delta )P(q,p,\tau,\delta),
$$
we have
\[
(1-\theta^\pm_{\eps,R})\Pi^\pm
 =  \frac{1}{R\sqrt\eps} \,G\!\left(\frac{\tau+\Lambda^\pm}{R\sqrt\eps}\right)\Pi^\pm P.
\]
We can use symbolic calculus to bring into play that $\psi^\eps_t$ solves the Schr\"odinger equation 
$\op_\eps(P)\psi^\eps_t = 0$.  
The derivative bounds \eqref{eq:Pi} imply that
\[
\op_\eps\!\left(\chi a(1-\theta^\pm_{\eps,R})\Pi^\pm\right) \psi^\eps_t = O(R^{-2}) + O(R^{-1}\sqrt\eps)\quad\text{in}\;\; L^2(\R^{d+1}_{t,q}).
\]
Now it remains to remove the matrix $\Pi^\pm(q,\delta)$ from the right hand side of the equation
\[
\op_\eps\!\left(\chi a\Pi^\pm\right) \psi^\eps_t = \op_\eps\!\left(\chi a\theta_{\eps,R}^\pm\Pi^\pm\right) \psi^\eps_t+ O(R^{-2}) + O(R^{-1}\sqrt\eps).
\]
In view of 
 \begin{equation*}\label{eq2}
 \chi a \theta^\pm_{\eps,R} = \chi a \theta^\pm_{\eps,R} \Pi^\pm + \chi a \theta^\pm_{\eps,R} \Pi^\mp,
 \end{equation*} 
we only need to prove that 
\begin{equation}\label{eq:small}
\op_\eps( \chi a \theta^\pm_{\eps,R} \Pi^\mp)\psi^\eps_t = O(R^{-2}) + O(R^{-1}\sqrt\eps)\quad\text{in}\;\; L^2(\R^{d+1}_{t,q}).
\end{equation}
We observe that, for $\eps$ small enough, 
$$
\theta^\pm_{\eps,R}=\theta^\pm_{\eps,R}(1-\theta^\mp_{\eps,R}),
$$
since 
$$
\left| \tau+\tfrac12|p|^2 + \alpha(q) \pm \tfrac12 g(q,\delta)\right| \le R\sqrt\eps
$$
on the support of $\theta^\pm_{\eps,R}$. Using again the Schr\"odinger equation, symbolic calculus and the estimate~(\ref{eq:Pi}), we obtain the desired relation \eqref{eq:small}. 
\end{proof}

We now reconsider the Landau--Zener transformation of Section~\ref{sec:reduction} in terms of expectation values. By Lemma~\ref{lem:loc},
$$
\left\langle\op_\eps(\chi a \Pi^\pm)\psi^\eps_t,\psi^\eps_t\right\rangle = \big\langle\op_\eps(\chi a \theta^\pm_{\eps,R})\psi^\eps_t,\psi^\eps_t\big\rangle + O(R^{-2}) + O(R^{-1}\sqrt\eps).
$$
Next, we rewrite the expectation value using the function $v^\eps = K^*_\eps\op_\eps(B^*_\eps)^{-1}\psi^\eps_t$ that has been introduced in Proposition~\ref{prop:quant}. Since $B_\eps = B + \eps B_1$ with $B^*B = \lambda$,  symbolic calculus implies
\begin{align*}
&\left\langle\op_\eps(\chi a \Pi^\pm)\psi^\eps_t,\psi^\eps_t\right\rangle\\
& \quad= 
\left\langle\op_\eps(\chi a \lambda \theta^\pm_{\eps,R})\op_\eps(B^*_\eps)^{-1}\psi^\eps_t,\op_\eps(B^*_\eps)^{-1}\psi^\eps_t\right\rangle + O(R^{-2}) + O(R^{-1}\sqrt\eps).
\end{align*}
In the presence of the symbol $\theta^\pm_{\eps,R}$ that loses a factor $R\sqrt\eps$ per derivative, the application of the Fourier integral operator $K^\eps$ yields
\begin{align*}
&\left\langle\op_\eps(\chi a \Pi^\pm)\psi^\eps_t,\psi^\eps_t\right\rangle\\ 
&\quad= 
\big\langle  \op_\eps\big((\chi a \lambda \theta^\pm_{\eps,R})\circ\kappa_\delta\big)  v^\eps, v^\eps\big\rangle + O(R^{-2}) + O(R^{-1}\sqrt\eps).
\end{align*}
In the next step, we move towards $\widetilde v^\eps = \op_\eps(M^\eps_0+\delta M^\eps_1)v^\eps$. Since the matrix $M_0^\eps$ is unitary and 
$|\delta|\le R\sqrt\eps$, we have
\begin{align*}
&\left\langle\op_\eps(\chi a \Pi^\pm)\psi^\eps_t,\psi^\eps_t\right\rangle\\
&\quad = \big\langle  \op_\eps\big((\chi a \lambda \theta^\pm_{\eps,R})\circ\kappa_\delta\big)  \widetilde v^\eps, \widetilde v^\eps\big\rangle + O(R^{-2}) + O(R\sqrt\eps).
\end{align*}
We then arrive at the solution $\check v^\eps$ of the Landau--Zener system~\eqref{eq:systreduitbis} by
\begin{align}\label{eq:exp_LZ}
&\left\langle\op_\eps(\chi a\Pi^\pm)\psi^\eps_t,\psi^\eps_t\right\rangle \nonumber\\ 
&\quad= 
\big\langle \op_\eps((\chi a \lambda \theta^\pm_{\eps,R})\circ\kappa_\delta) \check v^\eps,\check v^\eps\big\rangle + O(R^{-2}) + O(R\sqrt\eps) + O(r^3\eps^{1/8}).
\end{align}
At this stage of the proof of Theorem~\ref{theorem}, we have rewritten the expectation values for the Schr\"odinger solution $\psi^\eps_t$ in terms of the 
Landau--Zener solution $\check v^\eps$. Next, we reformulate the Markov process in the new coordinates. 


\subsection{The Markov process for the normal form}\label{sec:Markov}
 
Let us now introduce a new Markov process for effectively describing the dynamics of the reduced Landau--Zener problem \eqref{eq:systreduitbis}. We use the analogous building blocks as for the original process that defines the semigroup~$({\mathcal L}_{\eps}^t)_{t\ge0}$. 
We shall prove that this new process is close to the image of the original Markov process by the canonical transform $\kappa_\delta$.

\subsubsection{The image of the classical trajectories by the canonical transform}
We observe that by the transformation in~(\ref{def:B}), the eigenvalues of the reduced system~(\ref{eq:systreduit}) satisfy 
$$
-\sigma + j \, \sqrt{s^2+\delta^2\check\gamma(s,z,\sigma,\zeta)^2} = \left(\lambda \left(\tau+\Lambda^\mp(q,p,\delta)\right)\right)\circ\kappa_\delta,
$$
where the sign $j=\pm1$ depends on  the numbering discussed below in \S\ref{sec:numbering}. As a consequence, 
the integral curves of 
$$
H_{-\sigma \pm\sqrt{s^2+\delta^2\check\gamma(s,z,\sigma,\zeta)^2}}
$$ 
are mapped by the canonical transform $\kappa_\delta$ to those of 
$$
H_{\lambda\left(\tau+\Lambda^\mp(q,p,\delta)\right)}=\lambda H_{\tau+\Lambda^\mp(q,p,\delta)}+\left(\tau+\Lambda^\mp(q,p,\delta)\right)H_\lambda.
$$
Since the energy $\tau+\Lambda^\mp(q,p,\delta)$ is of order $R\sqrt \eps$ in our zone of observation, these trajectories are those of $\lambda H_{\tau+\Lambda^\mp(q,p,\delta)}$ up to some term of order $R\sqrt\eps$. Since the function $\lambda$ does not vanish in our zone of observation, the integral curves of $\lambda H_{\tau+\Lambda^\mp(q,p,\delta)}$ are those of $ H_{\tau+\Lambda^\mp(q,p,\delta)}$ up to a change of parametrization of the curve. 
We therefore consider the image of the Hamiltonian curves of the initial Markov process by the canonical transform $\kappa_\delta$ as being close to those of the functions 
$$
\widetilde\Lambda^\pm(s,z,\sigma,\zeta) = -\sigma\mp\sqrt{s^2+\delta^2\check \gamma(s,z,\sigma,\zeta)^2.}
$$


\subsubsection{The relation of the energies}
For discussing the relation of the different energies occuring in our analysis, we also introduce the functions 
\begin{equation}\label{eigenvaluesbis}
\widetilde\Lambda^\pm_0(s,z,\sigma,\zeta) = -\sigma\mp\sqrt{s^2+\delta^2\check \gamma_0(z,\zeta)^2}
\end{equation}
that belong to the Landau--Zener system \eqref{eq:systreduitbis}. We observe that both energies $\widetilde\Lambda^\pm$ and $\widetilde\Lambda^\pm_0$ satisfy
$$
\widetilde\Lambda^\pm, \widetilde\Lambda^\pm_0 = -\sigma \mp |s| + O(\delta^2|s|^{-1}),
$$
where we have used 
$$
\widetilde\Lambda^\pm_0(s,z,\sigma,\zeta) \pm |s|= \mp\frac{\delta^2\check \gamma_0(z,\zeta)^2}{|s|+\sqrt{s^2+\delta^2\check \gamma_0(z,\zeta)^2}}
$$
and a similar relation for $\widetilde\Lambda^\pm$. Therefore, since $\delta\le R\sqrt\eps$, we obtain
$$
\widetilde\Lambda^\pm - \widetilde\Lambda_0^\pm = O(R^2\eps|s|^{-1})
$$
and for any smooth cut-off function $\widetilde\theta\in{\mathcal C}^\infty_c(\R)$
$$
\widetilde\theta\!\left(\frac{\widetilde\Lambda^\pm(s,z,\sigma,\zeta)}{R\sqrt\eps}\right) = \widetilde\theta\!\left(\frac{\widetilde\Lambda_0^\pm(s,z,\sigma,\zeta)}{R\sqrt\eps}\right) + O(R\sqrt\eps|s|^{-1}).
$$
Hence a change in the energy localisation from $\widetilde\Lambda^\pm$ to $\widetilde\Lambda^\pm_0$ causes a deviation of the order $O(R^{-1}r^{-1})= O(R^{-1})$, when choosing $s\sim r R^2\sqrt\eps$ provided $r \gg 1$.


\subsubsection{The numbering of the eigenvalues}\label{sec:numbering}
The classical trajectories of the Landau--Zener system \eqref{eq:systreduitbis} are generated by the eigenvalues
$$
-\sigma \pm\sqrt{s^2 + \delta^2 \check\gamma_0(z,\zeta)^2}.
$$
For enumerating these eigenvalues such that the classical trajectories in in the original and the new coordinates can be naturally linked, we consider the case $\delta =0$ and use the vectors $H$ and $H'$ that have been defined in~(\ref{def:H}) and~(\ref{def:H'}). 

\medskip
We recall that $H$ and $H'$ are associated with ingoing trajectories for $\Lambda^+$ and $\Lambda^-$, respectively, and satisfy $\omega(H,H')<0$. Up to some perturbation term of order $R\sqrt\eps$, the canonical transformation $\kappa_0^{-1}$ sends $H$ and $H'$ on vectors that are collinear to $-\partial_s-\partial_\sigma$ and $-\partial_s+\partial_\sigma$ above the singular set $\{s=0\}$. Since
$$
\omega(-\partial_s-\partial_\sigma,-\partial_s+\partial_\sigma)= 2>0,
$$
the vector $H$ is collinear to $-\partial_s+\partial_\sigma$ and $H'$ to $-\partial_s-\partial_\sigma$ above $\{s=0\}$. Since
$$
-\partial_s+\partial_\sigma=H_{-\sigma-|s|},\; -\partial_s-\partial_\sigma=H_{-\sigma+|s|}\quad\text{on}\quad\{s>0\}
$$
and $s>0$ on ingoing trajectories, the vector field $H_{-\sigma-|s|}$ corresponds to the plus mode, while $H_{-\sigma+|s|}$ belongs to the minus mode. We 
therefore number the eigenvalues in the new coordinates according to \eqref{eigenvaluesbis}.


\subsubsection{The Hamiltonian trajectories}
The eigenvalues $\widetilde\Lambda^\pm_0$ generate the Hamiltonian systems
$$
\dot s = -1,\quad \dot z = \mp \frac{\delta^2 \check\gamma_0\partial_\zeta \check\gamma_0}{\sqrt{s^2+\delta^2\check \gamma_0^2}},\quad
\dot\sigma = \pm \frac{s}{\sqrt{s^2+\delta^2\check \gamma_0^2}},\quad \dot\zeta = \pm \frac{\delta^2 \check\gamma_0\partial_z \check\gamma_0}{\sqrt{s^2+\delta^2\check \gamma_0^2}},
$$
with corresponding flow maps 
$$
\widetilde\Phi_{0,\pm}^{\beth}:\R^{2d+2}\to\R^{2d+2}.
$$ 
Using that $|\dot z| + |\dot\zeta| = O(\delta|s|^{-1})$ and in view of \eqref{eigenvaluesbis}, we have
\begin{equation}\label{eq:flow}
\widetilde\Phi_{0,\pm}^{\beth}(s,z,\sigma,\zeta) = (s-\beth,z,\mp|s-\beth|,\zeta) + O(\delta |s|^{-1})
\end{equation}
for all points $(s,z,\sigma,\zeta)$ in our zone of observation and propagation times $\beth>0$. 
Note that for $s\sim r R^2\sqrt\eps$, we obtain an error of the order $R^{-1}r^{-1}$, which is smaller than $R^{-1}$.


\subsubsection{The non-adiabatic transitions}
Monitoring the gap function 
$$
\widetilde g_0(s,z,\zeta) = 2\sqrt{s^2+\delta^2\check \gamma_0(z,\zeta)^2}
$$
along the Hamiltonian trajectories associated with $\widetilde\Lambda^\pm_0$ we look for points in $\R^{2d+2}$, where a local minimum is attained. We obtain the condition 
$$
0 = \partial_s \widetilde g_0 \;\partial_\sigma\widetilde\Lambda^\pm_0 + \nabla_z\widetilde g_0 \cdot \nabla_\zeta \widetilde\Lambda^\pm_0 
- \partial_\sigma\widetilde g_0\; \partial_s \widetilde \Lambda^\pm_0 - \nabla_\zeta \widetilde g_0\cdot \nabla_z \widetilde\Lambda^\pm_0 = -\partial_s \widetilde g_0,
$$
that is equivalent to $s=0$.  
Hence, the new jump manifold is the set
$$
\widetilde\Sigma_\eps = \left\{(s,z,\sigma,\zeta)\in\R^{2d+2}\mid s=0,\; 2\delta|\check\gamma_0(z,\zeta)|\le R\sqrt\eps\right\}.
$$
The Landau--Zener formula of \cite[Proposition~7]{FG03}, see \S\ref{sec:LZ_form}, suggests to perform non-adiabatic transitions with probability
$$
\widetilde T_\eps(z,\zeta) := \exp\!\left(-\frac{\pi}{\eps} \delta^2\check\gamma_0(z,\zeta)^2\right)
$$
when reaching the jump set $\widetilde\Sigma_\eps$. By the energy localization of our observables, we have $s=0$ and $\sigma=O(R\sqrt\eps)$ on the jump manifold $\widetilde\Sigma_\eps$. By Theorem~\ref{prop:parametrization}, a Taylor expansion around $\sigma=0$ reads
$$
\check\gamma(0,z,\sigma,\zeta) = \check\gamma_0(z,\zeta) + \sigma \partial_\sigma\check\gamma(0,z,0,\zeta) + O(\sigma^2/\delta_0).
$$
Using that $\delta/\delta_0$ is bounded, we obtain
$$
\frac{\delta^2}{\eps}\left(\check\gamma(0,z,\sigma,\zeta) - \check\gamma_0(z,\sigma)\right) = O(R^3\sqrt\eps)
$$
and
$$
\widetilde T_\eps(z,\zeta) = \exp\!\left(-\frac{\pi}{\eps}\delta^2\check\gamma(s,z,\sigma,\zeta)^2\right) + O(R^3\sqrt\eps).
$$
Lemma~\ref{lem:gamma} then provides
\begin{equation}\label{eq:LZ}
\widetilde T_\eps = T_\eps\circ\kappa_\delta + O(R^3\sqrt\eps),
\end{equation}
where the original transition rate $T_\eps(q,p,\delta)$ has been defined in \eqref{Tepsdelta}. Besides, the original jump condition $p\cdot\nabla_q g(q,\delta) = 0$ is equivalent to
$$
\widetilde\beta(q,\delta) (p\cdot\nabla_q\widetilde\beta(q,\delta)) + \delta^2 \widetilde\gamma(q) (p\cdot\nabla_q\widetilde\gamma(q)) = 0, 
$$
that is, $s=O(\delta^2)$. Hence, the non-adiabatic transitions of the new and the original Markov process mostly differ due to the transition rate estimate \eqref{eq:LZ}. 

\subsubsection{The drift}\label{subsubsec:drift}
At $s=0$,  we observe for all $\delta\ge0$  the energy relation
\begin{equation}\label{driftunicity}
\widetilde\Lambda^\pm_0(0,z,\sigma\mp2 \delta |\check \gamma_0(z,\zeta)|,\zeta)=\widetilde\Lambda^\mp_0(0,z,\sigma,\zeta).
\end{equation}
The best drift is, of course, the exact one and is given by
$$
\widetilde J_\pm:\sigma\mapsto \sigma\mp2 \delta |\check \gamma_0(z,\zeta) |.
$$ 
This exact drift is performed in the direction of~$\partial_\sigma$ that is collinear to the difference of the two Hamiltonian vector fields in the particular case $\delta=0$, motivating the geometric underpinning of the original drift construction, see Remark~\ref{rem:energy}. 
We also note  that the size of the drift $2\delta|\check\gamma_0(z,\zeta)|$ is precisely the gap size for points in the jump manifold.

\subsection{The semigroup for the normal form}\label{sec:semigroup}

The Markov process described in the previous section \S\ref{sec:Markov} defines a semigroup $\widetilde{\mathcal L}_\eps$ acting on functions in the space 
$$
\widetilde{\mathcal B} = \left\{f:\R^{2d+2}\times\{-1,1\}\to\C\mid f\;\text{is measurable, bounded}\right\}.
$$ 
Following the normal form transformation of the expectation values given in \eqref{eq:exp_LZ}, we consider a symbol $c_{\eps,R}^{out}\in\widetilde{\mathcal B}$ whose plus-minus-components are defined by  
$$
c_{\eps,R}^{\pm,out}(s,z,\sigma,\zeta)=   b^\pm (s,z,\sigma,\zeta)\;
\widetilde\theta\!\left(\frac{\widetilde\Lambda^\pm_0(s,z,\sigma,\zeta)}{R\sqrt\eps}\right),
$$
where the functions $b^\pm$ and $\widetilde\theta$ have the following properties: $b^\pm(s,z,\sigma,\zeta)$ are two smooth functions compactly supported in the outgoing region $\{s<0\}$ such that the random trajectories reaching their support have only one transition during the observation time that is of length $\beth>0$. The functions $b^\pm(s+\beth,z,\sigma,\zeta)$ are supported in the incoming region $\{s>0\}$. The cut-off function $\widetilde\theta\in{\mathcal C}^\infty_c(\R)$ satisfies
\begin{align*}
0\leq \widetilde\theta\leq 1,\quad 
\widetilde \theta(u)=0\;\text{for}\;|u|>1,\quad
\widetilde \theta(u)=1\;\text{for}\;|u|<1/2.
\end{align*}
We now analyse the pull back of the symbol $c_{\eps,R}^{out}\in\widetilde{\mathcal B}$ by the semigroup for a suitably chosen time $\beth>0$,  
$$
c_{\eps,R}^{in}:=\widetilde{\mathcal L}_{\eps}^\beth\, c_{\eps,R}^{out}.
$$
By Assumption~$(A2)_\delta$ of \S\ref{sec:delta}, we only consider transitions generated by one incoming mode. We assume that it is the plus mode and denote the two leading order contributions of $c^{+,in}_{\eps,R}$ by $c^{+,in}_{\eps,R,+}$ and $c^{+,in}_{\eps,R,-}$, where the subscript depends on the outgoing mode. Our aim is to relate
$$
\left\langle\op_\eps(c_{\eps,R}^{\pm,out})\check v^\eps,\check v^\eps\right\rangle \qquad\text{with}\qquad
\left\langle\op_\eps(c_{\eps,R,\pm}^{+,in})\check v^\eps,\check v^\eps\right\rangle.
$$
If the incoming trajectories are associated with the minus mode, the arguments are analogous.

 
\subsubsection{Transport without transitions} 
The component $c_{\eps,R,+}^{+,in}$ takes into account  the classical transport along the plus trajectories
and the probability of staying on the same mode.
By conservation of energy along classical trajectories, we have   
$$
\widetilde\Lambda^+_0(\widetilde \Phi^{-\beth}_{0,+}(s-\beth,z,\sigma,\zeta))= \widetilde\Lambda^+_0(s-\beth,z,\sigma,\zeta).
$$
By \eqref{eq:flow}, we therefore deduce
\begin{align*}
& c_{\eps,R,+}^{+,in}(s-\beth,z,\sigma,\zeta) = \left(1- \widetilde T _\eps(z,\zeta)\right) 
(c_{\eps,R}^{+,out}\circ\widetilde \Phi^{-\beth}_{0,+})(s-\beth,z,\sigma,\zeta)\\
& =   \left(1- \widetilde T _\eps(z,\zeta)\right) b^+(s,z,-|s|,\zeta) 
\;\widetilde\theta\!\left(\frac{\widetilde\Lambda^+_0(s-\beth,z,\sigma,\zeta)}{R\sqrt\eps}\right) +O(R\sqrt\eps),
\end{align*}
so that
\begin{align*}
& c_{\eps,R,+}^{+,in} (s,z,\sigma,\zeta)\\
&  = \left(1- \widetilde T _\eps(z,\zeta)\right)  b^+(s+\beth,z,-|s+\beth|,\zeta)\;
\widetilde \theta\!\left(\frac{\widetilde\Lambda^+_0(s,z,\sigma,\zeta)}{R\sqrt\eps}\right) +O(R\sqrt\eps).
 \end{align*}
 

\subsubsection{Transport and transitions with drift} 
The component  $c_{\eps,R,-}^{+,in}$ is more intricate, since it incorporates classical transport through both modes, application of the transfer coefficient and of the drift. Indeed, the branches of minus trajectories which reach the support of $c_{\eps,R}^{-,out}$ result from plus trajectories that have been drifted.  More precisely, we have 
$$
(\widetilde \Phi^{s}_{0,-}\circ\widetilde J_+\circ\widetilde\Phi^{-s-\beth}_{0,+})(s-\beth,z,\sigma,\zeta) = 
\left(s,z,|s|,\zeta\right) + O(R\sqrt\eps).
$$
for $(s,z,\sigma,\zeta)\in{\rm supp}(c_{\eps,R}^{-,out})$.  By conservation of energy and the drift relation~(\ref{driftunicity}) we have
$$
\widetilde \Lambda^-_0\!\left(\widetilde\Phi^s_{0,-}\circ \widetilde J_+ \circ \widetilde\Phi_{0,+}^{-s-\beth}(s-\beth,z,\sigma,\zeta)\right)=
\widetilde \Lambda^+_0(s-\beth,z,\sigma,\zeta).
$$
Applying the transfer coefficient, we obtain 
 \begin{eqnarray*}
c_{\eps,R,-}^{+,in}(s-\beth,z,\sigma,\zeta)& = &  \widetilde T_\eps(z,\zeta)\,
(c_{\eps,R}^{-,out}\circ\widetilde \Phi^{s}_{0,-}\circ\widetilde J_+\circ\widetilde\Phi^{-s-\beth}_{0,+})(s-\beth,z,\sigma,\zeta)\\
& = &  \widetilde T_\eps(z,\zeta)\, b^-(s,z,|s|,\zeta)\,  \widetilde\theta\!\left(\frac{\widetilde\Lambda^+_0(s-\beth,z,\sigma,\zeta)}{R\sqrt\eps}\right) +O(R\sqrt\eps),
\end{eqnarray*}
that is,
\begin{equation*}
c_{\eps,R,-}^{+,in} (s,z,\sigma,\zeta)  =  \widetilde T_\eps(z,\zeta)   \,b^-(s+\beth, z,|s+\beth|,\zeta)\,
 \widetilde\theta\!\left(\frac{\widetilde\Lambda^+_0(s,z,\sigma,\zeta)}{R\sqrt\eps}\right)+O(R\sqrt\eps).
 \end{equation*}
 
\subsection{The transitions}\label{sec:transitions}

We now prove that the semigroup $\widetilde{\mathcal L}_\eps$ effectively describes the dynamics of the Landau--Zener system \eqref{eq:systreduitbis} in the sense that  
\begin{equation}\label{claim}
\left\langle\op_\eps(c_{\eps,R}^{\pm,out})\check v^\eps,\check v^\eps\right\rangle =
\left\langle\op_\eps(c_{\eps,R,\pm}^{+,in})\check v^\eps,\check v^\eps\right\rangle + O(\eta_\eps),
\end{equation}
where the error term is obtained as
$$
O(\eta_\eps) = O(r^{-1}) + O(R^{-1}) + O(\eps R^2 \ln(r R)).
$$
By Theorem~\ref{prop:parametrization} the function $\check\gamma_0$ is a smooth function with bounded derivatives. Thus, we can 
follow the argumentation developed in~\cite[\S5.3]{FL08}, crucially using the operator-valued Landau--Zener formula of~\cite[Proposition~7]{FG03}.

\subsubsection{Using energy localization}
We work with the eigenprojectors
$$
\widetilde\Pi^\pm_0(s,z,\zeta) = \frac12\left( \Id \mp \frac{1}{\sqrt{s^2+\delta^2\check\gamma_0(z,\zeta)^2}}  
\begin{pmatrix}s & \delta\check\gamma_0(z,\zeta) \\ \delta\check\gamma_0(z,\zeta)& -s \end{pmatrix}\right)
$$
of the Landau--Zener system, numbered consistently with the eigenvalues in \eqref{eigenvaluesbis}. We observe that for $|s| \sim r R^2\sqrt\eps$
\begin{eqnarray}
\nonumber
\widetilde\Pi^+_0(s,z,\zeta)&=&
\begin{pmatrix}0&0\\ 0&1\end{pmatrix}+O(R^{-1})
\;\;{\rm in}\;\;\{s>0\},\\
\label{eq:Pitilde}
\widetilde\Pi^+_0(s,z,\zeta)&=&\begin{pmatrix}1&0\\0&0\end{pmatrix}+O(R^{-1})
\;\;{\rm in}\;\;\{s<0\},
\end{eqnarray}
and obtain similar asymptotics for $\widetilde\Pi^-_0$ since ${\rm Id}=\widetilde \Pi^+_0+\widetilde \Pi^-_0$. 
By Lemma~\ref{lem:loc}, we then have 
$$
\left\langle\op_\eps( c_{\eps,R}^{\pm,out}) \check v^\eps,\check v^\eps\right\rangle  =  
\left\langle\op_\eps(b^{\pm}\widetilde\Pi^\pm_0)\check v^\eps,\check v^\eps\right\rangle+O(R^{-2}) + O(R^{-1}\sqrt\eps), 
$$
and consequently, 
\begin{eqnarray*}
\left\langle\op_\eps( c_{\eps,R}^{+,out})\check v^\eps,\check v^\eps\right\rangle &=&\Big\langle\op_\eps(b^{+,out})\check v^\eps_1,\check v^\eps_1\Big\rangle+
O(R^{-1}),\\
\left\langle\op_\eps( c_{\eps,R}^{-,out})\check v^\eps,\check v^\eps\right\rangle &=&\Big\langle\op_\eps( b^{-,out}) \check v^\eps_2,\check v^\eps_2\Big\rangle+
O(R^{-1})
\end{eqnarray*}
with
$$
b^{\pm,out}(s,z,\sigma,\zeta)= b^\pm (s,z,\sigma,\zeta)
$$
supported in the outgoing region $\{-r R^2\sqrt\eps \le s\le -{r \over 2}  R^2\sqrt\eps\}$. Similarly, we have 
\begin{eqnarray*}
\left\langle\op_\eps(c_{\eps,R,+}^{+,in}) \check v^\eps,\check v^\eps\right\rangle &=& 
\left\langle\op_\eps\!\left((1-\widetilde T_\eps)  b^{+,in}\right)\check v^\eps_2,\check v^\eps_2\right\rangle+O(R^{-2}) + O(R^{-1}\sqrt\eps),\\
\left\langle\op_\eps(c_{\eps,R,-}^{+,in}) \check v^\eps,\check v^\eps\right\rangle &=&
\left\langle\op_\eps\!\left(\widetilde T_\eps \,b^{-,in}\right)\check v^\eps_2,\check v^\eps_2\right\rangle+O(R^{-2}) + O(R^{-1}\sqrt\eps)
\end{eqnarray*}
with 
$$
b^{\pm,in}(s,z,\sigma,\zeta)  =  b^\pm(s+\beth,z,\mp|s+\beth|,\zeta).
$$
supported in the incoming region $\{{r \over 2} R^2\sqrt\eps \le s\le r R^2\sqrt\eps\}$.


\subsubsection{The Landau--Zener formula}\label{sec:LZ_form}
Following \cite[Proposition~7]{FG03}, we rewrite the Landau--Zener system~\eqref{eq:systreduitbis} as
\begin{equation}\label{eq:LZ_G}
\frac{\eps}{i} \partial_s \check v^\eps = \begin{pmatrix}s & \sqrt\eps G\\ \sqrt\eps G^* & -s\end{pmatrix} \check v^\eps
\quad\text{with}\quad G = \frac{\delta}{\sqrt\eps}\op_\eps(\check\gamma_0(z,\zeta)).
\end{equation}
Then there exist two vector-valued functions 
$k^{\eps,\pm}\in L^2(\R^d,\C^2)$ such that  for any cut-off function $\chi\in{\mathcal C}_c^\infty([0,R^2])$ and for $\pm s>0$
\begin{eqnarray*}
\chi(GG^*) \check v^\eps_1(z,s) & = & 
\chi(GG^*) {\rm e}^{is^{2}/(2\eps)} 
\left|{\tfrac{s}{\sqrt \eps}}\right|^{i\frac{GG^*}{2}}
k^{\eps,\pm}_{1}(z)+O(R^2\sqrt\eps/s),\\
\chi(G^*G)\check v^\eps_2(z,s) & = & 
\chi(G^*G){\rm e}^{-is^{2}/(2\eps)}
\left|{\tfrac{s}{\sqrt \eps}}\right|^{-i\frac{G^*G}{2}}
k^{\eps,\pm}_{2}(z)+O(R^2\sqrt\eps/s),
\end{eqnarray*}
where $k^{\eps,+}=S_\eps k^{\eps,-}$ with
$$
S_\eps=
\begin{pmatrix}a(GG^*) & -
 \overline b(GG^*)G \\  b(G^*G)G^* & a(G^*G)\end{pmatrix}.
$$ 
The functions defining the scattering matrix satisfy
$$
a(\lambda)={\rm e}^{-\pi\lambda/2},\qquad a(\lambda^2) + \lambda |b(\lambda)|^2 = 1,\qquad \lambda\in\R.
$$
Moreover, the asymptotics of \cite[Lemma 8 \& 9]{FG03} provide for any smooth and compactly supported symbol $\phi\in{\mathcal C}^\infty_c(\R^{2d+2})$
\begin{equation}\label{eq:phase}
\left|{\tfrac{s}{\sqrt\eps}}\right|^{\pm
i\frac{G^*G}{2 }} 
\op_\eps(\phi)
\left|{\tfrac{s}{\sqrt\eps}}\right|^{\mp i\frac{G^*G}{
2}}=
\op_\eps(\phi)+O(R^2\eps |\ln(s/\sqrt\eps)|).
\end{equation}
These asymptotics yield an error of the order $1/r$, which motivates to choose 
\[
r=r(\eps) = \eps^{-1/32},
\] 
so that the error $r^3\eps ^{1/8}$ in equations~(\ref{eq:error}) and \eqref{eq:exp_LZ} is of the same size as $1/r$.
\begin{remark}\label{rem:1/8ter}
If $\widetilde \gamma(\cdot,\delta)$ had uniformly bounded derivatives, then, in view of Remarks~\ref{rem:1/8} and~\ref{rem:1/8bis}, we could choose $r =R$ and  obtain an overall remainder of the order $1/R =\eps^{1/8}$.
\end{remark}


\subsubsection{Applying the Landau--Zener formula}
Using the Landau--Zener formalism described in \S\ref{sec:LZ_form}, we now restrict ourselves to proving
$$
\Big\langle\op_\eps(b^{+,out})\check v^\eps_1,\check v^\eps_1\Big\rangle = \left\langle\op_\eps\!\left((1-\widetilde T_\eps)  b^{+,in}\right)\check v^\eps_2,\check v^\eps_2\right\rangle+O(\eta_\eps),
$$
since the proof for the second estimate in \eqref{claim} is analogous. 

\medskip
We first use the relation between $\check v^\eps_1$ and $k^{\eps,-}_1$ on the outgoing region $\{s<0\}$ for $s\le-rR^2\sqrt\eps$ to obtain
$$
\left\langle\op_\eps(b^{+,out})\check v^\eps_1,\check v^\eps_1\right\rangle 
= \left\langle\op_\eps(b^{+,out})k^{\eps,-}_{1},k^{\eps,-}_{1}\right\rangle + O(r^{-1}) + O(\eps R^2 \ln(rR)).
$$
Then, we perform the change of variable $s\mapsto s-\beth$, 
$$
\left\langle\op_\eps(b^{+,out})\check v^\eps_1,\check v^\eps_1\right\rangle  
= \left\langle\op_\eps(b^{+,in})k^{\eps,-}_{1},k^{\eps,-}_{1}\right\rangle + O(r^{-1}) + O(\eps R^2\ln(rR)).
$$
Since we have assumed that the incoming minus contributions are negligibly small, we neglect the scattering contribution from $\check v^\eps_1$ and consequently from $k^{\eps,+}_1$. We therefore deduce from the scattering relation $k^{\eps,-} = S_\eps^* k^{\eps,+}$ that 
$$
k^{\eps,-}_1 = - G\,\overline b(GG^*) k^{\eps,+}_2 + O(\eta_\eps)
$$ 
and
\begin{align*}
\Big\langle\op_\eps(b^{+,out})\check v^\eps_1,\check v^\eps_1\Big\rangle  
&= 
\left\langle\op_\eps\!\left((1-\widetilde T_\eps)b^{+,in}\right)k^{\eps,+}_{2},k^{\eps,+}_{2}\right\rangle +O(\eta_\eps)\\
&=
\left\langle\op_\eps\!\left((1-\widetilde T_\eps)b^{+,in}\right)\check v^\eps_{2},\check v^\eps_{2}\right\rangle +O(\eta_\eps).
\end{align*}

%

\section{Numerical simulations}\label{sec:numerics}
 
We consider four specific examples of avoided crossings in one space dimension. The corresponding eigenvalues surfaces are plotted in Figure \ref{fig:surfaces}, 
while the detailed definition of the four model systems is given in Tables~\ref{tab:param1} and \ref{tab:param2}. Three examples are taken from Tully's 1994 paper \cite{Tu1} on the surface hopping algorithm of the fewest switches: the simple, the dual and the extended crossing. The arctangent crossing is included as an example, which meets the assumptions of our main Theorem~\ref{theorem}. Its eigenvalues are defined by smooth functions, and the coefficients except for the minimum gap parameter $\delta_0$ are of order one with respect to the semiclassical parameter~$\eps$. 
In all simulations, the initial data
$$
\psi_0(q) = (\pi\eps)^{-1/4} \exp\!\left(-{\textstyle\frac{1}{2\eps}(q-q_0)^2 + \frac{i}{\eps}} p_0(q-q_0)\right) e^\pm(q)
$$
are multiples of a Gaussian wave packet with phase space centers $(q_0,p_0)\in\R^2$ and a real-valued eigenvector $e^\pm(q)$ of the matrix $V(q)$. Following \cite{Tu1}, the three Tully examples have the  semiclassical parameter 
$$
\eps=1/\sqrt{2000}\approx 0.02,
$$ 
which roughly corresponds to the mass of the hydrogen atom of $1836$ atomic units. For the arc\-tangent crossing we have chosen $\eps=10^{-3}$. The time interval $[0,t_{\rm fin}]$ of all simulations allows that the wave packet passes the crossing region once. 

\begin{figure}
\includegraphics[width=\textwidth]{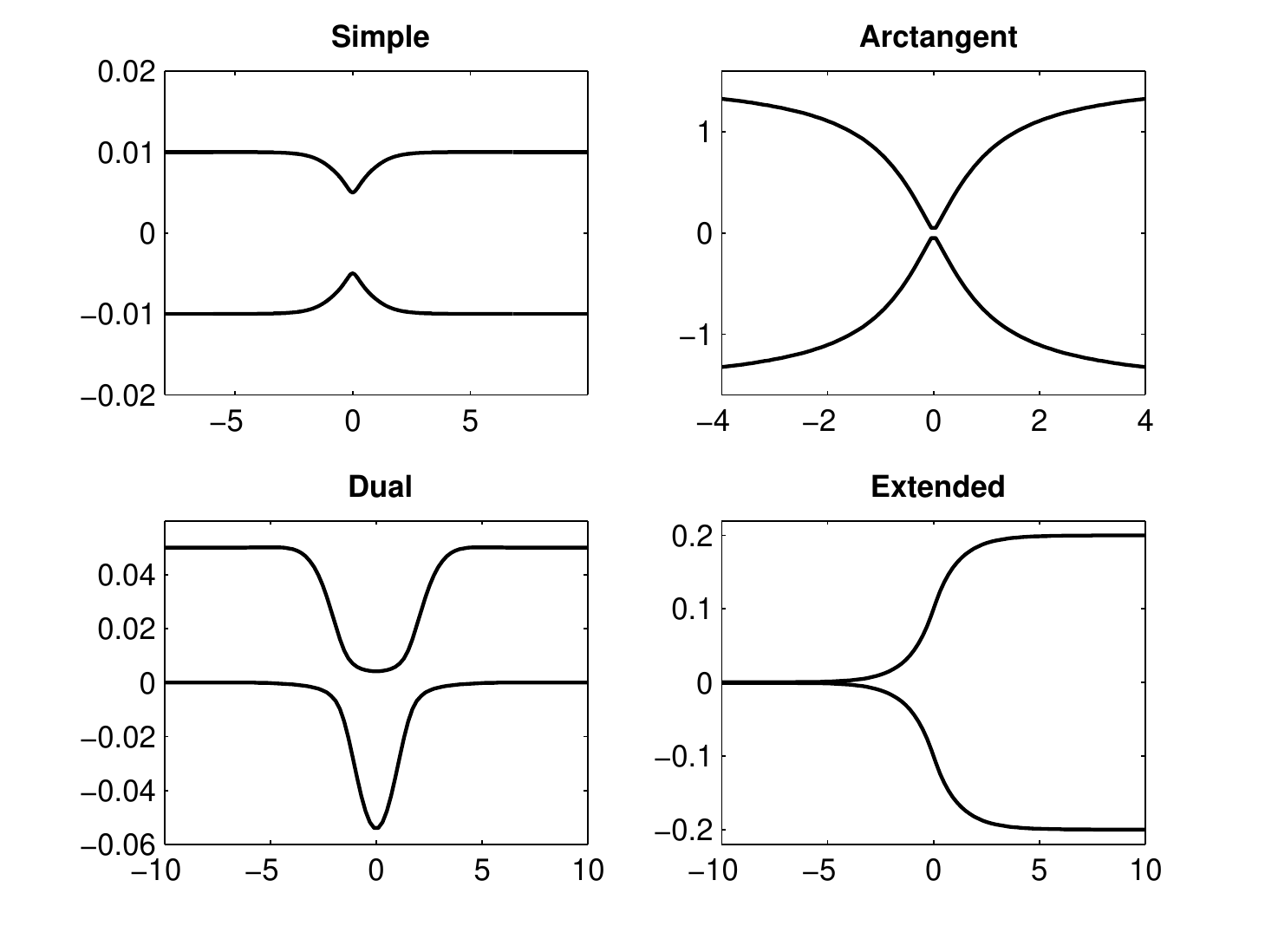}
\caption{The eigenvalue surfaces of the potentials considered for our numerical simulations. }
\label{fig:surfaces}
\end{figure}

\begin{table}[h]
\centering
\begin{tabular}{c|c|c}
& Simple & Arctangent \\\hline
$\eps$ & $2000^{-1/2}$ & $10^{-3}$ \\
$\delta_0$ & $0.005$ & $10^{-3/2}$ \\\hline
initial level & minus & plus \\
$(q_0,p_0)$ & $(-5,1)$ & $(-1,1)$ \\
$t_{\rm fin}$ & $10$ & $2$\\\hline
$\beta(q)$ & $0.01\,\sgn(q)\big(1-\e^{-1.6|q|}\big)$ & $\arctan(q)$ \\
$\gamma(q)$ & $\delta_0\e^{-q^2}$ & $\delta_0$ \\
$\alpha(q)$ & $0$ & $0$
\end{tabular}
\caption{\label{tab:param1}
Functions and parameters defining the simple and the arctangent crossings. In both cases, the eigenvalue surfaces 
have their minimal gap at $q=0$.}
\end{table}

\begin{table}[h]
\centering
\begin{tabular}{c|c|c}
& Dual & Extended\\\hline
$\eps$ & $2000^{-1/2}$ & $2000^{-1/2}$\\
$\delta_0$ & $0.015$ &  $6\cdot 10^{-4}$\\\hline
initial level & minus & plus\\
$(q_0,p_0)$ & $(-5,1)$ & $(0,-1)$\\
$t_{\rm fin}$ & $10$ & $10$\\\hline
$\beta(q)$ & $0.05 \e^{-0.28 q^2} - 0.025$ & $\delta_0$\\
$\gamma(q)$ & $\delta_0\e^{-0.06 q^2}$ & $0.1\sgn(q)(1-\e^{-0.9|q|})+0.1$\\
$\alpha(q)$ & $-\beta(q)$ & $0$\\
\end{tabular}
\caption{\label{tab:param2}
Functions and parameters defining the dual and the extended crossing. The dual crossing surfaces have their minimal gap at $q\approx\pm 1.6$. The surface gap of the extended crossing decreases monotonically as $q\to-\infty$.}
\end{table}


\subsection{A surface hopping algorithm}

Our analysis of the dynamics through an avoided eigenvalue crossing suggests 
a surface hopping algorithm formulated in terms of Wigner functions. Such an algorithm can either treat the effective 
Landau-Zener transitions by a deterministic branching scheme or by a probabilistic 
accept-reject mechanism. The probabilistic version, which will be discussed here, keeps the number of trajectories constant, 
which is to the best advantage for the memory requirements of the algorithm, see the simulations 
for a model of pyrazine and of the  ammonia cation \cite{LS,BDLT}.

\medskip

For notational simplicity, we restrict ourselves to the case, that the initial data are associated with the 
upper level. The same reasoning applies for initial data associated to the lower level with the obvious alterations. 
The probabilistic surface hopping algorithm works as follows: 

\subsubsection{Initial sampling}
Draw $N\in\N$ pseudorandom phase space samples 
$$
(q_1,p_1)^+,\ldots,(q_N,p_N)^+,
$$ 
which are independent and identically distributed according to $w^\eps_+(\psi_0)$. If the initial data are a Gaussian wave packet, then the Wigner function is given by the explicit formula  
$$
w^\eps_+(\psi_0)(q,p) = (\pi\eps)^{-1} \exp\!\left(-\tfrac{1}{\eps}|(q,p)-(q_0,p_0)|^2\right),
$$
that is the densitiy function of a bivariate normal distribution.

\subsubsection{Transport}
Propagate the sample points along the Hamiltonian curves of $\Phi^t_+$.

\subsubsection{Non-adiabatic transitions}
If a trajectory $t\mapsto(q_j^+(t),p_j^+(t))$ attains a local minimal gap of size smaller than $\sqrt\eps$ at time $t^*$ in the phase space point $(q^*,p^*)$, then draw a pseudrandom number $\zeta$ uniformly distributed in $[0,1]$. 
If $T_\eps(q^*,p^*)>\zeta$, then a hop occurs according to 
$$
(q^*,p^*,+) \longrightarrow (q^*,p_*+\omega(q^*,p^*),-).
$$
Otherwise, the trajectory continues on the upper level. 

\subsubsection{Computation of expectation values}
If at time $t$ there are $N^+$ trajectories on the upper and $N^-$ trajectories on the lower level, then the expectation values for observables 
$$
a(q,p)=a^+(q,p)\Pi^+(q)+a^-(q,p)\Pi^-(q)
$$ 
are approximated as 
\begin{equation}
\label{eq:exp}
(\op_\eps(a)\psi_t^\eps,\psi_t^\eps)\;\approx\; \frac{1}{N^+} \sum_{j=1}^{N^+}a^+(q_j^+(t),p_j^+(t)) + \frac{1}{N^-} \sum_{j=1}^{N^-}a^-(q_j^-(t),p_j^-(t)).
\end{equation}

\medskip
The overall accuracy of the approximation is then determined by the initial sampling, the discretization of the Hamiltonian flows, and the asymptotic accuracy of the surface hopping semigroup. If the flow discretization 
is a symplectic order $p$ method with time step $\Delta_t$, then the error of the approximation (\ref{eq:exp}) is
$$
O(1/\sqrt{N}) + O(\Delta_t^p) + O(\eps^\gamma),
$$
$\gamma=1/8$ according to Theorem~\ref{theorem}.
For the numerical experiments presented here, the initial sampling and the discretized classical flows have been accurate enough, such that the asymptotic $\eps$-dependent error of the algorithm is dominant. 

\medskip

Our reference values have been obtained from numerically converged solutions of the Schr\"odinger equation (\ref{eq:schro}), which have been computed by a Strang splitting scheme with Fourier collocation. 
All figures show the reference values as solid lines, while the little stars and circles mark values computed by the surface hopping algorithm. We note, that the space grid for the Fourier collocation must 
resolve the oscillations of the wave function, which is easily achieved in one space dimension. For higher dimensional problems, however, such discretizations suffer from the curse of dimensionality, 
which is not the case for our surface hopping algorithm.    


\subsection{The simple and the arctangent crossing}
\label{sec:simple}

\begin{figure}
\subfigure[Simple: $\eps\approx 0.02$, $\delta_0=0.005$]{\includegraphics[height=0.4\textheight]{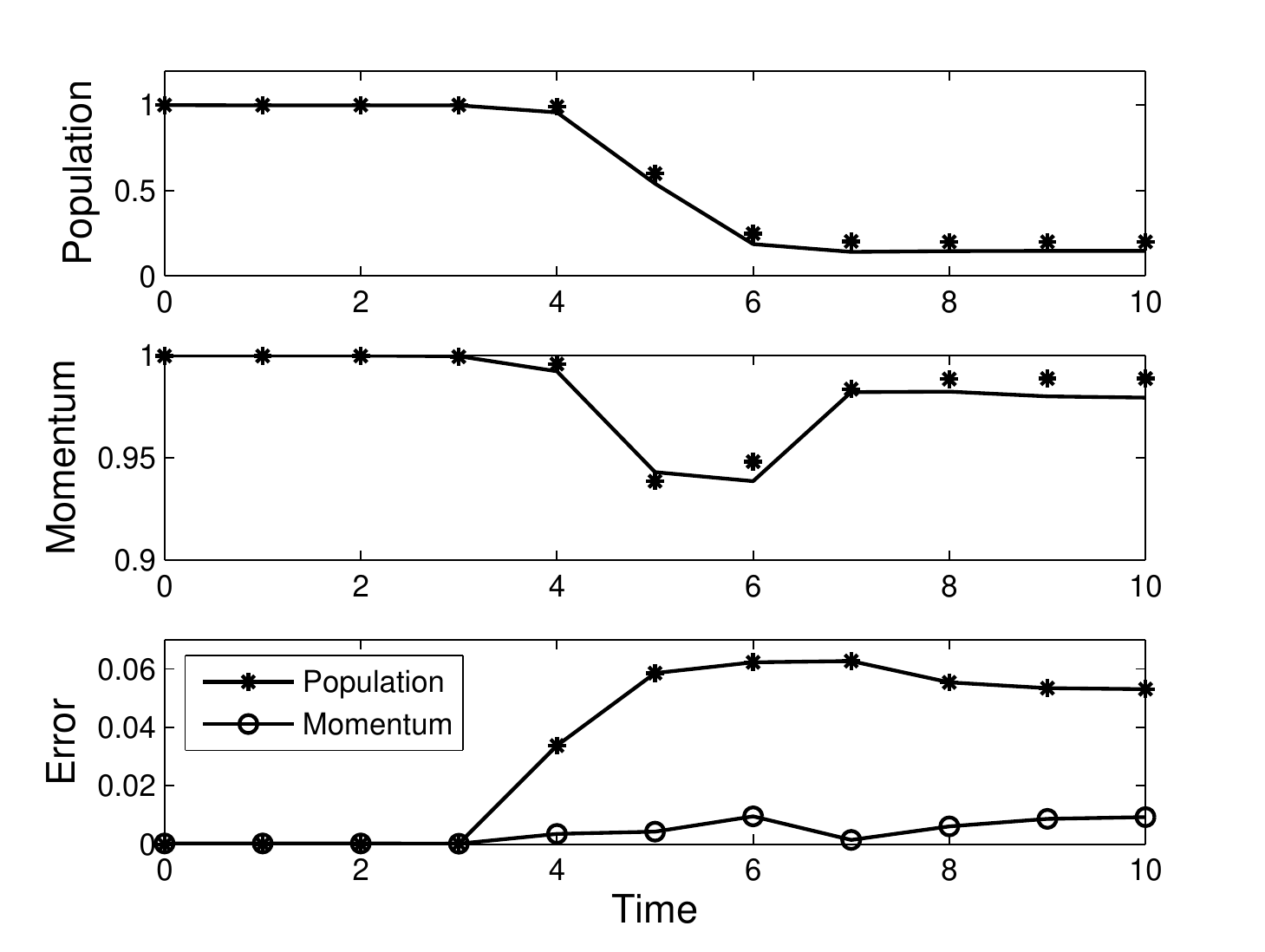}}
\subfigure[Arctangent: $\eps=10^{-3}$, $\delta_0 = \sqrt\eps$]{\includegraphics[height=0.4\textheight]{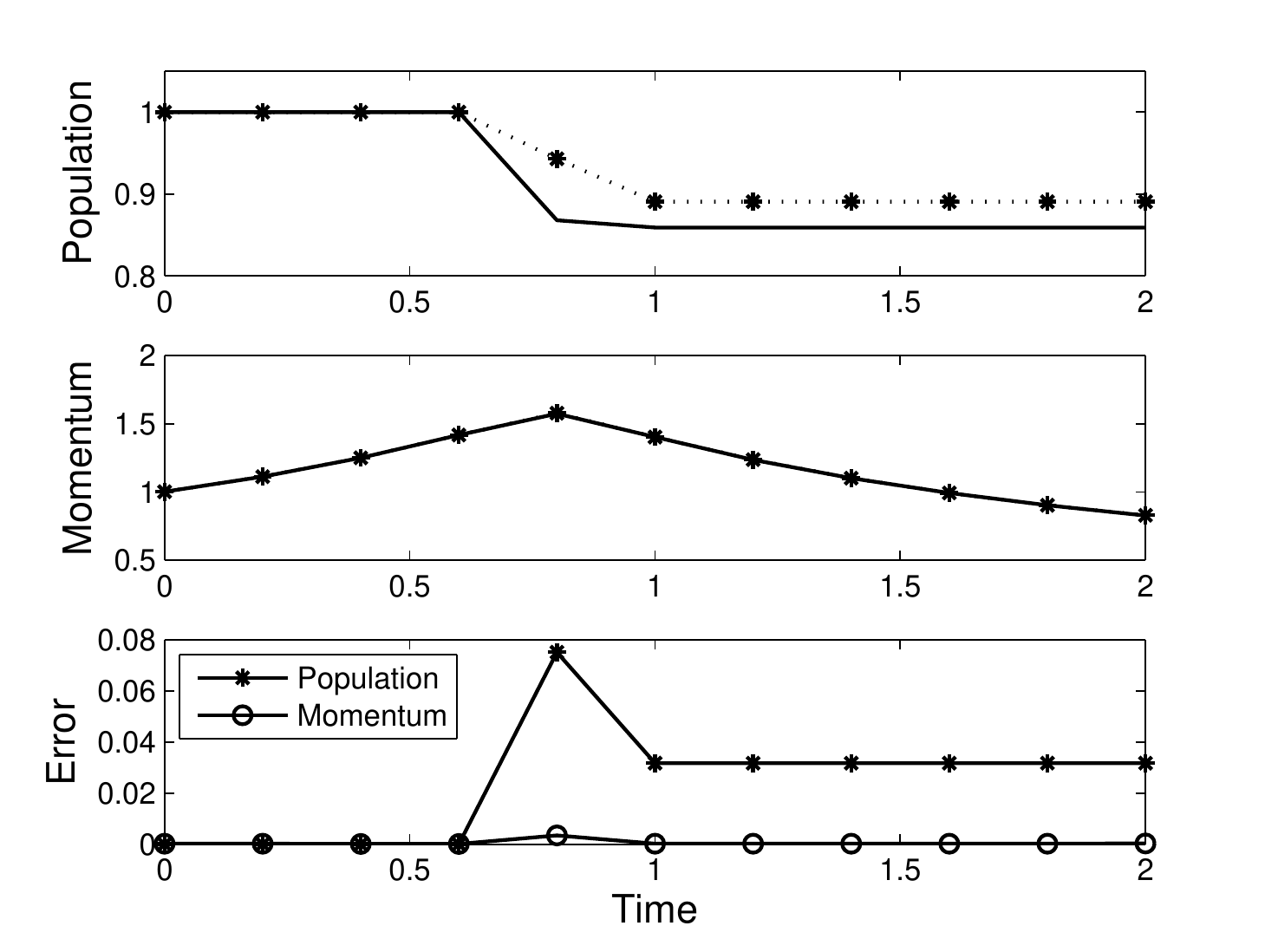}}
\caption{The simple and the arctangent crossing. The initial wave function is associated with the lower (a) and the upper level~(b). 
The results of the surface hopping algorithm, marked with stars and circles, are in good agreement with the reference.} 
\label{fig:simple}
\end{figure}

For both examples the surface hopping algorithm produces meaningful approximations of the dynamics even though the simple crossing has a non-smooth potential matrix and a surface gap just varying by a factor two.
Figure~\ref{fig:simple} shows population transfer away from the initial energy level and a corresponding change of the average momentum on the initial level. The final populations are approximated within an accuracy of 
$0.04$ to $0.05$. The error of the momentum expectation is even smaller.

\subsection{The dual and the extended crossing}
\label{sec:dual}

\begin{figure}
\subfigure[Dual: $\eps\approx 0.02$, $\delta_0=0.015$]{\includegraphics[height=0.4\textheight]{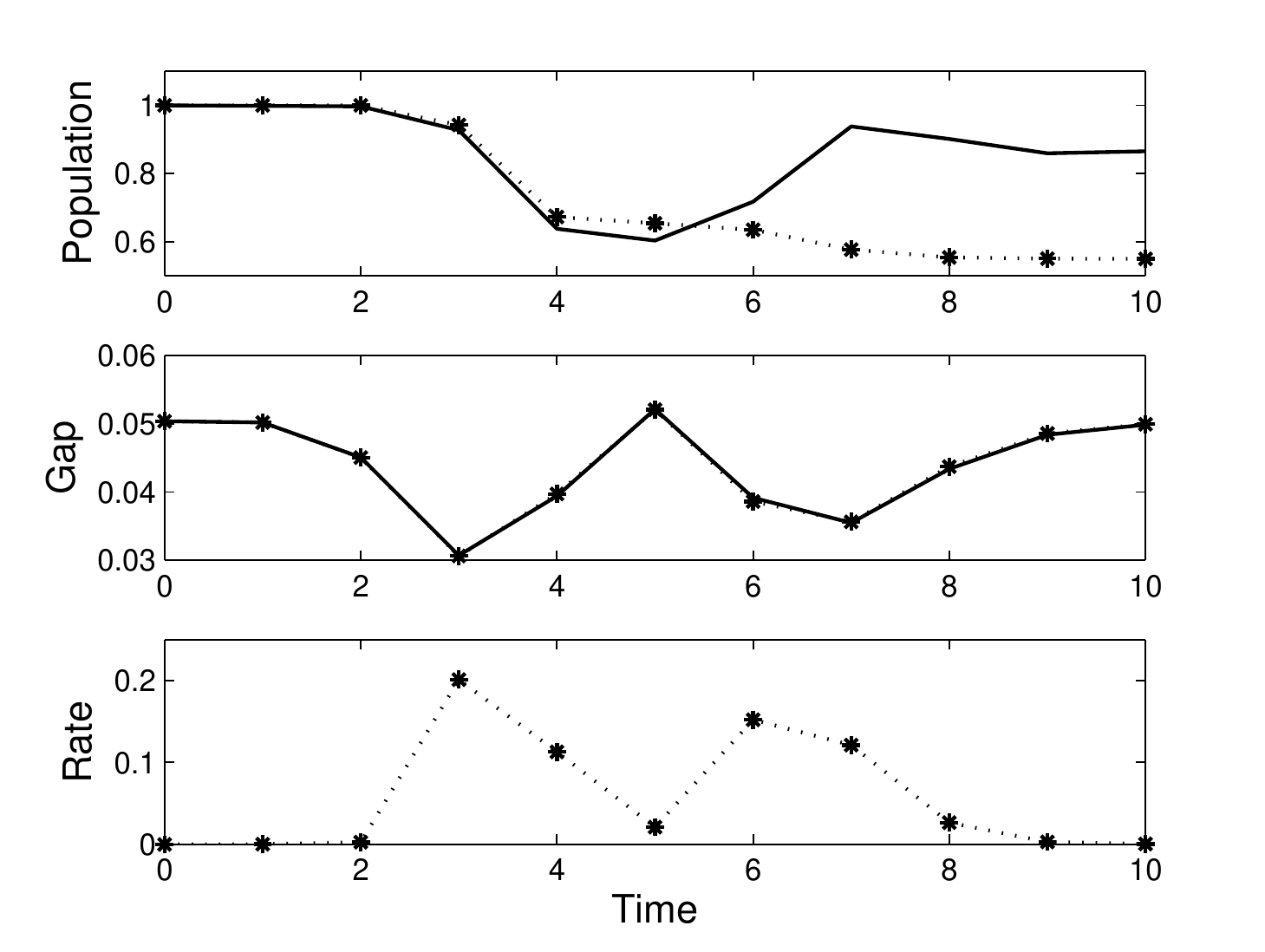}}
\subfigure[Extended: $\eps\approx 0.02$, $\delta_0=6\cdot10^{-4}$]{\includegraphics[height=0.4\textheight]{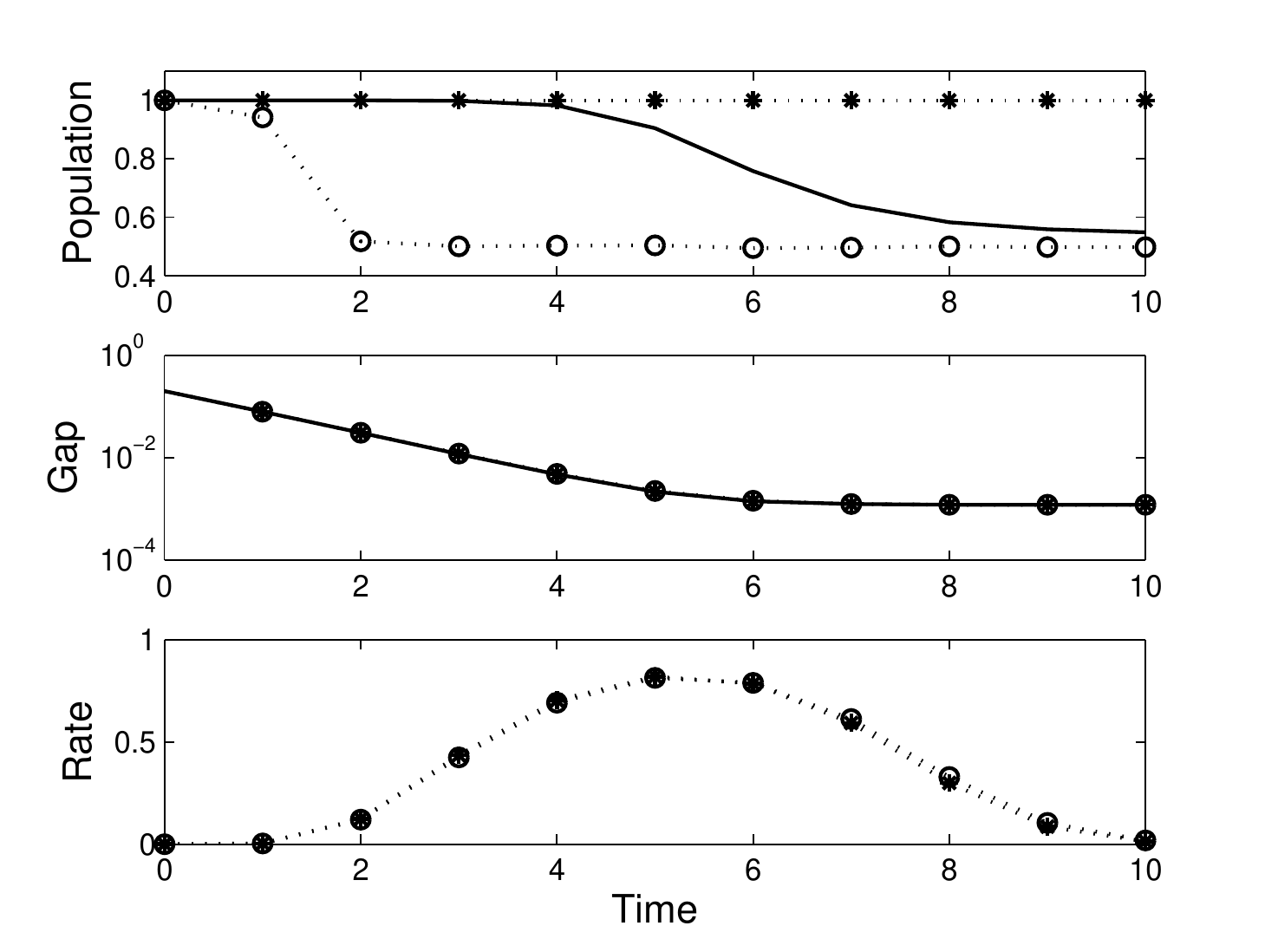}}
\caption{The dual and the extended crossing. The initial wave function is associated with the lower (a) and the upper eigenvector~(b). In both cases, the surface hopping algorithm expectedly fails to reproduce the dynamics.} 
\label{fig:dual}
\end{figure}

These two examples illustrate the limitations of our approximation. Since the dual crossing model has two subsequent crossings at $q\approx -1.6$ and $q\approx 1.6$, one has to expect interferences between the upper and the lower level for the passage of the second crossing, which cannot be resolved by the surface hopping semigroup. The numerical simulations confirm this expectation. 
Figure~\ref{fig:dual}(a) shows that the surface hopping algorithm correctly resolves the first passage, while the non-adiabatic transfer for the second passage is definitely wrong. Since the mean size of the eigenvalue gap and the mean Landau-Zener rate computed by the surface hopping approach qualitatively reflect the true dynamical situation also for the passage of the second crossing, the failure of the approximation must be due to unresolved interlevel interferences. 

\medskip

Also the extended crossing case is not covered by our analysis, since the eigenvalue surfaces do not have a minimal gap but a distance which monotonically decreases as $q\to-\infty$. We have therefore also considered a modified surface hopping algorithm, which allows non-adiabatic transitions at any time step of the numerical simulation if the trajectory's Landau-Zener coefficient is larger than a random number uniformly distributed within the interval $[0,1]$. The outcome of this simulation is marked with circles. The reference dynamics show a monotonously decreasing population of the upper level. The surface hopping algorithm does not initiate any non-adiabatic transfer, since for the trajectories there is no  minimal surface gap. Consequently, it wrongly produces a constant upper level population. The modified surface hopping with unconstrained Landau-Zener transitions starts the non-adiabatic transfer much too early but finally arrives at an upper level population, which is rather close to the true solution.

\appendix
\section{Proof of classical transport}\label{proof:prop}

Here we prove classical transport in the zone of large gap, where $g(q)>R\sqrt\eps$. More precisely, as long as the trajectories of $\Phi^t_\pm$ which reach the support of the observable $a(q,p)$ stay in the region of large gap, we have
\begin{align}\nonumber
&\int_{\R^{2d+1}}\chi(t)
\left(w_\pm^\eps(t)-
w_\pm^\eps(0)\circ\Phi_\pm^{-t}\right)\!(q,p)\,a(q,p)\,\d(q,p,t)\\ \label{eq:ctloc}
&= O(R^{-2}) +  O(R^{-5}\eps^{-1/2})+O(R^{-1}\sqrt\eps).
\end{align}
Note that we shall assume in what follows that $R\sqrt\eps\ll 1$.
The proof of equation~(\ref{eq:ctloc}) relies on symbolic calculus. We shortly recall the two main estimates we are going to use:  

\begin{proposition}\label{prop:symbcal}
For $a,b\in{\mathcal C}_c^\infty(\R^d,\C^{N\times N})$, $N\in\N$, we have 
\begin{equation}\label{est:Schur}
\| \op_\eps(a)\|_{{\mathcal L}(L^2(\R^d))}\leq C \, \sup _{|\beta|\leq d+1} \sup_{x \in\R^d}\int _{\R^d}\left| \partial_\xi^\beta  a(x,\xi)\right|\d \xi.
\end{equation}
for some constant $C>0$ independent of $a$ and $\eps$. Moreover,
\begin{equation}\label{product}
\op_\eps(a)\op_\eps(b)  = \op_\eps(ab)+{\eps\over 2i} \op_\eps(\{a,b\})+\eps^2 R_\eps,
\end{equation}
with $\{a,b\}=\nabla_\xi a \cdot \nabla _x b-\nabla _xa\cdot \nabla_\xi b$ and 
$$\| R_\eps\|_{{\mathcal L}(L^2(\R^d))}\leq C\, \,\sup_{|\alpha|=2, \,2\le |\beta| \le d+3}\,\, \sup_{x\in\R^d} 
\| \partial_x^\alpha \partial_\xi^\beta a(x,\cdot) \|_{L^1(\R^d)}  \| \partial_x^\alpha \partial_\xi^\beta b(x,\cdot) \|_{L^1(\R^d)}$$
for some constant $C>0$ independent of $a$, $b$ and $\eps$.
\end{proposition}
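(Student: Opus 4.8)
\medskip
\noindent\textbf{Proof plan for Proposition~\ref{prop:symbcal}.}
Both estimates are established by working directly with the Schwartz kernel of $\op_\eps(a)$,
$$
K_\eps(x,y)=(2\pi\eps)^{-d}\int_{\R^d}a\!\left(\tfrac{x+y}{2},\xi\right){\rm e}^{\frac{i}{\eps}\xi\cdot(x-y)}\,\d\xi .
$$
For~\eqref{est:Schur} I would integrate by parts in $\xi$, exploiting $\eps^{-1}(x_j-y_j){\rm e}^{\frac{i}{\eps}\xi\cdot(x-y)}=-i\,\partial_{\xi_j}{\rm e}^{\frac{i}{\eps}\xi\cdot(x-y)}$, to gain for every multi-index $\beta$
$$
\eps^{-|\beta|}(x-y)^\beta K_\eps(x,y)=(2\pi\eps)^{-d}\,i^{|\beta|}\int_{\R^d}(\partial_\xi^\beta a)\!\left(\tfrac{x+y}{2},\xi\right){\rm e}^{\frac{i}{\eps}\xi\cdot(x-y)}\,\d\xi .
$$
Taking absolute values, combining the cases $\beta=0$ and $|\beta|=d+1$ and using $|x-y|^{d+1}\le C\sum_{|\beta|=d+1}|(x-y)^\beta|$, one obtains $|K_\eps(x,y)|\le C\eps^{-d}(1+\eps^{-1}|x-y|)^{-(d+1)}M(a)$, where $M(a)=\sup_{|\beta|\le d+1}\sup_{z\in\R^d}\int_{\R^d}|\partial_\xi^\beta a(z,\xi)|\,\d\xi$ is the right-hand side of~\eqref{est:Schur}. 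Integrating this bound in $y$, resp.\ in $x$, and substituting $u=\eps^{-1}(x-y)$ cancels all powers of $\eps$, so that $\sup_x\int|K_\eps(x,y)|\,\d y$ and $\sup_y\int|K_\eps(x,y)|\,\d x$ are both $\le C\,M(a)$; Schur's test then yields~\eqref{est:Schur}.

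For the product formula~\eqref{product} I would compose the two oscillatory integrals and perform the standard exact linear change of integration variables, obtaining $\op_\eps(a)\op_\eps(b)=\op_\eps(a\#_{\eps}b)$ with
$$
(a\#_{\eps}b)(x,\xi)=\Bigl[\exp\!\bigl(\tfrac{\eps}{2i}L\bigr)\,a(x,\xi)\,b(y,\eta)\Bigr]_{y=x,\ \eta=\xi},
$$
where $L$ is the constant-coefficient bilinear first-order operator $\nabla_\xi\cdot\nabla_y-\nabla_x\cdot\nabla_\eta$, so that $L(a\otimes b)|_{y=x,\eta=\xi}=\{a,b\}$, and the exponential is defined through its power series, which converges absolutely on compactly supported symbols via the Fourier transform. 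Taylor's formula with integral remainder, applied to the exponential and truncated after the first-order term, gives $(a\#_{\eps}b)=ab+\tfrac{\eps}{2i}\{a,b\}+\eps^2 r_\eps$ with
$$
r_\eps=-\tfrac{1}{4}\int_0^1(1-\theta)\Bigl[\exp\!\bigl(\tfrac{\eps\theta}{2i}L\bigr)\,L^2\,a(x,\xi)\,b(y,\eta)\Bigr]_{y=x,\ \eta=\xi}\,\d\theta .
$$
Expanding $L^2$ gives a finite sum of products $\partial_x^\alpha\partial_\xi^\gamma a\cdot\partial_x^{\alpha'}\partial_\xi^{\gamma'}b$ with $|\alpha|+|\alpha'|=|\gamma|+|\gamma'|=2$; integrating the mixed terms by parts in $x$ and $\xi$ inside $\op_\eps(r_\eps)$ lets one move all four derivatives onto a single factor, at the cost of at most two further $x$- and $\xi$-derivatives on the other. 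Estimating $\op_\eps(r_\eps)$ by the same kernel analysis as in the first part—the oscillatory factor $\exp(\tfrac{\eps\theta}{2i}L)$ being absorbed into the kernel and the Schur argument costing $d+1$ extra $\xi$-derivatives—then produces the stated bound on $\|R_\eps\|$ in terms of $L^1_\xi$-seminorms of second-order $x$-derivatives and $\xi$-derivatives up to order $d+3$ of $a$ and of $b$ separately.

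The first estimate is entirely routine. The main obstacle is the bookkeeping in the remainder of the second: one must track how the four derivatives in $L^2$ distribute over $a$ and $b$ and rearrange—by integration by parts against the quantization—so that the final bound is a genuine product of a seminorm of $a$ and a seminorm of $b$, each carrying exactly two $x$-derivatives and a controlled number of $\xi$-derivatives, rather than a mixed quantity. Once the bilinear remainder symbol has been brought into this form, the norm estimate follows immediately from~\eqref{est:Schur}. Justifying the composition of oscillatory integrals and the change of variables is standard since the symbols are compactly supported.
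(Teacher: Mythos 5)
You should know at the outset that the paper contains no proof of Proposition~\ref{prop:symbcal}: it is quoted as a known fact, with a pointer to \S2 and \S4 of \cite{AFF}. So there is no internal argument to compare against; your route --- integration by parts in $\xi$ on the Weyl kernel, the bound $|K_\eps(x,y)|\le C\eps^{-d}(1+\eps^{-1}|x-y|)^{-(d+1)}M(a)$ and Schur's test for \eqref{est:Schur}, then the Moyal composition with a Taylor-integral remainder for \eqref{product} --- is the standard one and is in line with the cited reference. Your treatment of \eqref{est:Schur} is complete and correct.

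For the remainder estimate, however, there are genuine gaps. First, the exponential $\exp(\tfrac{\eps}{2i}L)$ is \emph{not} given by an absolutely convergent power series on compactly supported smooth symbols (the Moyal series diverges for non-analytic symbols); it must be defined as a unimodular Fourier multiplier, and the expansion you need is the scalar identity $e^z=1+z+z^2\int_0^1(1-\theta)e^{\theta z}\,\d\theta$ transported to the multiplier. Your formula for $r_\eps$ is then fine, but the justification you give for it is wrong. Second, and more seriously, to apply \eqref{est:Schur} to $r_\eps$ you must control $\sup_x\|\partial_\xi^\beta\bigl[e^{\frac{\eps\theta}{2i}L}L^2(a\otimes b)\bigr]_{y=x,\eta=\xi}(x,\cdot)\|_{L^1}$ uniformly in $\theta\in[0,1]$ and $\eps$, and the multiplier $e^{\frac{\eps\theta}{2i}L}$ does not act boundedly on these mixed $\sup$-$L^1$ seminorms in any obvious way; this is precisely where the work lies (for instance by writing the remainder, or directly $\op_\eps(r_\eps)$, as an explicit double phase-space oscillatory integral and integrating by parts to gain decay before running the Schur argument), and you only gesture at it (``absorbed into the kernel''). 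Finally, the rearrangement you propose --- moving all four derivatives of $L^2$ onto a single factor --- does not lead to the stated bound, in which \emph{each} of $a$ and $b$ carries exactly two $x$-derivatives and between $2$ and $d+3$ $\xi$-derivatives: the terms of $L^2(a\otimes b)$ on the diagonal are of the form $(\partial_\xi^2a)(\partial_x^2b)$, $(\partial_x\partial_\xi a)(\partial_x\partial_\xi b)$, $(\partial_x^2a)(\partial_\xi^2b)$, and the additional $\xi$-derivatives demanded by the Schur step must be tracked through the multiplier, not traded between the factors by integration by parts as you describe. In short: the first half is a proof, the second half is a plausible plan whose decisive estimate --- the one the proposition is really about --- is left unproved.
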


For a proof of Proposition~\ref{prop:symbcal} we refer to \S2 and \S4 of the article on semiclassical pseudodifferential operators in \cite{AFF}, for example. 
Let us now focus on~(\ref{eq:ctloc}) for the plus mode. The proof for the minus mode is analogous. The key argument is an estimate of the commutator 
$$L_\eps = {1\over i\eps} \left[ \op_\eps(c_{\eps,R}\Pi^+), -{\eps^2\over 2} \Delta_q+V(q)\right] ,$$
for  
$$
c_{\eps,R} (q,p)=a(q,p) \,\theta\!\left({g(q)\over R\sqrt\eps} \right)
$$
where $a\in{\mathcal C}_c^\infty(\R^{2d},\R)$ and $\theta\in{\mathcal C}^\infty(\R,\R)$ is a bounded function with bounded derivatives and support outside $0$. We observe that the symbol $c_{\eps,R} \Pi^+$ satisfies the estimate 
\begin{equation*}
\forall \alpha,\beta\in\N^d,\;\;\exists C_{\alpha,\beta}>0,\;\;\forall p,q\in\R^{d},\;\; \left|\partial^\beta_q\partial_p^\alpha (c_{\eps,R}\Pi^+)\right|\leq C_{\alpha,\beta} (R\sqrt\eps)^{-|\beta|},
\end{equation*}
where the constant $C_{\alpha,\beta}$ depends on derivatives of $a$, $\theta$ and $V$. 
We write the potential matrix $V=\lambda^+\Pi^++\lambda^-\Pi^-$ in terms of the eigenvalues and the eigenprojectors and use Proposition~\ref{prop:symbcal} to obtain
$$L_\eps = \op_\eps\left((p\cdot \nabla_q c_{\eps,R}-\nabla\lambda^+\cdot \nabla_p c_{\eps,R})\Pi^+\right) +\op_\eps (B)+O(R^{-2})+O(\sqrt\eps),$$
with 
\begin{eqnarray*}
B &= & c_{\eps,R} \,p\cdot \nabla \Pi^+ -{1\over 2} \left(\lambda^+ (\{c_{\eps,R}\Pi^+,\Pi^+\}-\{\Pi^+,c_{\eps,R}\Pi^+\})\right)\\
& & \qquad -{1\over 2}\left(\lambda^-(\{c_{\eps,R} \Pi^+,\Pi^-\}-\{\Pi^-,c_{\eps,R}\Pi^+\})\right)\\
& = & c_{\eps,R} \,p\cdot \nabla \Pi^+-{g\over 2} \left(\{c_{\eps,R}\Pi^+,\Pi^+\}-\{\Pi^+,c_{\eps,R}\Pi^+\}\right)\\
& = & c_{\eps,R} \,p\cdot \nabla \Pi^+-{g\over 2}\left(\Pi^+ \nabla_p c_{\eps,R} \cdot\nabla \Pi^+
+\nabla_p c_{\eps,R} \cdot \nabla \Pi^+ \Pi^+\right)\\
& = & c_{\eps,R}\, p\cdot \nabla \Pi^+-{g\over 2} \nabla_p c_{\eps,R} \cdot\nabla \Pi^+.
\end{eqnarray*}
where the last equation uses that $\nabla \Pi^+ =\Pi^+\nabla \Pi^++\nabla\Pi^+\Pi^+.$
As a consequence, $B$ is an off-diagonal symbol, in the sense that 
$$B=\Pi^+ B\Pi^- +\Pi^-B\Pi^+.$$
Moreover, we have 
$$
B=B_0+B_1\quad\text{with}\quad B_0= c_{\eps,R} \,p\cdot \nabla \Pi^+,\quad B_1 =  -\tfrac12 g\nabla_p c_{\eps,R} \cdot\nabla \Pi^+.
$$ 
Our next step for proving \eqref{eq:ctloc} is therefore to investigate time averages of off-diagonal observables. 

\medskip
\begin{lemma}~\label{lem:offdiag}
For any $\chi\in{\mathcal C}_c^\infty(\R,\R)$, $j\in\Z$ and any off-diagonal $B_j\in{\mathcal C}_c^\infty(\R^{2d},\C^{2\times 2})$ satisfying the bound
\begin{equation}\label{est:Bj}
\forall \alpha,\beta\in\N^d,\;\;\exists C_{\alpha,\beta}>0,\;\;\forall p,q\in\R^{d},\;\; \left|\partial^\beta_q\partial_p^\alpha B_j\right|\leq C_{\alpha,\beta} (R\sqrt\eps)^{-|\beta|-1+j},
\end{equation}
we have
\begin{equation*}
\int_\R \chi(t) \left\langle\op_\eps(B_j)\psi^\eps_t,\psi^\eps_t\right\rangle\d t =O(\eps ^{1+\frac12(j-3)} R^{-3+j}) \;\;+O(R^{-1}\sqrt\eps).
\end{equation*}
\end{lemma}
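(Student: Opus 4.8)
The plan is to exploit the Schr\"odinger equation $\op_\eps(P)\psi^\eps_t = 0$, written here in time-dependent form as $i\eps\partial_t\psi^\eps_t = H^\eps\psi^\eps_t$ with $H^\eps = -\tfrac{\eps^2}{2}\Delta_q + V(q)$, together with the fact that $B_j$ is off-diagonal, $B_j = \Pi^+ B_j \Pi^- + \Pi^- B_j \Pi^+$. Since $B_j$ anticommutes in the appropriate sense with the diagonal part of $V$, the commutator $[H^\eps, \op_\eps(B_j)]$ has a leading term of size $g \cdot B_j$, which is \emph{not} small (it is of order $R\sqrt\eps$ times the bound on $B_j$). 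The point is that this term oscillates at frequency $g/\eps$ along the dynamics, so one gains by integrating by parts in time. Concretely, I would introduce an auxiliary off-diagonal symbol $N_j$ solving the homological equation
\[
\tfrac{i}{\eps}\bigl(\lambda^+\Pi^+ + \lambda^-\Pi^-\bigr)\,N_j - N_j\,\tfrac{i}{\eps}\bigl(\lambda^+\Pi^+ + \lambda^-\Pi^-\bigr)
= B_j + \text{(diagonal correction)},
\]
i.e.\ $N_j = \tfrac{\eps}{ig}\bigl(\Pi^+ B_j \Pi^- - \Pi^- B_j \Pi^+\bigr)$ so that $\tfrac{1}{i\eps}[\,V, \op_\eps(N_j)\,] = \op_\eps(B_j) + \text{lower order}$. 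On the support of $\theta(g/(R\sqrt\eps))$ we have $g \gtrsim R\sqrt\eps$, so $N_j$ obeys the bound $|\partial_q^\beta\partial_p^\alpha N_j| \le C_{\alpha,\beta}\,\eps\,(R\sqrt\eps)^{-|\beta|-2+j} = C_{\alpha,\beta}(R\sqrt\eps)^{-|\beta|}\,(R\sqrt\eps)^{j-2}\eps^{?}$ — more carefully, dividing by $g \sim R\sqrt\eps$ costs one extra factor $(R\sqrt\eps)^{-1}$ per symbol and per derivative.

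The key steps, in order, are: (1) verify the symbol bound for $N_j$, tracking that each $q$-derivative of $1/g$ on the support of $\theta$ costs a factor $(R\sqrt\eps)^{-1}$; (2) use Proposition~\ref{prop:symbcal} — specifically the product formula \eqref{product} and the Schur-type bound \eqref{est:Schur} — to write $\tfrac{1}{i\eps}[H^\eps,\op_\eps(N_j)] = \op_\eps(B_j) + \op_\eps(r_\eps)$, where $r_\eps$ collects the $p\cdot\nabla_q N_j$ transport term, the diagonal correction, and the $\eps^2 R_\eps$ remainder; (3) bound the operator norm of $\op_\eps(r_\eps)$ using \eqref{est:Schur}, which by the symbol bounds gives $O(\eps^{-1}\cdot\eps\cdot(R\sqrt\eps)^{j-3}\eps^{1/2})$-type contributions from transport and the familiar $O(R^{-2})$, $O(R^{-1}\sqrt\eps)$ from the diagonal correction and the second-order remainder; (4) integrate by parts in time using $\tfrac{d}{dt}\langle\op_\eps(N_j)\psi^\eps_t,\psi^\eps_t\rangle = \tfrac{1}{i\eps}\langle[H^\eps,\op_\eps(N_j)]\psi^\eps_t,\psi^\eps_t\rangle$, so that
\[
\int_\R\chi(t)\langle\op_\eps(B_j)\psi^\eps_t,\psi^\eps_t\rangle\,\d t
= -\int_\R\chi'(t)\langle\op_\eps(N_j)\psi^\eps_t,\psi^\eps_t\rangle\,\d t - \int_\R\chi(t)\langle\op_\eps(r_\eps)\psi^\eps_t,\psi^\eps_t\rangle\,\d t;
\]
(5) bound the first integral by $\|\op_\eps(N_j)\|$ using \eqref{est:Schur} and the uniform $L^2$-bound on $\psi^\eps_t$, which produces the $O(\eps^{1+\frac12(j-3)}R^{-3+j})$ main term, and bound the second by step~(3).

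The main obstacle I anticipate is the careful bookkeeping of powers of $R$ and $\eps$ through the homological division by $g$: each spatial derivative that lands on $1/g$ (or on $\theta(g/(R\sqrt\eps))$) produces a factor $(R\sqrt\eps)^{-1}$, so one must check that $N_j$ satisfies \emph{exactly} the shifted bound \eqref{est:Bj} with $j$ replaced by $j-1$ and an overall extra factor $\eps/(R\sqrt\eps) = \sqrt\eps/R$, and then that feeding this into \eqref{est:Schur} (which costs $\sup_x\int|\partial_\xi^\beta\cdot|\,d\xi$ over the compact $p$-support of $a$, hence no $\eps$-loss from the integral) yields $\|\op_\eps(N_j)\| = O((\sqrt\eps/R)(R\sqrt\eps)^{j-2}) = O(\eps^{(j-2)/2 + 1/2}R^{j-3}) = O(\eps^{(j-1)/2}R^{j-3})$; reconciling this with the claimed $O(\eps^{1+(j-3)/2}R^{-3+j}) = O(\eps^{(j-1)/2}R^{j-3})$ confirms the exponents match. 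The secondary subtlety is making sure the transport and diagonal-correction remainders in $r_\eps$ are genuinely absorbed into $O(\eps^{1+\frac12(j-3)}R^{-3+j}) + O(R^{-1}\sqrt\eps)$ and do not generate a worse power; this is where the structure $B = B_0 + B_1$ with the two different scalings in $j$ is designed to make the final application (to $B_0$ with $j=0$ and $B_1$ with $j=1$) come out at the advertised order.
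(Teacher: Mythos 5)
Your proposal is correct and is essentially the paper's own argument: you introduce the same auxiliary off-diagonal symbol $g^{-1}\bigl(\Pi^-B_j\Pi^+-\Pi^+B_j\Pi^-\bigr)$ (up to the harmless factor $\eps/i$ and a sign), use the exact commutator identity with $V_0$ coming from $V_0\Pi^\pm=\pm\tfrac12 g\,\Pi^\pm$, quantize with Proposition~\ref{prop:symbcal}, and exploit that $\psi^\eps_t$ solves the equation, with the same power counting in $R\sqrt\eps$. The only cosmetic difference is that the paper commutes with the space-time operator $\op_\eps(P)$ and uses $\op_\eps(P)\psi^\eps_t=0$ directly, whereas you commute with $H^\eps$ and integrate by parts in $t$ against $\chi$; these are the same computation.
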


Note that in view of and $R\sqrt\eps\ll 1$, we have 
$$\eps ^{1+\frac12(j-3)} R^{-3+j}=\eps (R\sqrt\eps)^{j-2}\gg R^{-1}\sqrt\eps
$$ 
as soon as $j<2$.

\begin{remark}\label{rem:appendix}
The previous Lemma shows in particular that in the large gap region $\{g(q)>R\sqrt\eps\}$, the contribution of the off-diagonal part of the Wigner transform is negligible. Indeed, for all $a\in{\mathcal C}_c^\infty(\R^{2d+2},\C^{2\times 2})$, for all $\chi\in{\mathcal C}_c^\infty(\R)$ and for all  $\theta\in{\mathcal C}^\infty(\R,\R)$ bounded, with bounded derivatives and support outside $0$, we have
$$\int_\R \chi(t) \left\langle\op_\eps\left(\Pi^\pm(q)a(q,p)\Pi^\mp(q) \theta\!\left({g(q)\over R\sqrt\eps}\right)\right) \psi^\eps_t,\psi^\eps_t\right\rangle\d t =O(R^{-2}).$$
This relation comes from the fact that the off-diagonal symbol 
$$
\Pi^\pm(q)a(q,p)\Pi^\mp(q) \theta\!\left({g(q)\over R\sqrt\eps}\right)
$$ 
satisfies Lemma~\ref{lem:offdiag} with $j=1$.
\end{remark}

\begin{proof}
Since $B_j$ is off-diagonal, we can write 
\begin{align*}
B_j  &= [(\Pi^-B_j\Pi^+-\Pi^+B_j\Pi^-) g^{-1} , V_0]\\
&=[(\Pi^-B_j\Pi^+-\Pi^+B_j\Pi^-) g^{-1}, P],
\end{align*}
where we have used that $V_0\Pi^\pm = \pm \frac{1}{2}g\Pi^\pm$.
After quantization, we get 
\begin{align*}
\op_\eps(B_j) =&\left[ \op_\eps((\Pi^-B_j\Pi^+-\Pi^+B_j\Pi^-)  g^{-1}) ,\op_\eps(P)\right]\\
& +O(\eps ^{1+\frac12(j-3)} R^{-3+j}) + O(R^{-1}\sqrt\eps).
\end{align*}
Once applied to $\psi^\eps_t$ which satisfies the Schr\"odinger equation $\op_\eps(P)\psi^\eps_t = 0$, we obtain the announced relation. \end{proof}

Applying Lemma~\ref{lem:offdiag} to $B_1 = -\tfrac12g \nabla_p c_{\eps,R}\cdot\nabla\Pi^+$, we have 
$$
\int_\R \chi(t) \left\langle\op_\eps(B_1)\psi^\eps_t,\psi^\eps_t\right\rangle \d t = O(R^{-2}) + O(R^{-1}\sqrt\eps). 
$$
However, for $B_0 = c_{\eps,R} \,p\cdot\nabla\Pi^+$ the result of Lemma~\ref{lem:offdiag} only provides
$$
\int_\R \chi(t) \left\langle\op_\eps(B_0)\psi^\eps_t,\psi^\eps_t\right\rangle \d t =O(\eps ^{-{1\over 2}} R^{-3}),
$$
an estimate that we want to ameliorate in order to prove \eqref{eq:ctloc}. Therefore, we go one step further in the symbolic calculus and write
\begin{eqnarray*}
\op_\eps(B_0) &=&  \left[ \op_\eps((\Pi^-B_0\Pi^+-\Pi^+B_0\Pi^-)  g^{-1}) , \op_\eps(P)\right]\\
& & + {\eps\over i}\,\op_\eps\left(\left\{ (\Pi^-B_0\Pi^+-\Pi^+B_0\Pi^-) g^{-1} ,\tau +\tfrac12|p|^2 \right\}\right)\\
 & & +{\eps\over 2i} \op_\eps\left((\left\{(\Pi^-B_0\Pi^+-\Pi^+B_0\Pi^-) g^{-1} , V\right\}\right)\\
 &&-\frac{\eps}{2i}\op_\eps\left(\left\{V,(\Pi^-B_0\Pi^+-\Pi^+B_0\Pi^-)  g^{-1}\right\}\right) +O(R^{-4})+O(R^{-1}\eps^{3/2})\\
&=&
 \left[ \op_\eps((\Pi^-B_0\Pi^+-\Pi^+B_0\Pi^-)  g^{-1}) , \op_\eps(P)\right]\\
& &  -{\eps\over i}\, \op_\eps\left(p\cdot \nabla_q \left((\Pi^-B_0\Pi^+-\Pi^+B_0\Pi^-)  g^{-1}\right)\right)\\
& & +O(R^{-2})+O(R^{-1}\sqrt\eps).
\end{eqnarray*}
We set
$$B_{-2}= p\cdot \nabla_q \left((\Pi^-B_0\Pi^+-\Pi^+B_0\Pi^-)  g^{-1}\right),$$
and we observe that $B_{-2}$ satisfies~(\ref{est:Bj}) with $j=-2$. We claim moreover that $B_{-2}$ is off-diagonal, so that,  Lemma~\ref{lem:offdiag} 
gives
$$\int_\R \chi(t) \left\langle\op_\eps(B_{-2})\psi^\eps_t,\psi^\eps_t\right\rangle \d t =O(R^{-5}\eps^{-3/2})  + O(R^{-1}\sqrt\eps)$$
and
$$
\int_\R \chi(t) \left\langle\op_\eps(B_0)\psi^\eps_t,\psi^\eps_t\right\rangle \d t=O(R^{-5}\eps^{-1/2})+O(R^{-1}\sqrt\eps)
$$
It remains to prove the off-diagonal claim. A simple calculation shows 
$$
B_{-2}  =  - g^{-2}(p\cdot\nabla_q g) (\Pi^-B_0\Pi^+-\Pi^+B_0\Pi^-) 
 + g^{-1} p\cdot\nabla_q (\Pi^-B_0\Pi^+-\Pi^+B_0\Pi^-).
$$
Therefore, 
\begin{eqnarray*}
\Pi^\pm B_{-2}\Pi^\pm& =&
g^{-1} \Pi^\pm\big( p\cdot \nabla\Pi^-B_0\Pi^+ +\Pi^-B_0p\cdot \nabla\Pi^+\\
& & \qquad -p\cdot\nabla \Pi^+ B_0\Pi^- -\Pi^+B_0 \,p\cdot \nabla\Pi^-\bigr)\Pi^\pm\\
& =&
2 g^{-1} \Pi^\pm\left(- p\cdot \nabla\Pi^+B_0\Pi^++\Pi^+B_0 \,p\cdot \nabla\Pi^+\right)\Pi^\pm\\
&=& 2 g^{-1} \Pi^\pm\left[B_0\, ,\,p\cdot \nabla \Pi^+\right]\Pi^\pm\\
&=&0
\end{eqnarray*}
since $B_0=c_{\eps,R}\, p\cdot \nabla \Pi^+$.



\subsubsection*{Acknowledgements} 
We thank the anonymous referees for their help in improving the presentation of our results.


\end{document}